\def\calO{\mathcal{O}\xspace}
\newcommand{\NE}{Nash equilibrium\xspace}
\newcommand{\NEs}{Nash equilibria\xspace}
\newcommand{\lasts}{\mathit{last}}
\newtheorem{theorem}{Theorem}
\newtheorem{proposition}[theorem]{Proposition}
\newtheorem{lemma}[theorem]{Lemma}
\newtheorem{example}[theorem]{Example}
\newtheorem{corollary}[theorem]{Corollary}
\newcommand{\betredge}[1]{\betstep{#1}}
\newcommand{\betstep}[1]{\mbox{$\stackrel{#1}{\rightarrow}$}}
\newcommand{\guard}{\mathit{guard}}
\newcommand{\ochain}{open chain of cycles}
\newcommand{\cchain}{closed chain of cycles}
\def\ochains{open chains of cycles\xspace}
\def\cchains{closed chains of cycles\xspace}
\newcounter{symbol}
\newcommand{\indexsyma}[1]%
{\stepcounter{symbol}\index{zzz1 \thesymbol @\protect#1}}
\newcommand{\indexsymb}[1]%
{\stepcounter{symbol}\index{zzz2 \thesymbol @\protect#1}}
\newcommand{\indexsymc}[1]%
{\stepcounter{symbol}\index{zzz3 \thesymbol @\protect#1}}
\newcommand{\indexsymd}[1]%
{\stepcounter{symbol}\index{zzz4 \thesymbol @\protect#1}}
\newcommand{\indexsyme}[1]%
{\stepcounter{symbol}\index{zzz5 \thesymbol @\protect#1}}
\newcommand{\bfe}[1]{\begin{bfseries}\emph{#1}\end{bfseries}\index{#1}}
\newcommand{\myra}{\mbox{$\:\rightarrow\:$}}
\newcommand{\LL}{\mbox{$\ldots$}}
\newcommand{\C}[1]{\mbox{$\{{#1}\}$}}           %
\newcommand{\szkew}[1]{\relax \setbox0=\hbox{\kern -24pt $\displaystyle#1$\kern 0pt }%
\box0}
{\catcode`\@=11 \global\let\ifjusthvtest@=\iffalse}
\newcounter{oldmycaption}
\title{Efficient Local Search in Coordination Games on Graphs}
\author{Sunil Simon \\IIT Kanpur\\Kanpur, India\\
\And Dominik Wojtczak \\University of Liverpool\\
Liverpool, U.K. \\
}
\begin{document}
\maketitle

\begin{abstract}
We study strategic games on weighted directed graphs, where the payoff
of a player is defined as the sum of the weights on the edges from
players who chose the same strategy 
augmented by a fixed non-negative bonus for picking 
a given strategy.  These games capture the idea of
coordination in the absence of globally common strategies. Prior work
shows that the problem of determining the existence of a pure Nash
equilibrium for these games is NP-complete already for graphs with all
weights equal to one and no bonuses.  However, for several classes of graphs
(e.g. DAGs and cliques) pure Nash equilibria or even strong equilibria
always exist and can be found
by simply following a particular improvement or coalition-improvement
path, respectively.  In this paper we identify several natural classes
of graphs for which a finite improvement or coalition-improvement path
of polynomial length always exists, and, as a consequence, a \NE or
strong equilibrium in them can be found in polynomial time.  We also
argue that these results are optimal in the sense that in natural
generalisations of these classes of graphs, a pure \NE may not even exist.
\end{abstract}

\section{Introduction}
\label{sec:intro}

Nash equilibrium is an important solution concept in game theory which
has been widely used to reason about strategic interaction between
rational agents. Although Nash's theorem guarantees existence of a mixed
strategy Nash equilibrium for all finite games, pure strategy Nash
equilibria need not always exist.
In many scenarios of strategic interaction, apart from the question of
existence of pure Nash equilibria, an important concern is whether it
is possible to compute an equilibrium outcome 
and whether a game always converges to one.
The concept
of {\it improvement paths} is therefore fundamental in the study of
strategic games. Improvement paths are essentially maximal paths
constructed by starting at an arbitrary joint strategy and allowing
players to improve their choice in a unilateral manner. At each stage,
a single player who did not select a best response is allowed to
update his choice to a better strategy. By definition, every finite
improvement path terminates in a Nash equilibrium. In a seminal
paper, Monderer and Shapley \shortcite{MS96} studied the class of games in
which every improvement path is guaranteed to be finite, which was coined as
the {\it finite improvement property} (FIP). They
showed that games with the FIP are precisely those games to which we
can associate a generalised ordinal potential function.
Thus FIP not only guarantees the existence of pure Nash equilibria but also
ensures that it is possible to converge to an equilibrium outcome by
performing {\it local search}.
This makes FIP a desirable property to have in any game. An important
class of games that have the FIP is {\it congestion
  games} \cite{Ros73}.
However, the requirement that {\it every} improvement path is finite,
turns out to be a rather strong condition and there are very
restricted classes of games that have this property.

Young \shortcite{You93} proposed weakening the finite improvement property
to ensure the {\it existence} of a finite improvement path starting
from any initial joint strategy. Games for which this property hold
are called {\it weakly acyclic games}. Thus weakly acyclic games
capture the possibility of reaching pure Nash equilibria through
unilateral deviations of players irrespective of the starting
state. Milchtaich \shortcite{Mil96} showed that although congestion games
with player specific payoff functions do not have the FIP, they are
weakly acyclic. Weak acyclicity of a game also ensures that certain
modifications of the traditional no-regret algorithm yields almost
sure convergence to a pure Nash equilibrium \cite{MardenAS07}.

Although finite improvement path guarantees the existence of a Nash
equilibrium, it does not necessarily provide an efficient algorithm to
compute an equilibrium outcome. In many situations, improvement paths
could be exponentially long. In fact, Fabrikant {\em et al.~}\shortcite{FPT04}
showed that computing a pure Nash equilibrium in congestion games is PLS-complete.
Even for symmetric network congestion games, where it is
known that a pure Nash equilibrium can be efficiently computed
\cite{FPT04}, there are classes of instances where any best response
improvement path is exponentially long \cite{ARV06}. Thus identifying
classes of games that have finite improvement paths in which it is
possible to converge to a Nash equilibrium in a polynomial number of
steps is of obvious interest.
	
In game theory, coordination games are often used to model situations
in which players attain maximum payoff when agreeing on a common
strategy. In this paper, we study a simple class of coordination games
in which players try to coordinate within a certain neighbourhood.
The neighbourhood structure is specified by a finite directed graph
whose nodes correspond to the players. Each player chooses a colour
from a set of available colours. 
The payoff of a player is defined as the sum of the
weights on the edges from players who choose the same colour 
and a fixed bonus for picking that particular colour.
These games
constitute a natural class of strategic games, which capture the
following three key characteristics. {\it Join the crowd property:} the payoff of each player weakly
  increases when more players choose her strategy. {\it Asymmetric strategy sets:} players may have different strategy sets. {\it Local dependency:} the payoff of each player depends only
  on the choices made by the players in its neighbourhood. 

A similar model of coordination games on graphs was introduced in
\cite{ARSS14} where the authors considered undirected graphs. 
However, the transition from undirected to directed graphs drastically changes the
status of the games. For instance, in the case of undirected graphs,
coordination games have the FIP. While in the directed case, Nash
equilibria may not exist. Moreover, the problem of determining
the existence of Nash equilibria is NP-complete for coordination games
on directed graphs \cite{ASW15}. However, if the underlying graph is a
directed acyclic graph (DAG), a complete graph or a simple cycle, then
pure Nash equilibria always exist.
These proofs %
can easily be adapted to show that weighted DAGs and
weighted simple cycles have finite improvement paths.

\smallskip
\noindent{\bf Related work.} Although the class of potential games are
well studied and has been a topic of extensive research, weakly
acyclic games have received less attention. Engelberg and Schapira
\shortcite{ES11} showed that certain Internet routing games are weakly
acyclic. In a recent paper Kawald and Lenzner \shortcite{KL13} show that
certain classes of network creation games are weakly acyclic and
moreover that a specific scheduling of players can ensure that the
resulting improvement path converges to a Nash equilibrium in
$\mathcal{O}(n \log n)$ steps. %
Brokkelkamp and Vries \shortcite{BV12} improved Milchtaich's result
\shortcite{Mil96} on congestion games with player specific payoff functions
by showing that a specific scheduling of players is sufficient to
construct an improvement path that converges to a Nash equilibrium. 

Unlike in the case of exact potential games, there is no neat
structural characterisation of weakly acyclic games. Some attempts in
this direction has been made in the past.  Fabrikant {\em et
al.~}\shortcite{FabrikantJS10} proved that the existence of a unique (pure)
Nash equilibrium in every sub-game implies that the game is weakly
acyclic. A comprehensive classification of weakly acyclic games in
terms of schedulers is done in \cite{AS12}. Finally, Milchtaich
\shortcite{Mil13} showed that every finite extensive-form game with perfect
information is weakly acyclic.

The model of coordination games are related to various well-studied
classes of games. Coordination games on graphs are {\it polymatrix
  games} \cite{Jan68}. In these games, the payoff for each player is
the sum of the payoffs from the individual two player games he plays
with every other player separately. Hoefer \shortcite{Hoefer2007} studied
clustering games that are also polymatrix games based on undirected
graphs. However, in this setup each player has the same set of
strategies and it can be shown that these games have the FIP. 
A model that does not assume all strategies to be the same, 
but is still based on undirected graphs,
was shown to have the FIP in \cite{RS15}.
When the graph is
undirected and complete, coordination games on graphs are special
cases of the monotone increasing congestion games that were studied in
\cite{RT06}.  
\noindent  
\begin{table}[t]
\newcolumntype{x}[1]{>{\centering\let\newline\\\arraybackslash\hspace{0pt}}p{#1}}
\setlength{\tabcolsep}{0.01cm}
\renewcommand{\arraystretch}{1.2}
\rowcolors{2}{white}{gray!20}
\resizebox{1.01\columnwidth}{!}{
\newlength{\myl}
\settowidth{\myl}{weighted simple cycles+2 b}
\begin{tabular}{x{\myl}cc}
\noalign{\global\belowrulesep=0.0ex}
\toprule
\noalign{\global\aboverulesep=0.0ex}
Graph Class & improvement path & c-improvement path \\ 
\midrule
weighted DAGs & $\mathcal{O}(n)$ \cite{ASW15} & $\mathcal{O}(n)$ \cite{ASW15} \\
weighted simple cycles with 2 bonuses & $\mathcal{O}(n)$ [Thm. \ref{thm:cycle-2bonuses}] & $\mathcal{O}(n)$ [Thm. \ref{thm:se-cycle}] \\
open chains of cycles  & $\mathcal{O}(nm^2)$ [Thm. \ref{thm:necklace-noweight-nobonus}] & $\mathcal{O}(nm^3)$ [Cor. \ref{cor:se-open-chain}] \\
closed chains of cycles  & $\mathcal{O}(nm^2)$ [Thm. \ref{thm:necklace-cycle-noweights}] & $\mathcal{O}(nm^3)$ [Thm. \ref{thm:se-open-chain}] \\
weighted open chains of cycles  & $\mathcal{O}(nm^3)$ [Thm. \ref{thm:necklace-nobonus}] & ?? \\
weighted closed chains of cycles  & \multicolumn{2}{c}{\NE may not exist [Ex. \ref{ex:necklace-weight}]} \\
partition-cycles  & $\mathcal{O}(n(n-k))$ [Thm. \ref{thm:partcycle-noweight-nobonus}] & ?? \\
partition-cycles+bonuses & $\mathcal{O}(kn(n-k))$ [Thm. \ref{thm:partcycle-noweight-bonus}] & ?? \\
weighted partition-cycles  & \multicolumn{2}{>{\columncolor{gray!20}}c}{\NE may not exist [Ex. \ref{ex:partcycle-noNE}]}\\
\noalign{\global\aboverulesep=0.0ex}
\bottomrule
\end{tabular}
}
\vskip0.5em
\caption{
\label{fig:summary}
An upper bound on the length of the shortest improvement and c-improvement path
for a given class of graphs. 
All edges are unweighted and there are no bonuses 
unless the name of the class says otherwise.
For simple cycles
and chains of cycles we assume that each cycle has $n$ nodes and the
number of cycles in the chain is $m$.  For partition-cycles, $n$ is
the total number of nodes and $1 \leq k < n$ is the number of nodes in
the top part of the cycle (set $V_T$).
}
\end{table}

\smallskip
\noindent {\bf Our contributions.}  In this paper, we identify some
natural classes of polymatrix games based on the coordination game
model, which even though do not have the FIP (cf. Example 4 in \cite{ASW15}), are weakly acyclic. We
also show that for these games a finite improvement path of polynomial
length can be constructed in a uniform manner. Thus not only do these
games have pure Nash equilibria, but they can also be efficiently
computed by local search.

We start by analysing coordination games on simple cycles. Even in
this simple setting, improvement paths of infinite length may
exist. However, we show that there always exists a finite improvement
path of polynomial length. 
We then extend the setting of simple cycles in two directions. First
we consider chains of simple cycles where we show that polynomial
length improvement paths exist. We then consider simple cycles with
cross-edges and show the existence of polynomial length improvement
paths. We also demonstrate that these results are optimal in the sense
that most natural generalisations of these structures may result in
games in which a Nash equilibrium may not even exists. Most of our
constructions involve a common proof technique: we identify a specific
scheduling of players using which, starting at an arbitrary initial
joint strategy, we can reach a joint strategy in which at most two
players are not playing their best response.
We argue that such a joint strategy can then be updated to converge to
a Nash equilibrium. We also identify a structural condition on
coalitional deviation once a Nash equilibrium is attained. This property
is then used to show the existence of a finite ``coalitional''
improvement path which terminates in a strong equilibrium.
Our results also imply an almost sure convergence,
although not necessarily in a polynomial number of steps,
to a Nash equilibrium when the order of deviations is random, but ``fair''.
Fairness requires that
for any deviation,
there 
is a fixed nonzero lower bound on 
the probability of it taking place
from any state of the game where it can be taken.
Note that this implies that the same holds for any finite sequence of deviations.
A Nash equilibrium is reached almost surely with such a random order of deviations, 
because when starting at any state we either 
follow a finite improvement path to a Nash equilibrium 
with a nonzero probability or 
that path stops in some new state from 
where we can follow another finite improvement path
with a nonzero probability.
As this repeats over and over again, 
almost surely one such finite improvement path will succeed.

Table \ref{fig:summary} summarises most of our results. 

\smallskip
\noindent {\bf Potential applications.} 
Coordination games constitute an abstract game model which is well
studied in game theory and has been shown to model many practical
scenarios. The game model that we consider in this paper is an
extension of coordination games to the network setting (in which the
neighbourhood relation is specified using a directed graph) where
common strategies are not guaranteed to exist and payoffs are not
necessarily symmetric.

The graph classes that we consider are typical for network topologies,
e.g. token ring local area networks are organised in directed simple
cycles, open chains topology is supported by recommendation G.8032v2 on
Ethernet ring protection switching, and closed chains are used in
multi-ring
protocols.
The basic technique that we use to show convergence to Nash equilibria
is based on finite improvement paths of polynomial length. The concept
of an improvement path is fundamental in the study of games and it has
been used to explain and analyse various real world applications. One
such example is the border gateway protocol (BGP) which establishes
routes between competing networks on the Internet. Over the years,
there has been extensive research, especially in network
communications literature, on how stable routing states are achieved
and maintained in BGP in spite of strategic concerns. Fabrikant and
Papadimitriou \cite{FP08} and independently, Levin and
others \cite{LSZ08} observed that BGP can be viewed as best-response
dynamics in a class of routing games and finite improvement paths that
terminates in a pure Nash equilibria essentially translates to stable
routing states. Following this observation, Engelberg and
Schapira \cite{ES11} presents a game theoretic analysis of routing on
the Internet where they show weak acyclicity of various routing games.

The coalition formation property inherent to coordination games on
graphs also make the game model relevant to cluster analysis. In
cluster analysis, the task is to organise a set of objects into groups
according to some similarity measure. Here, the strategies can be
viewed as possible cluster names and a pure NE naturally corresponds
to a `satisfactory' clustering of the underlying graph.  Clustering
from a game theoretic perspective was for instance applied to car and
pedestrian detection in images, and face recognition
in \cite{PB14}. This approach was shown to perform very well against
the state of the art.

\smallskip

\noindent {\bf Structure of the paper.}  In
Section~\ref{sec:colouring} we introduce the game model and make an
important observation on the structure of coalition deviation from a
\NE in coordination games on directed graphs. In
Section~\ref{sec:simple-cycle} we analyse games whose underlying
graphs are simple cycles. In Section~\ref{sec:necklace} we study
chains of cycles and in Section \ref{sec:partion-cycles} we consider
simple cycles with cross edges. 

\section{Preliminaries}
\label{sec:prelim}

A \bfe{strategic game} $\mathcal{G}=(S_1, \ldots, S_n,$ $p_1, \ldots,
p_n)$ with $n > 1$ players, consists of a non-empty set $S_i$ of
\bfe{strategies} and a \bfe{payoff function} $p_i : S_1 \times \cdots
\times S_n \myra \mathbb{R}$, for each player $i$.  We denote $S_1
\times \cdots \times S_n$ by $S$, call each element $s \in S$ a
\bfe{joint strategy} and abbreviate the sequence $(s_{j})_{j \neq i}$
to $s_{-i}$. Occasionally we write $(s_i, s_{-i})$ instead of $s$.  We
call a strategy $s_i$ of player $i$ a \bfe{best response} to a joint
strategy $s_{-i}$ of his opponents if for all $ s'_i \in S_i$,
$p_i(s_i, s_{-i}) \geq p_i(s'_i, s_{-i})$.

We call a non-empty subset $K := \{k_1, \ldots, k_m\}$ of the set of
players $N:= \{1, \ldots, n\}$ a \bfe{coalition}. Given a joint
strategy $s$ we abbreviate the sequence $(s_{k_1}, \ldots, s_{k_m})$
of strategies to $s_K$ and $S_{k_1} \times \cdots \times S_{k_m}$ to
$S_{K}$. We occasionally write $(s_K, s_{-K})$ instead of $s$. If
there is a strategy $x$ such that $s_i = x$ for all players $i \in K$,
we also write $(x_K, s_{-K})$ instead of $s$.

Given two joint strategies $s'$ and $s$ and a coalition $K$, we say
that $s'$ is a \bfe{deviation of the players in $K$} from $s$ if $K =
\{i \in N \mid s_i \neq s_i'\}$.  We denote this by $s \betredge{K}
s'$. If in addition $p_i(s') > p_i(s)$ holds for all $i \in K$, we say
that the deviation $s'$ from $s$ is \bfe{profitable}. Further, we say
that a coalition $K$ \bfe{can profitably deviate from $s$} if there
exists a profitable deviation of the players in $K$ from $s$.  Next,
we call a joint strategy $s$ a \bfe{k-equilibrium}, where $k \in \{1,
\dots, n\}$, if no coalition of at most $k$ players can profitably
deviate from $s$.  Using this definition, a \bfe{Nash equilibrium} is
a 1-equilibrium and a \bfe{strong equilibrium}, see \cite{Aumann59},
is an $n$-equilibrium.

A \bfe{coalitional improvement path}, in short a \bfe{c-improve\-ment
  path}, is a maximal sequence $\rho=(s^1, s^2, \dots)$ of joint
strategies such that for every $k > 1$ there is a coalition $K$ such
that $s^k$ is a profitable deviation of the players in $K$ from
$s^{k-1}$. If $\rho$ is finite then by $\lasts(\rho)$ we denote the last
element of the sequence.  Clearly, if a c-improvement path is finite,
its last element is a strong equilibrium. 
We say that $\mathcal{G}$ is \bfe{c-weakly acyclic} if for every joint
strategy there exists a finite c-improvement path that starts at
it. Note that games that are c-weakly acyclic have a strong
equilibrium.
We call a c-improvement path an \bfe{improvement path} if each
deviating coalition consists of one player. The notion of a game
being \bfe{weakly acyclic} \cite{You93,Mil96}, is then defined by
referring to improvement paths instead of c-improvement paths.

\section{Coordination games on directed graphs}
\label{sec:colouring}

We now define the class of games we are interested in.  Fix a
finite set of colours $M$. A weighted directed graph $(G,w)$ is a
structure where $G=(V,E)$ is a graph without self loops over the
vertices $V=\{1,\ldots,n\}$ and $w$ is a function that associates with
each edge $e \in E$, a non-negative weight $w_e$. We say that a node
$j$ is a \bfe{neighbour} of the node $i$ if there is an edge $j \to i$
in $G$.  Let $N_i$ denote the set of all neighbours of node $i$ in the
graph $G$.  A \bfe{colour assignment} is a function $C: V \to 2^M$ which
assigns to each node of $G$ a finite non-empty set of colours. We also
introduce the concept of a \bfe{bonus}, which is a function $\beta$
that to each node $i$ and a colour $c \in M$ assigns a natural number
$\beta(i,c)$.  Note that bonuses can be modelled by incoming edges
from fixed colour source nodes, i.e. nodes with no incoming edges and
only one colour available to them.
When stating our results, bonuses are assumed to be not present, 
unless we explicitly state that they are allowed. 
Bonuses are extensively used in our proofs because 
a coordination game restricted to a given subgraph 
can be viewed as a coordination game with bonuses 
induced by the remaining nodes of the graph.

Given a weighted graph $(G,w)$, a colour assignment $C$ and a bonus
function $\beta$ a strategic game $\mathcal{G}(G,w,C,\beta)$ is
defined as follows: the players are the nodes,
\begin{itemize}
\item the set of strategies of player (node) $i$ is the set of colours
  $C(i)$; we occasionally refer to the strategies as \bfe{colours}.

\item the payoff function $p_i(s) = \sum_{j \in N_i,\, s_i = s_j} w_{j
  \to i} + \beta(i,s_i)$.
\end{itemize}

So each node simultaneously chooses a colour and the payoff to the
node is the sum of the weights of the edges from its neighbours that
chose its colour augmented by the bonus to the node from choosing the
colour.  We call these games \bfe{coordination games on directed
  graphs}, from now on just \bfe{coordination games}.  
When the
weights of all the edges are 1, we obtain a coordination game whose
underlying graph is unweighted. In this case, we simply drop the
function $w$ from the description of the game. Similarly if all the
bonuses are 0 then we obtain a coordination game without
bonuses. Likewise, to denote this game we omit the function
$\beta$. In a coordination game without bonuses where the underlying
graph is unweighted, each payoff function is defined by $p_i(s) :=
|\{j \in N_i \mid s_i = s_j\}|$.

\begin{figure}[htbp]
\begin{minipage}{0.5\textwidth}
\normalsize
\begin{example}{(\cite{ASW15})}
\label{exa:payoff}
Consider the directed graph and the colour assignment
depicted in Figure~\ref{fig:example-graph}.
Take the joint strategy $s$ that consists of the underlined strategies.
Then the payoffs are as follows:
\begin{itemize}
\item 0 for the nodes 1, 7, 8 and 9,
\item 1 for the nodes 2, 4, 5, 6, 
\item 2 for the node 3.
\end{itemize}
Note that the above joint strategy is not a Nash equilibrium. For example,
node 1 can profitably deviate to colour $a$.
\qed
\end{example}
\end{minipage} \hfill
\begin{minipage}{0.5\textwidth}
\centering
\tikzstyle{agent}=[circle,draw=black!80,thick, minimum size=2em,scale=0.8]
\begin{tikzpicture}[auto,>=latex',shorten >=1pt,on grid]
\newdimen\R
\R=1.3cm
\newcommand{\llab}[1]{{\small $\{#1\}$}}
\draw (90: \R) node[agent,label=right:{\llab{a,\underline{b}}}] (1) {1};
\draw (90-120: \R) node[agent,label=right:{\llab{a,\underline{c}}}] (2) {2};
\draw (90-240: \R) node[agent,label=left:{\llab{b,\underline{c}}}] (3) {3};
\draw (30: \R) node[agent,label=right:\llab{a,\underline{b}}] (4) {4};
\draw (30-120: \R) node[agent,label=right:{\llab{a,\underline{c}}}] (5) {5};
\draw (30-240: \R) node[agent,label=left:{\llab{b,\underline{c}}}] (6) {6};
\draw (90: 1.7*\R) node[agent,label=right:{\llab{\underline{a}}}] (7) {7};
\draw (90-120: 2*\R) node[agent,label=right:{\llab{\underline{c}}}] (8) {8};
\draw (90-240: 2*\R) node[agent,label=left:{\llab{\underline{b}}}] (9) {9};
\foreach \x/\y in {1/2,2/3,3/1,1/4,4/2,2/5,5/3,3/6,6/1,7/1,8/2,9/3} {
    \draw[->] (\x) to (\y);    
}
\end{tikzpicture}
\caption{A directed graph with a colour assignment. \label{fig:example-graph}}
\end{minipage} 
\end{figure}

Finally, given a directed graph $G$ and a set of nodes $K$, we denote
by $G[K]$ the subgraph of $G$ induced by $K$.

We now show a structural property of a coalition deviation from a \NE
in our coordination games.  This will be used later to prove c-weak
acyclicity for a class of games based on their weak acyclicity. Note
that this cannot be done for all classes of graphs, because there
exist a coordination game on undirected graph which is weakly acyclic,
but has no strong equilibrium \cite{ARSS14}.

\begin{lemma}
\label{lem:unicolor-cycle}
Any profitable coalition deviation from a \NE includes a unicoloured
directed simple cycle.
\end{lemma}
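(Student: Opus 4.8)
The plan is to exploit the \NE assumption to produce, for every member of the deviating coalition, a same-coloured in-neighbour that also belongs to the coalition, and then to close these up into a cycle by a finiteness argument. Fix a \NE $s$, a coalition $K$, and a profitable deviation $s'$ of the players in $K$ from $s$, so that $p_i(s') > p_i(s)$ for every $i \in K$. First I would record what it means for $s$ to be a \NE: player $i$ deviating \emph{alone} from $s$ to the strategy $s_i'$ cannot be profitable, hence $p_i(s_i', s_{-i}) \leq p_i(s)$. Combining this with the coalition's profitability gives $p_i(s') > p_i(s_i', s_{-i})$ for each $i \in K$.

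Next I would expand both sides using $p_i(s) = \sum_{j \in N_i,\, s_i = s_j} w_{j \to i} + \beta(i, s_i)$. In both $p_i(s')$ and $p_i(s_i', s_{-i})$ player $i$ plays the same colour $s_i'$, so the bonus term $\beta(i, s_i')$ cancels and the inequality reduces to
\[
\sum_{j \in N_i,\, s_j' = s_i'} w_{j \to i} \;>\; \sum_{j \in N_i,\, s_j = s_i'} w_{j \to i}.
\]
Writing $A = \{j \in N_i : s_j = s_i'\}$ and $B = \{j \in N_i : s_j' = s_i'\}$, the members of $A \cap B$ contribute to both sides and cancel, leaving $\sum_{j \in B \setminus A} w_{j \to i} > \sum_{j \in A \setminus B} w_{j \to i} \geq 0$. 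Since the weights are non-negative, the right-hand sum is at least $0$, so the left-hand sum is strictly positive and in particular $B \setminus A \neq \emptyset$. Any $j \in B \setminus A$ is an in-neighbour of $i$ (so $j \to i \in E$) with $s_j' = s_i'$ and $s_j \neq s_i'$; the last two facts give $s_j \neq s_j'$, hence $j \in K$. Thus every $i \in K$ has a predecessor $f(i) \in K \cap N_i$ with $s_{f(i)}' = s_i'$.

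Finally I would iterate this predecessor map. Starting from an arbitrary $i_0 \in K$ and setting $i_{t+1} = f(i_t)$, each step yields an edge $i_{t+1} \to i_t$ in $G$ with $i_{t+1} \in K$ and $s_{i_{t+1}}' = s_{i_t}'$; by induction all the $i_t$ carry a single colour under $s'$. Because $K$ is finite, some vertex must repeat, and taking the first repetition isolates a directed simple cycle, all of whose vertices lie in $K$ and share the same colour in $s'$. This is exactly the claimed unicoloured directed simple cycle contained in the coalition.

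The one delicate step is the cancellation argument yielding a same-coloured predecessor inside $K$: one must check that the strict gain of the coalition deviation over the unilateral deviation is localised on neighbours whose colour actually changed, and that non-negativity of the weights forces $B \setminus A \neq \emptyset$ (rather than merely $B \neq A$). Everything after that is the routine finiteness-to-cycle extraction.
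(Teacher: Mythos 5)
Your argument is correct and is essentially the paper's own proof: both reduce the claim to showing that every member of $K$ has an in-neighbour in $K$ adopting the same new colour, by comparing $p_i(s')$ with the unilateral-deviation payoff $p_i(s'_i,s_{-i})$ and invoking the \NE{} property of $s$, and then close up into a cycle by finiteness of $K$. Your $A$/$B$ cancellation and the explicit predecessor-map iteration merely spell out steps the paper leaves implicit.
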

\begin{proof}
Let $s$ be any \NE in the game and 
let coalition $K$ have a profitable deviation, $s'$,
from $s$.
It suffices to show that each node in $K$ has a predecessor in $K$ deviating to the same colour.
Assume that for some player $i\in K$ it is not the case.
We then have the following:
$p_i(s) < p_i(s'_K,s_{-K}) =$ 
$\sum_{j\in K: s'_{j} = s'_i} w_{j\to i}$ $+ \sum_{j\not\in K: s_{j} = s'_i} w_{j\to i}$ $+ \beta(i,s'_i) 
\leq 0 + \sum_{j: s_{j} = s'_i} w_{j\to i} + \beta(i,s'_i) = p_i(s'_i,s_{-i})$,
so player $i$ would also be able to improve his payoff by unilaterally switching to $s'_i$ in $s$, which 
contradicts the fact that $s$ is a \NE.%
\end{proof}

\section{Simple cycles}
\label{sec:simple-cycle}

In this section we focus on the case when the game graph is a directed simple cycle. 
Despite the simplicity of this model the problems we consider are already nontrivial 
for such a basic graph structure. 
We first restate a result from \cite{ASW15} where unweighted graphs
are considered. To fix the notation, suppose that
the considered graph is $1 \to 2 \to \ldots \to n \to 1$.  Below for
$i \in \C{2,\LL,n}$, $i \ominus 1=i-1$, and $1 \ominus 1=n$.

\begin{theorem}{(\cite{ASW15})}
\label{thm:TARK15}
Every coordination game with bonuses on an unweighted simple cycle 
has a c-improvement path of length $\mathcal{O}(n)$.
\end{theorem}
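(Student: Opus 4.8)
The plan is to split the construction into two phases: first build a single-player improvement path of length $O(n)$ to a Nash equilibrium, and then apply Lemma~\ref{lem:unicolor-cycle} to reach a strong equilibrium in one further coalition move. The engine is the special structure of a simple cycle: node $i$ has the unique predecessor $i\ominus1$, so (all weights being $1$) its payoff is $p_i(s)=\beta(i,s_i)+[\,s_i=s_{i\ominus1}\,]$, where $[\cdot]$ is the $0/1$ indicator. Writing $b_i:=\max_{c\in C(i)}\beta(i,c)$, the basic dichotomy I would record is that a best response of $i$ yields payoff exactly $b_i$ (attained at any max-bonus colour) \emph{unless} the predecessor plays one of $i$'s max-bonus colours, in which case the unique strictly-improving response is to copy that colour, for the maximal payoff $b_i+1$; once $i$ attains $b_i+1$ it can never again be induced to move. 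A small tie-breaking convention (favouring a max-bonus colour) ensures that whenever $i$ moves it lands on a max-bonus colour, which is automatic anyway in the only case that matters below, namely an improving copy.

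For the first phase I would run one pass of best responses in the order $1,2,\dots,n$ (skipping nodes already content). Since each $i\in\{2,\dots,n\}$ best-responds to a predecessor that does not move again during the pass, all of $2,\dots,n$ end the pass at a best response; only node $1$ may be unsettled, because node $n$ moved after node $1$ did. If node $1$ is content we already have a Nash equilibrium. Otherwise node $1$ strictly prefers to copy $q:=s_n$, which by the dichotomy must be one of node $1$'s max-bonus colours. I would then let this single colour $q$ propagate forward: each successive node either \emph{adopts} $q$ --- possible only when $q$ is one of its own max-bonus colours, after which it matches its predecessor on a max-bonus colour, reaches its maximal payoff $b_i+1$, and is frozen forever --- or \emph{rejects} $q$, in which case it is already at a best response and the cascade halts. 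Because a node's change affects only its successor, the cascade is purely forward, visits each node at most once, and cannot return to disturb node $1$ (which already sits at its maximum $b_1+1$). Hence this phase reaches a Nash equilibrium in at most $2n$ single-player improvements.

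For the second phase, at the resulting Nash equilibrium $s$ every node is at a best response, so $p_i(s)\in\{b_i,b_i+1\}$. If $s$ is not strong there is a profitable coalition deviation, and by Lemma~\ref{lem:unicolor-cycle} --- since the whole graph is the only directed simple cycle of a simple cycle --- it must move \emph{all} nodes to a single common colour $c$. Profitability, combined with the inequalities $p_i(s)\ge b_i\ge\beta(i,c)$ and integrality, forces $p_i(s)=b_i=\beta(i,c)$ for every $i$; that is, $c$ is a common max-bonus colour and no node was matching in $s$. The single coalition move to the all-$c$ profile then gives every node $1+b_i$, its global maximum, so no single player and no coalition can improve, and the profile is a strong equilibrium. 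Adding this one step keeps the total length $O(n)$.

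The main obstacle is exactly the cyclic dependency that defeats a naive one-pass argument: the last node's move can unsettle the first, and one fears a cascade that wraps around the cycle indefinitely (indeed, on cycles infinite improvement paths do exist). The resolution is the best-response dichotomy above, which pins every destabilisation to a single colour $q$ being copied forward and guarantees that each adopter of $q$ immediately reaches its maximal payoff and stops moving; this makes the cascade monotone and bounds it by one lap, after which node $1$ --- itself already maximal --- cannot be revived. The only delicate technical point is the tie-breaking in the best-response rule (a node matching a near-maximal predecessor colour can tie with its own max-bonus colour), which is handled by breaking ties toward a max-bonus colour so that every recorded move is a strict improvement, as required of an improvement path.
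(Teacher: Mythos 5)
Since the paper only imports Theorem~\ref{thm:TARK15} from \cite{ASW15} and gives no proof of it, I can only judge your argument on its own terms. Your Phase~2 (the single coalition step at a Nash equilibrium, via Lemma~\ref{lem:unicolor-cycle} and the integrality squeeze $\beta(i,c)\le b_i\le p_i(s)<\beta(i,c)+1$) is correct, and your schedule for Phase~1 (round-robin best responses with ties broken towards max-bonus colours) is the right one. The gap is in the termination analysis of Phase~1, which rests on three claims that are all false: (a) that node $1$'s post-pass improvement is necessarily to \emph{copy} $q=s_n$ and hence lands on payoff $b_1+1$; (b) that the ensuing cascade propagates the single colour $q$; and (c) that a node which has attained $b_i+1$ ``can never again be induced to move''. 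For (c): if node $i$ has two max-bonus colours and its predecessor later switches from one to the other, node $i$ drops to $b_i$ and profitably copies again. For (a): node $1$ may have been content in the first pass at a colour $w$ with $\beta(1,w)=b_1-1$ matched to the old $s_n$; when $n$ moves away, node $1$ falls to $b_1-1$ and its improvement is to a max-bonus colour \emph{different} from $s_n$, landing only on $b_1$, after which (b) also fails because node $2$ may in turn move to its own max-bonus colour rather than node $1$'s. A concrete counterexample to the ``one lap, node $1$ immune'' picture on a $3$-cycle: $C(1)=C(3)=\{g,h,w\}$ with $\beta(1,g)=\beta(1,h)=\beta(3,g)=\beta(3,h)=2$, $\beta(1,w)=1$, $\beta(3,w)=0$; $C(2)=\{w,h\}$ with $\beta(2,w)=1$, $\beta(2,h)=2$; start at $(w,w,w)$. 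Pass~1 moves only node $3$ (to $g$); pass~2 moves node $1$ to $g$ (reaching $b_1+1=3$), node $2$ to $h$, node $3$ to $h$; and then node $1$ --- despite sitting at $b_1+1$ --- is disturbed again and must move to $h$ in a third pass. So the cascade does wrap around and revisit node $1$.

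The theorem is still true and your schedule still delivers it, but an extra argument is needed to cap the number of laps. The repair is roughly this: with your tie-breaking, every node that moves lands on a max-bonus colour and stays on max-bonus colours forever, so after its first move its only possible further improvements are ``upgrades'' to $b_i+1$ by copying its predecessor. A pass $k+1$ can occur only if node $n$ moved in pass $k$, which (since disturbances only travel forward) forces \emph{every} node to have moved in pass $k$; hence by pass~$3$ all nodes are anchored, pass~$3$ consists solely of upgrades, an all-upgrade pass propagates the single colour $s_n$ forward from node $1$, and such a run cannot make node $n$ move (its predecessor would then already carry node $n$'s own colour). This bounds the schedule by three passes, i.e.\ $O(n)$ moves, after which your Phase~2 applies verbatim. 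Without some argument of this kind, the claimed $2n$ bound and the termination of Phase~1 do not follow from what you wrote.
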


We would like to extend this result to weighted graphs with bonuses.
However as the following example demonstrates, if in a simple cycle,
we allow non-trivial weights on at least three edges and associate
bonuses with at least three nodes then there are coordination games
that need not even have a Nash equilibrium.

\begin{example}
\label{ex:noNE}
Consider the simple cycle on three nodes 1, 2 and 3 in which all the
edges have weight 2. Let $C(1)=\{a,b\}$, $C(2)=\{a,c\}$ and
$C(3)=\{b,c\}$. Let the bonus be defined as 
$\beta(1,a) = \beta(2,c) = \beta(3,b) = 1$ 
and equal to $0$ otherwise.
The structure essentially corresponds
to the one shown in Figure~\ref{fig:example-graph}.  The resulting
coordination game does not have a Nash equilibrium. Below we list all
the joint strategies and we underline a strategy that is not a best
response to the choice of other players: $(\underline{a},a,b)$,
$(a,a,\underline{c})$, $(a,c,\underline{b})$, $(a,\underline{c},c)$,
$(b,\underline{a},b)$, $(\underline{b},a,c)$, $(b,c,\underline{b})$
and $(\underline{b},c,c)$.
\qed
\end{example}

We show here that this counterexample is essentially minimal, i.e.  if
only two nodes have bonuses or only two edges have weights then the
coordination game is weakly acyclic. 

\begin{restatable}{theorem}{cycletbonus}
\label{thm:cycle-2bonuses}
Every coordination game on a weighted simple cycle in which at most
two nodes have bonuses 
has an improvement path of length $\mathcal{O}(n)$.
\end{restatable}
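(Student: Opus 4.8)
The plan is to exploit the fact that on a simple cycle each player's payoff depends only on its unique predecessor and its own bonus. Writing $w_i$ for the weight of the edge into node $i$ and $B_i=\max_{c\in C(i)}\beta(i,c)$, a best response of $i$ to a predecessor colour $d$ is to \emph{match} (play $d$) when $d\in C(i)$ and $w_i+\beta(i,d)\ge B_i$, and otherwise to play some colour attaining $B_i$. In particular, a bonus-free node can strictly improve only by switching to its predecessor's colour, so every improvement move of such a node is a step of ``colour propagation'' along the cycle. I would first record these observations as a short structural claim, and treat the degenerate cases of zero or one bonus node as instances of the two-bonus argument.

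Next I would design an explicit schedule rather than reason about arbitrary improvement paths. Let the (at most two) bonus nodes be $p$ and $q$; they split the cycle into two arcs. Starting at the node immediately after $p$, I sweep once around the cycle in its natural order, letting each node take a best response whenever this is a strict improvement (for an interior bonus-free node this is exactly ``match the predecessor if its colour is available,'' which leaves unmoved any node that cannot profitably match). Because every node's predecessor is fixed once the node itself has moved, after this single sweep of at most $n$ moves every node is best responding except possibly the start node $r=p$, whose predecessor (the last node of the other arc) may have changed after $p$ moved.

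I would then run a cleanup phase: let $p$ re-best-respond and let the induced change cascade forward. The key point is that the cascade dies — and we are immediately at a \NE — as soon as it reaches a node that cannot profitably match the new colour (the colour is unavailable there) or a bonus node that strictly prefers its own bonus colour, since such a node does not move and hence leaves everything downstream, including the other anchor and its arc, undisturbed. Thus the only dangerous situation is when the new colour is available along an entire arc and therefore reaches the opposite bonus node, which may in turn re-match and send the wave on around the cycle.

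Ruling out such an endless chase is the main obstacle, and it is exactly where the bound of two anchors is used. I would argue that after $O(1)$ additional partial sweeps the colours of $p$ and $q$ become simultaneous best responses, after which each arc stabilises under one forward propagation and the total length stays $O(n)$. Concretely, one shows that an anchor can be forced onto its best-bonus colour — which then acts as a permanent barrier that kills the wave — unless it strictly prefers matching, and that two mutually-matching anchors can be reconciled in boundedly many readjustments. This is precisely the place where the restriction to two bonuses is essential: the symmetric three-bonus, three-weight gadget of Example~\ref{ex:noNE} admits no consistent anchoring and indeed has no \NE at all. Making this reconciliation bound precise, together with the bookkeeping of the ``match versus best-bonus'' decision at each anchor, is the technical heart of the proof; everything else reduces to $O(n)$ propagation along the two arcs.
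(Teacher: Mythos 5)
Your overall plan coincides with the paper's: a single best-response sweep around the cycle leaves every node except the starting one best-responding (because each node's payoff depends only on its unique predecessor), and all remaining work is concentrated at the at most two bonus nodes, which are the only nodes able to introduce a colour different from their predecessor's. Up to that point your structural observations are correct and match the paper's schedule.

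The problem is that the one step you explicitly defer --- ``two mutually-matching anchors can be reconciled in boundedly many readjustments'' --- is the entire content of the theorem; without it you have not ruled out the wave chasing itself around the cycle indefinitely, and Example~\ref{ex:noNE} shows that some argument specific to having only two anchors is unavoidable. The missing ingredient, which the paper's case analysis supplies, is the following monotonicity fact: whenever a bonus node makes a \emph{non-matching} improvement move, it necessarily moves to a colour $c$ with $\beta(i,c)\ge\beta(i,c')$ for every $c'\in C(i)$ (its payoff is then exactly $\beta(i,c)$, which must dominate all alternatives, including every other pure-bonus option); consequently every \emph{later} strict improvement by that node must be a matching move, since no other non-matching colour can beat $\beta(i,c)$, and staying put only gains if the incoming colour equals $c$. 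Hence each of the two anchors introduces at most one ``fresh'' colour over the whole schedule, each fresh colour triggers at most one propagation sweep of length $\mathcal{O}(n)$, and after a constant number of sweeps the only colour being propagated is one that both anchors are content to keep, so the path terminates. Note also that your phrase ``forced onto its best-bonus colour, which then acts as a permanent barrier'' is not quite right: an anchor sitting on its best-bonus colour can still profitably switch to match a new incoming colour $d$ when $w_i+\beta(i,d)>\beta(i,c)$; what is true (and what you actually need) is that any such switch is a matching one, so it extends the current wave rather than starting a new one.
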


\begin{proof}
Assume without loss of generality that one of the nodes which has a
bonus is node $1$ (otherwise we can re-label the nodes on the
cycle). Let the other vertex with a bonus be some $k \in N$. Let $s$
be an arbitrary joint strategy. We perform the following sequence of best
response updates.

We proceed around the cycle in the order $1, \ldots, n$ and let
players switch to any of their best responses. We argue that in at most three
rounds, the resulting improvement path terminates in a Nash
equilibrium. At the end of the first round, players $2,\ldots,n$ are
playing their best response. If the resulting joint strategy $s^1$ is
a Nash equilibrium, then we stop. Otherwise player 1 strategy $s^1_1$ is
not a best response to $s^1_{-1}$. Let player $1$ update his strategy,
denote the resulting joint strategy $s^2$. There are two cases:

\begin{itemize}
\item Suppose $s^2_1=s^1_n$. We proceed around the cycle in the cyclic
  order up until the node $k-1$ and update the strategy of each
  player. Note that if at some point in between we reach a Nash
  equilibrium then we stop, otherwise the only colour that is
  propagated along the cycle until node $k-1$ is $s^1_n$. Let the
  resulting joint strategy be $s^3$. Now suppose $s^3_k$ is not a best
  response to $s^3_{-k}$. Let player $k$ update his strategy and call
  the resulting joint strategy $s^4$. If $s^4_k = s^3_{k-1}
  (=s^1_{n})$ then we continue around the cycle making players update to
  their best response. This improvement path is guaranteed to
  terminate since the only colour which is propagated is $s^1_{n}$. If
  $s^4_k \neq s^3_{k-1}$, then $s^4_k = c^k$ for some $c^k \in
  C(k)$. Continue in the cyclic order from $k+1, \ldots, n$ making
  players update to their best response. Let the resulting joint
  strategy be $s^5$. Note that in this sequence if a player switches
  then it is to the colour $s^4_k=c^k$. 

Suppose $s^5_1$ is not a best response to $s^5_{-1}$, then let player
1 update and call the resulting joint strategy $s^6$. If
$s^6_1=s^6_{n} (=c^k)$ then continue in the cyclic order from $2,
\ldots, k-1$. The only colour which is propagated is $c^k$ and this
improvement path is finite since the colour chosen by $k$ is $c^k$. If
$s^6_1\neq s^6_{n}$ then $s^6_1 = c^1$ for some $c^1 \in C(1)$. 

Now let players update to their best response in the cyclic order $2,
\ldots, n$. Either the improvement path terminates before player $k$
updates since his best response remains $c^k$ or player $k$ updates to
$c^1$ and then the improvement path also terminates since the only
colour which is propagated in the cycle is $c^1$.

\item Suppose $s^2_1 \neq s^1_{n}$, then $s^2_1 = c^1$. Proceed around
  the cycle in the cyclic order and let players update to their best
  responses. If player $k$ switches to $c^1$ then the only colour
  which is propagated is $c^1$ and the improvement path terminates in
  one round. Otherwise, player $k$ eventually updates to $c^k$. As in the
  earlier case, let players $k+1, \ldots,n, 1, \ldots, k-1$ update to
  their best response in that order. The resulting improvement path is
  finite.
\end{itemize}
\end{proof}

\noindent This proof can easily be adapted to show the same result for
graphs with at most two weighted edges.

\begin{theorem}
\label{thm:cycle-2weights}
Every coordination game on a simple cycle with bonuses where at
most two edges have non-trivial weights (i.e.\ weights greater than 1) 
has an improvement path of length $\mathcal{O}(n)$.
\end{theorem}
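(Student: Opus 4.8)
The plan is to reuse the scheduling and the colour-propagation bookkeeping from the proof of Theorem~\ref{thm:cycle-2bonuses} almost verbatim, with the heads of the two non-trivially weighted edges now playing the role that the two bonus nodes played there. First I would re-label the cycle so that node $1$ is the head of one of the two weighted edges (the edge $n \to 1$), and let $k$ be the head of the other; every remaining node then has incoming weight $1$. As before, the scheduler proceeds around the cycle in the order $1, \ldots, n$, letting each player move to a best response, and the claim is that after a constant number of such rounds the process stabilises in a Nash equilibrium, giving total length $\mathcal{O}(n)$.

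The structural fact that replaces the ``bonus-free nodes always copy their predecessor'' observation of the previous proof is the following behaviour of a unit-weight node $i$ with predecessor colour $d$. Writing $b_i = \max_{c \in C(i)} \beta(i,c)$, matching the predecessor yields $1 + \beta(i,d)$ while switching to a maximum-bonus colour yields $b_i$; hence if $\beta(i,d) \ge b_i - 1$ the node may copy $d$, and if $\beta(i,d) < b_i - 1$ (or $d \notin C(i)$) it is forced to a fixed maximum-bonus colour, which is then stable. The crucial point --- and the reason the unweighted base cycle converges, as in the reasoning behind Theorem~\ref{thm:TARK15} --- is that whenever copying the predecessor only ties with the best bonus colour ($\beta(i,d) = b_i - 1$) the node is \emph{indifferent}, so the scheduler may always steer it to its bonus colour. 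This indifference gives every unit-weight node enough slack that an ``accepted'' colour propagates cleanly forward while any node can instead be parked at a stable bonus colour. A weighted head, by contrast, strictly prefers to copy a predecessor it would otherwise reject (since $w + \beta(i,d) > b_i$ once $w \ge 2$), so it is exactly at nodes $1$ and $k$ that a colour can be injected against the flow; these are the only two sources of disruption.

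With disruptions confined to the two weighted heads, I would replay the case analysis of Theorem~\ref{thm:cycle-2bonuses}: after the first round players $2, \ldots, n$ best-respond to their predecessors, and if node $1$ is not best-responding it updates, either matching node $n$ or injecting its own colour $c^1$; that colour is propagated forward up to node $k-1$, node $k$ either continues it or injects a colour $c^k$, which is then propagated from $k+1$ around to node $1$, and so on. As in the earlier proof, each of the two special nodes can inject a new colour only a bounded number of times before the propagated colour becomes globally consistent, since a unit-weight node never re-disrupts an accepted colour, so the path terminates after at most a constant number of sweeps.

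The main obstacle I anticipate is precisely the step the earlier proof also had to handle with care: bounding the number of times the two weighted heads re-inject colours, i.e.\ verifying that once $c^1$ and $c^k$ have each been propagated around once, the two heads are simultaneously best-responding. Here one must check that the strict copying incentive at a weighted head cannot repeatedly fight the bonus-driven propagation emanating from the other head; the indifference slack of the unit-weight nodes in between is what prevents an endless tug-of-war, and confirming this interaction (rather than any new calculation) is the crux that makes the adaptation honest rather than merely formal.
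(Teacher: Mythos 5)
The paper offers no proof of this theorem beyond the one-line remark that the proof of Theorem~\ref{thm:cycle-2bonuses} ``can easily be adapted'', so there is no detailed argument to compare against; your proposal is precisely that adaptation --- same schedule (sweep the cycle in order $1,\ldots,n$, two distinguished nodes, a bounded number of colour injections followed by clean propagation) --- and it is clearly the intended approach.

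One caveat, because one of your structural claims is overstated. You write that ``it is exactly at nodes $1$ and $k$ that a colour can be injected against the flow; these are the only two sources of disruption.'' In Theorem~\ref{thm:cycle-2bonuses} the non-special nodes are bonus-free, so they deterministically copy their predecessor and the proof can repeatedly assert that ``the only colour which is propagated is $X$''. Here every unit-weight node may carry bonuses, and such a node injects its own maximum-bonus colour whenever the bonus of the incoming colour falls more than one below its maximum bonus $b_i$; injections are therefore not confined to the two weighted heads, and the case analysis of Theorem~\ref{thm:cycle-2bonuses} does not transfer verbatim. What rescues the argument is the observation you do make elsewhere: once a unit-weight node reaches a colour with payoff at least $b_i$, any further profitable move must be a copy of a predecessor colour that is itself maximum-bonus for that node, so each such node injects at most once and thereafter behaves as a pure copier. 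This is exactly the Theorem~\ref{thm:TARK15} mechanism, and it has to be interleaved with the two-special-node bookkeeping rather than invoked as a black box --- otherwise one only gets an $\mathcal{O}(n)$ bound on the number of injections, hence $\mathcal{O}(n^2)$ overall, not $\mathcal{O}(n)$. Your closing paragraph honestly locates this interaction as the remaining work; the proposal is correct in outline but, much like the paper's own one-sentence justification, leaves that verification open.
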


The above results are optimal due to Example~\ref{ex:noNE}.  We can also
show that if a game played on a simple cycle is weakly acyclic, then it
is c-weakly acyclic.

\begin{restatable}{theorem}{secycle}
\label{thm:se-cycle}
In a coordination game played on a weighted simple cycle with bonuses,
any finite improvement path can be extended to a finite c-improvement
path just by adding one profitable coalition deviation step at the end
of it.
\end{restatable}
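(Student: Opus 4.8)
The plan is to exploit the fact that a simple cycle contains exactly one directed simple cycle, namely the whole cycle itself. A finite improvement path is maximal, so it ends in a \NE $s$; if $s$ is already a strong equilibrium, nothing needs to be added, so assume it is not. Then some coalition can profitably deviate from $s$, and by Lemma~\ref{lem:unicolor-cycle} any such deviation must include a unicoloured directed simple cycle. Since each node on the cycle has a unique predecessor, the requirement ``every deviating node has a predecessor in the coalition deviating to the same colour'' forces the coalition to close up into the entire cycle with a single common colour. Hence every profitable coalition deviation from $s$ is a \emph{monochromatic} reassignment: there is a common colour $x \in \bigcap_i C(i)$ such that the profile $s^x$ in which every node plays $x$ satisfies $w_{i \ominus 1 \to i} + \beta(i,x) > p_i(s)$ for every node $i$. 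This reduces the entire problem to analysing monochromatic profiles.

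Next I would show that such a monochromatic profile $s^x$ is itself a \NE. In $s^x$ node $i$ earns $w_{i \ominus 1 \to i} + \beta(i,x)$, whereas a unilateral switch to any $y \neq x$ earns only $\beta(i,y)$ because its single predecessor still plays $x$. Combining profitability of the deviation with the fact that $s$ is a \NE yields $w_{i \ominus 1 \to i} + \beta(i,x) = p_i(s^x) > p_i(s) \ge \beta(i,y)$, so no unilateral deviation from $s^x$ is improving. Applying Lemma~\ref{lem:unicolor-cycle} a second time to this new \NE, any further profitable coalition deviation from $s^x$ must again be monochromatic, to some common colour $x'$ with $\beta(i,x') > \beta(i,x)$ at \emph{every} node. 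Consequently $s^x$ is a strong equilibrium exactly when no common colour dominates $x$ in the sense of giving a strictly larger bonus at every node simultaneously.

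The remaining step is to choose the colour correctly. Let $D$ be the set of common colours $x$ for which $s^x$ is a profitable deviation from $s$; it is finite and, by the first paragraph, non-empty. The key point is that $D$ is upward closed under the bonus-domination order: if $x \in D$ and a common colour $x'$ has $\beta(i,x') > \beta(i,x)$ for all $i$, then $w_{i \ominus 1 \to i} + \beta(i,x') > w_{i \ominus 1 \to i} + \beta(i,x) > p_i(s)$, so $x' \in D$ as well. Since domination is a strict partial order and $D$ is finite and non-empty, $D$ has a maximal element $x^{*}$; by upward closedness no common colour dominates $x^{*}$ at all, so by the previous paragraph $s^{x^{*}}$ is a strong equilibrium. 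The single profitable coalition step $s \betredge{K} s^{x^{*}}$, where $K$ is the set of nodes that change colour, then extends the finite improvement path to a finite c-improvement path.

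The main obstacle I anticipate is coordinating the two invocations of Lemma~\ref{lem:unicolor-cycle} with the choice of $x^{*}$: one must ensure that the target of the added step is not merely \emph{some} monochromatic profitable deviation but one that is maximal in the bonus order, since an arbitrary monochromatic deviation from $s$ may still admit a further dominating monochromatic deviation and therefore fail to be strong. The upward-closedness of $D$ is precisely what guarantees that a maximal element of $D$ is also maximal among all common colours, which is the property that closes the argument.
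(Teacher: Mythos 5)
Your proof is correct and follows essentially the same route as the paper: both reduce, via Lemma~\ref{lem:unicolor-cycle}, every profitable coalition deviation from the terminal \NE to a monochromatic recolouring of the entire cycle, and both observe that such a recolouring is again a \NE whose only further profitable deviations are monochromatic recolourings with strictly larger bonuses everywhere. The only (inessential) difference is in how the single added step is selected: the paper follows the resulting c-improvement path of monochromatic \NEs to its end (finite because each player's payoff strictly increases along it) and jumps there in one step, whereas you pick a bonus-domination-maximal colour from the set $D$ of profitable monochromatic targets; these come to the same thing.
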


\begin{proof}
Let us denote by $s$ a \NE that this game reaches via
some finite improvement path.
If $s$ is a strong equilibrium then we are done.  Otherwise there
exists a coalition $K$ with a profitable deviation, $s'$, from $s$.
Due to Lemma \ref{lem:unicolor-cycle}, the coalition $K$ has to
include all players, because there is only one cycle in the game
graph, and all of them have to switch to the same colour in $s'$.  We
argue that $(s',s_{-K}) = s'$ is a \NE. Suppose there is a player $i$,
that can switch to colour $x$ and improve his payoff. Then, $s'_{i
  \ominus 1} \neq x$, because all players play the same colour in
$s'$.  We have $p_i(s) < p_i(s') < p_i((x,s'_{-i})) = \beta(i,x) \leq
p_i((x,s_{-i}))$; a contradiction with the assumption that $s$ is a
\NE.

Finally, let $\rho = s, s^1, s^2, \ldots$ be any c-improvement path.
Due to the above observations every $s^i$ is a \NE where all players play the same colour.
Note that it cannot be $s^i = s^j$ for any $i \neq j$, because 
every $s^{i+1}$ is a profitable deviation from $s^i$.
Therefore any c-improvement path starting at a \NE is finite and its length is at most equal to the number of colours in the game.
However, we can cut this path short by choosing as the first coalition deviation step 
the last colouring in $\rho$.
This would still be a profitable deviation for all the players, because for all $i$ we have
$p_i(s) < p_i(s^1) < p_i(s^2) < \ldots$.
\end{proof}

\begin{corollary}
Every coordination game on a weighted simple cycle in which at most
two nodes have bonuses (or with bonuses but in which at
most two edges have non-trivial weights)
has a c-improvement path of length $\mathcal{O}(n)$.
\end{corollary}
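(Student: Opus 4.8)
The plan is to chain together the two ingredients already established in this section, since the corollary is precisely their composition. Fix any initial joint strategy. In the first case (at most two nodes carry bonuses) I would invoke Theorem~\ref{thm:cycle-2bonuses}, and in the second case (bonuses allowed but at most two edges non-trivially weighted) I would invoke Theorem~\ref{thm:cycle-2weights}, to obtain an improvement path of length $\mathcal{O}(n)$ whose last element is a \NE. Since an improvement path is by definition a c-improvement path in which every deviating coalition is a singleton, this already furnishes a c-improvement path of length $\mathcal{O}(n)$ that terminates at a \NE.

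The second step is to upgrade this terminal \NE to a strong equilibrium using Theorem~\ref{thm:se-cycle}, which states that on a weighted simple cycle with bonuses any finite improvement path can be extended to a finite c-improvement path by appending exactly one profitable coalition deviation step. Appending a single step adds only a constant to the length, so the resulting c-improvement path has length $\mathcal{O}(n) + 1 = \mathcal{O}(n)$, and by construction its last element is a strong equilibrium, as required.

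The only point needing a line of justification is that the hypotheses of Theorem~\ref{thm:se-cycle} are satisfied in both cases, and this is immediate: that theorem is stated for arbitrary weighted simple cycles with bonuses, a setting that subsumes both of the restrictions in the corollary. I do not expect any genuine obstacle here; the work was done in proving the three cited results, and the corollary follows because a constant additive term leaves the $\mathcal{O}(n)$ asymptotics unchanged.
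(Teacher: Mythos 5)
Your proposal is correct and matches the paper's intended argument exactly: the corollary is obtained by composing Theorem~\ref{thm:cycle-2bonuses} (or Theorem~\ref{thm:cycle-2weights}) with Theorem~\ref{thm:se-cycle}, appending a single coalition deviation to the $\mathcal{O}(n)$ improvement path. Nothing further is needed.
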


\section{Sequence of simple cycles}
\label{sec:necklace}

Next we look at the graph structure which consists of a chain of $m
\geq 2$ simple cycles. Formally, for $j \in \{1,2, \ldots, m\}$, let
$\mathcal{C}_j$ be the cycle $1^j \to 2^j \ldots \to n^j \to 1^j$. For
simplicity, we assume that all the cycles have the same number of
nodes. The results that we show hold for arbitrary cycles as long as
each cycle has at least 3 nodes. An \ochain{}, $\mathcal{N}$ is the
structure in which for all $j \in \{1,\ldots, m-1\}$ we have $1^j =
k^{j+1}$ for some $k \in \{2,\ldots,n\}$. In other words, it is a
chain of $m$ cycles. First, we have the following result.

\begin{restatable}{theorem}{chainnoweight}
\label{thm:necklace-noweight-nobonus}
Every coordination game on an unweighted \ochain{} has an improvement
path of length $\mathcal{O}(nm^2)$.
\end{restatable}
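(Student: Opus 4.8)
The plan is to exploit the fact that consecutive cycles in $\mathcal{N}$ overlap in exactly one vertex, so the shared vertices $v_j := 1^j = k^{j+1}$ (for $j = 1,\dots,m-1$) are cut vertices of the game graph. My first step is a \emph{reduction to single cycles}. Using the observation already made in the preliminaries, that a coordination game restricted to a subgraph is itself a coordination game with bonuses induced by the remaining nodes, I would view each cycle $\mathcal{C}_j$ as a standalone cycle game: the only external influence on $\mathcal{C}_j$ enters through its cut vertices, namely $v_{j-1} = k^j$ (receiving the weight-$1$ edge from $n^{j-1}$) and $v_j = 1^j$ (receiving the weight-$1$ edge from $(k-1)^{j+1}$). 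Since an interior cycle has exactly two cut vertices and an end cycle ($\mathcal{C}_1$ or $\mathcal{C}_m$) only one, the induced sub-game on any $\mathcal{C}_j$ is a simple cycle with \emph{at most two} bonus nodes. Hence Theorem~\ref{thm:cycle-2bonuses} applies verbatim: for any fixed colouring of the adjacent cycles, an improvement path of length $\mathcal{O}(n)$ brings $\mathcal{C}_j$ to a Nash equilibrium of its induced sub-game. This is precisely why the bound in the statement requires cycles of length $n$ and the ``at most two'' structure is the right one to lean on.

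Next I would fix a \emph{scheduling that sweeps along the chain}. Processing cycles in the order $\mathcal{C}_1, \mathcal{C}_2, \dots, \mathcal{C}_m$, when I reach $\mathcal{C}_j$ I freeze the cut vertex $v_{j-1}$ shared with the already-processed side (treating its current colour as a fixed-colour source/bonus) and run the single-cycle procedure of Theorem~\ref{thm:cycle-2bonuses} on the remaining nodes of $\mathcal{C}_j$. Freezing $v_{j-1}$ guarantees that settling $\mathcal{C}_j$ does not perturb the vertices of $\mathcal{C}_{j-1}$, since the only edge from $\mathcal{C}_j$ into $\mathcal{C}_{j-1}$ leaves through $v_{j-1}$, whose colour is held constant. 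After a forward sweep, every node best-responds except possibly the frozen cut vertices, whose optimal colour may have shifted once the downstream cycles were updated; a backward sweep (order $\mathcal{C}_m,\dots,\mathcal{C}_1$, now freezing the opposite cut vertex of each cycle) repairs these. I would therefore alternate forward and backward sweeps, maintaining a \emph{stability frontier}: a prefix $\mathcal{C}_1,\dots,\mathcal{C}_a$ and a suffix $\mathcal{C}_b,\dots,\mathcal{C}_m$ whose colourings (and whose cut-vertex colours) have permanently stabilised.

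The termination and length bound would then come from showing that each sweep advances the stability frontier inward by at least one cycle from each end. An end cycle, say $\mathcal{C}_1$, has a single cut vertex $v_1$, and its equilibrium is fully determined once the colour fed into $v_1$ from $(k-1)^2$ has stabilised; the directedness of the cycles lets a committed colour propagate in only one direction, so no later update outside the frontier can reintroduce a change into $\mathcal{C}_1$. Granting this, after $\mathcal{O}(m)$ sweeps the two frontiers meet and the whole chain is at a Nash equilibrium. Each sweep runs the $\mathcal{O}(n)$ single-cycle procedure on each of the $m$ cycles, costing $\mathcal{O}(nm)$, for a total improvement path of length $\mathcal{O}(nm^2)$, as claimed. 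Equivalently, this is an instantiation of the common technique advertised in the introduction: the sweeps reduce any starting profile to one in which at most the two current frontier cut vertices fail to best-respond, after which a final local repair terminates the path.

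The hard part will be making rigorous the claim that the frontier genuinely advances, i.e.\ that disturbances through the shared cut vertices do not cascade back indefinitely. The delicacy is that each $v_j$ has in-degree two, receiving one edge from each adjacent cycle, so its best response is coupled to both sides; a naive analysis gives a circular dependency ($\mathcal{C}_1$ stable only if $\mathcal{C}_2$ stable, and so on). I would resolve this by tracking the set of distinct colours that a cut vertex can ``commit'' to: because the only feedback into a settled cycle is through one monotonically constrained cut vertex, and because the single-cycle procedure of Theorem~\ref{thm:cycle-2bonuses} already bounds the number of colours propagated within a cycle by a constant, the number of colour changes any $v_j$ can undergo across all sweeps is $\mathcal{O}(m)$. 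Bounding this cascade carefully, rather than the routine per-cycle $\mathcal{O}(n)$ accounting, is where the real work lies.
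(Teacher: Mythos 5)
Your setup matches the paper's: you decompose the chain into per-cycle sub-games with at most two induced bonuses, schedule the cycles along the chain, and aim for an $\mathcal{O}(nm^2)$ total. But the part you explicitly defer --- ``making rigorous the claim that the frontier genuinely advances'' --- is not a technicality; it is the entire content of the proof, and the heuristic you offer in its place does not supply it. Your termination argument rests on two unsupported claims: (i) that stabilisation proceeds inward from the two ends of the chain, one cycle per sweep, and (ii) that each cut vertex $v_j$ changes colour only $\mathcal{O}(m)$ times because the single-cycle procedure ``propagates only a constant number of colours.'' Neither follows from what you have written. The circular dependency you identify (the colour of $(k-1)^{j+1}$ feeding $v_j$ is itself downstream of $v_j$ around $\mathcal{C}_{j+1}$) is exactly the obstruction, and nothing in your sketch rules out a cut vertex oscillating between colours as successive sweeps re-settle the two cycles it belongs to. There is no potential function in your argument.

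The paper closes this gap with a different progress measure that is not of the ``frontier from both ends'' form. It processes cycles always at the least index not yet at best response, and defines a node $1^j$ to be a \emph{break point} when all cycles up to $j$ are at best response and additionally $s_{1^j}=s_{n^j}$ (the cut vertex agrees with its predecessor inside its own cycle, so the external edge is not needed to sustain its choice). The key structural fact is that during back-propagation a \emph{new} colour can only be introduced by an in-degree-two node, and whenever that happens the introducing node becomes a break point; hence the pair $(\guard(s),\ \#\text{cycles at best response})$, ordered lexicographically with $\guard$ weakly increasing, strictly increases with each phase, giving $\mathcal{O}(m)$ phases of cost $\mathcal{O}(nm)$ each. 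You need either this lemma (new colours create break points) or some substitute potential; without it your cascade bound is an assertion. A secondary, repairable point: once you freeze a cut vertex, the remaining nodes of $\mathcal{C}_j$ form a path, so you are invoking the DAG argument rather than Theorem~\ref{thm:cycle-2bonuses}, and you then owe an analysis of what happens when the frozen, possibly non-best-responding vertex is released --- which is again the same missing cascade argument.
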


\begin{proof}[Proof sketch]
We provide a proof sketch, the details can be found in the
appendix. 

The idea behind the proof is to view the \ochain{} as a sequence of
simple cycles with bonuses. Here at most two nodes in each cycle have
non-trivial bonuses. We then apply Theorem~\ref{thm:TARK15} to
construct a finite improvement path for each cycle and argue that
these paths can be composed in a certain manner to obtain a finite
improvement path in the \ochain{} that terminates in a Nash
equilibrium.

Let $\{\mathcal{C}_j \mid j \in \{1,2, \ldots, m\}\}$ be the set of
simple cycles which constitute the \ochain{} $\mathcal{N}$. The
maximum in-degree of any node in $\mathcal{N}$ is two and in each
$\mathcal{C}_j$, there are at most two nodes $u$ and $v$ with
in-degree two with one of the incoming edges $x \to u$ from a node $x$
in $\mathcal{C}_{j+1}$ if $j < m$ and the other $y \to v$ from a node
$y$ in $\mathcal{C}_{j-1}$ if $j >1$. Given a joint strategy $s$, we
can view these external incoming edges into $\mathcal{C}_{j}$ as
bonuses to the nodes $u$ and $v$. That is, $\beta_j^s(u,c)=1$ if
$s_x=c$ and $0$ otherwise, $\beta_j^s(v,c)=1$ if $s_y=c$ and $0$
otherwise. For all $i \in \{1^j,\ldots,n^j\} \setminus \{u,v\}$, for
all $c$, $\beta_j^s(i,c)=0$.

For each $j \in \{1,2, \ldots, m\}$ and a joint strategy $s$, consider
the cycle $\mathcal{C}_j$ along with the bonus $\beta_j^s$. This
induces a coordination game on a (unweighted) simple cycle with
bonuses. By Theorem~\ref{thm:TARK15} such a coordination game is
weakly acyclic.

Given a joint strategy $s$, for $j \in \{1,\ldots,m\}$, call the node
$1^j$ a {\it break point} in $s$ if the following two conditions are
satisfied:
\begin{itemize}
\item[(C1)] $\forall k \leq j$ and $i \in \{1^k,\ldots                            
  n^k\}$, $s_{i}$ is a best response to $s_{-i}$,
\item[(C2)] $s_{1^j} = s_{n^j}$.
\end{itemize}

For a joint strategy $s$, let $\guard(s)$ be the largest $j
\in \{1, \ldots, m-1\}$ such that $1^j$ is a break point in $s$, if no
such $j$ exists then $\guard(s)=0$. 
Let $s^0$ be an arbitrary joint strategy in the game whose underlying
graph is $\mathcal{N}$. We construct a finite improvement path
inductively as follows. Initially, the improvement path consists of
the joint strategy $s^0$. Suppose we have constructed an improvement
path $\rho'$ such that $\lasts(\rho')=s'$. Choose the least $j \in
\{1,\ldots,m\}$ such that there is a node in $\mathcal{C}_j$ which is
not playing its best response in $s'$. Apply Theorem~\ref{thm:TARK15}
to the game induced by $\mathcal{C}_j$ and $\beta_j^{s'}$ to extend
the improvement path. 

We can then argue that each time a cycle $\mathcal{C}_{j+1}$ is
updated, either the number of cycles playing the best response
strictly goes up or if that quantity decreases, then the value of
$\guard$ strictly increases. Note that the value of $\guard$ is always
weakly increasing. Thus if we consider the pair, the value of $\guard$
and the number of cycles playing the best response, then this pair
under lexicographic ordering forms a progress measure for the specific
scheduling of nodes defined above. The value of $\guard$ is bounded by
$m-1$ and the number of cycles is bounded by $m$.

The improvement path constructed in Theorem~\ref{thm:TARK15} is of
length $\mathcal{O}(n)$. Each time the number of cycles playing the
best response increase, a single colour can be propagated down the
entire chain of cycles. In the worst case, the value of the guard can
increase by 1 at the end of each phase. Thus in the worst case, the
length of the improvement path that is constructed is
$\mathcal{O}(nm^2)$.
\end{proof}

\noindent{\bf Weighted \ochain{}.} We say that an \ochain{} is
weighted if at least one of the component cycle has an edge with
non-trivial weights (i.e.\ an edge with weight at least 2). We now
show that Theorem~\ref{thm:necklace-noweight-nobonus} can be extended
to the setting of weighted \ochain{}.

As in the proof of Theorem~\ref{thm:necklace-noweight-nobonus}, the
idea behind the proof is to view the weighted \ochain{} as a sequence
of {\it weighted} simple cycles with bonuses. The crucial observation
is that at most two nodes in each cycle have bonuses. We can then
apply Theorem~\ref{thm:cycle-2bonuses} to construct a finite
improvement path for each cycle and argue that these paths can be
composed in a specific manner.

Let $\{\mathcal{C}_j \mid j \in \{1,2, \ldots, m\}\}$ be the set of
simple cycles which constitute the \ochain{} $\mathcal{N}$. By the
definition of $\mathcal{N}$, for all $j \in \{1,\ldots,m-1\}$, the
node $1^j$ has indegree two with an edge $n^j \to 1^j$ of weight $w_1^j$
and an edge $(k-1)^{j+1} \to 1^j$ with weight $w_2^j$. To simplify the
presentation of the proof, we assume that for all $j$, $w_1^j \neq
w_2^j$. %
We first show the following two restricted results.

\begin{restatable}{lem}{wchainup}
\label{lm:weighted-chain1}
In the \ochain{} $\mathcal{N}$ consisting of cycles $\mathcal{C}_1,
\ldots, \mathcal{C}_m$, if for all $j \in \{1,2, \ldots, m-1\}$ we
have $w_1^j > w_2^j$ then $\mathcal{N}$ is weakly acyclic.
\end{restatable}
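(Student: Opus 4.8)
The plan is to mirror the argument for the unweighted case (Theorem~\ref{thm:necklace-noweight-nobonus}), replacing the appeal to Theorem~\ref{thm:TARK15} by Theorem~\ref{thm:cycle-2bonuses}. First I would view $\mathcal{N}$ as a sequence of \emph{weighted} simple cycles with bonuses: for a fixed joint strategy $s$, each external edge entering a cycle $\mathcal{C}_j$ from an adjacent cycle is absorbed into a bonus on the corresponding shared node, so that each $\mathcal{C}_j$ carries bonuses on at most two nodes, namely $1^j=k^{j+1}$ and $k^j=1^{j-1}$ (only one of them at the two ends of the chain). Theorem~\ref{thm:cycle-2bonuses} then applies to every such induced cycle game. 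As before, I would call $1^j$ a \emph{break point} of $s$ when all nodes of $\mathcal{C}_1,\ldots,\mathcal{C}_j$ play a best response (C1) and $s_{1^j}=s_{n^j}$ (C2), set $\guard(s)$ to be the largest such $j$ (and $0$ if none), and build the improvement path phase by phase, each phase applying Theorem~\ref{thm:cycle-2bonuses} to the least-indexed cycle that still contains a node which is not best-responding.

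The new ingredient, and the only place the hypothesis enters, is the behaviour of a shared node $1^j=k^{j+1}$. In the induced game of \emph{either} incident cycle its payoff for colour $c$ is $w_1^j\cdot[s_{n^j}=c]+w_2^j\cdot[s_{(k-1)^{j+1}}=c]$, so because $w_1^j>w_2^j$ its best response is to match $s_{n^j}$ whenever that colour lies in $C(1^j)$; matching $s_{(k-1)^{j+1}}$ instead can never be profitable, since it yields at most $w_2^j<w_1^j$. I would use this twice. First, at a Nash equilibrium of the $\mathcal{C}_j$-game the node $1^j$ matches $n^j$ (when $s_{n^j}\in C(1^j)$), so (C2) holds automatically and a break point is created. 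Second, and crucially, once $1^j$ is a break point it is pinned to $s_{n^j}$, a colour that is frozen while we process any later cycle; since the dominant term $w_1^j$ is insensitive to the internal state of $\mathcal{C}_{j+1}$, no sequence of updates inside $\mathcal{C}_{j+1}$ can make $1^j$ flip. This is the weighted analogue of the unit-weight phenomenon implicit in the unweighted proof: a single matched edge of weight $1$ cannot be outbid by another edge of weight $1$, and here a match of weight $w_1^j$ cannot be outbid by a match of weight $w_2^j$.

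It follows that $\guard$ is weakly increasing along the constructed path. The only cycles containing $1^j$ are $\mathcal{C}_j$ and $\mathcal{C}_{j+1}$; the former is never selected while $1^j$ is a break point, because its cycle is already best-responding (C1); the latter cannot dislodge $1^j$ by the argument above; and cycles of higher index touch neither $1^j$ nor any node of $\mathcal{C}_1,\ldots,\mathcal{C}_j$. The same reasoning applied at index $j-1$ shows that processing $\mathcal{C}_j$ leaves the shared node $k^j=1^{j-1}$ fixed whenever the upstream cycle is already stabilised, so processing never destabilises a settled upstream cycle through the shared node. Taking the pair $(\guard,\ \#\{\text{cycles all of whose nodes best-respond}\})$ under the lexicographic order then gives a progress measure bounded by $(m-1,m)$ that strictly increases at each phase, so the path is finite and ends in a \NE; hence $\mathcal{N}$ is weakly acyclic.

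I expect the main obstacle to be the bookkeeping showing that the second coordinate strictly increases whenever $\guard$ stays fixed, and in particular the colour-set mismatch case $s_{n^j}\notin C(1^j)$: there (C2) can never hold, so $1^j$ is instead pinned to $s_{(k-1)^{j+1}}$ and the coupling between $\mathcal{C}_j$ and $\mathcal{C}_{j+1}$ temporarily runs the ``wrong'' way, and one must verify that processing the least-indexed offending cycle still advances the measure rather than cycling. The hypothesis $w_1^j>w_2^j$ is precisely what fixes a single consistent left-to-right direction of deference along the chain (each shared node always defers to the lower-indexed cycle); the reverse inequality would simply call for the symmetric right-to-left processing, which is why the companion case is treated separately before the general weighted result of Theorem~\ref{thm:necklace-nobonus}.
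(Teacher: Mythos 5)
Your proposal follows essentially the same route as the paper: absorb the two external edges of each $\mathcal{C}_j$ into bonuses, invoke Theorem~\ref{thm:cycle-2bonuses} on each induced weighted cycle game, process cycles left to right, and use $w_1^j>w_2^j$ to argue that the shared node $1^j$ defers to $n^j$ and hence cannot be dislodged by activity in $\mathcal{C}_{j+1}$. The one real difference is in how the two treatments close the case you explicitly defer as ``the main obstacle'', namely $s_{n^j}\notin C(1^j)$. The paper does not use the break-point/guard pair here at all; instead it maintains the invariant (I): whenever $s_{1^k}\neq s_{n^k}$ for a settled cycle, then $s_{n^k}\notin C(1^k)$. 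This is exactly the complementary half of your dichotomy, and it is what makes the mismatch case go through: if processing $\mathcal{C}_{j+1}$ flips $1^j$, then by (I) the flip must be to $s_{(k-1)^{j+1}}$, and that \emph{single} colour is then propagated back through $\mathcal{C}_j,\mathcal{C}_{j-1},\ldots$ without ever introducing a new colour --- each in-degree-2 node encountered either holds fast (it matches its heavy predecessor $n^{k}$ and $w_1^{k}>w_2^{k}$ protects it) or adopts the propagated colour (its heavy predecessor's colour is unavailable, again by (I)), and nodes where the colour is unavailable simply stop the propagation harmlessly. Consequently all of $\mathcal{C}_1,\ldots,\mathcal{C}_{j+1}$ end up best-responding after each phase, so the plain count of settled prefix cycles already strictly increases and the lexicographic $(\guard,\cdot)$ measure you import from the unweighted proof is not needed (it is needed there precisely because equal weights allow new colours to appear during back-propagation). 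Your argument is correct as far as it goes, but to be complete you should replace the flagged obstacle with this single-colour back-propagation argument (or, equivalently, establish and use invariant (I)).
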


\begin{proof}[Proof sketch]
We provide a proof sketch, the details can be found in the appendix.
As in the proof of Theorem~\ref{thm:necklace-noweight-nobonus}, given
a joint strategy $s$, we can view the external incoming edges into
$\mathcal{C}_{j}$ as bonuses to the corresponding nodes. The only
difference in this case is that the value of the bonus instead of
being 1, is the weight of the corresponding edge. Let $\beta^s_j$
denote this bonus function. The cycle $\mathcal{C}_{j}$ along with
$\beta^s_j$ defines a coordination game on a weighted simple cycle
with at most two nodes having non-trivial bonuses. By
Theorem~\ref{thm:cycle-2bonuses}, such a coordination game is weakly
acyclic.

Let $s^0$ be an arbitrary joint strategy in the game whose underlying
graph is $\mathcal{N}$. We construct a finite improvement path
inductively as follows. Initially, the improvement path consists of
the joint strategy $s^0$. Suppose we have constructed an improvement
path $\rho'$ such that $\lasts(\rho')=s'$. Choose the least $j \in
\{1,\ldots,m\}$ such that there is a node in $\mathcal{C}_j$ which is
not playing its best response in $s'$. Apply
Theorem~\ref{thm:cycle-2bonuses} to the game induced by
$\mathcal{C}_j$ and $\beta_j^{s'}$ to extend the improvement
path. Since $w_1^j > w_2^j$ for all $j$, we can show that the partial
improvement path $\rho$ that is constructed in this manner satisfies
the following invariant:

\begin{itemize}
\item[(I)]Let $\lasts(\rho)=s$ and let $j$ be the largest index $j \in
  \{1, \ldots m-1\}$ such that for all $k \leq j$, $i \in \{1^k,\ldots
  n^k\}$, $s_{i}$ is a best response to $s_{-i}$. If $s_{1^k} \neq
  s_{n^k}$ then $s_{n^k} \not\in C(1^k)$.
\end{itemize}

The invariant asserts that if in the strategy $s$, the choice of the
nodes $1^k$ and its unique predecessor $n^k$ in $\mathcal{C}_k$ are
not the same then the colour chosen by $n^k$ is not in the available
colours for $1^k$. Using this, We can argue that the above procedure
terminates in a Nash equilibrium.
\end{proof}

\begin{restatable}{lem}{wchaindown}
\label{lm:weighted-chain2}
In the \ochain{} $\mathcal{N}$ consisting of cycles $\mathcal{C}_1,
\ldots, \mathcal{C}_m$, if for all $j \in \{1,2, \ldots, m-1\}$ we
have $w_1^j < w_2^j$ then $\mathcal{N}$ is weakly acyclic.
\end{restatable}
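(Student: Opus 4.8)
The plan is to derive this lemma directly from Lemma~\ref{lm:weighted-chain1} by a symmetry argument, rather than redoing the scheduling construction. The point is that the hypothesis $w_1^j < w_2^j$ is exactly the hypothesis $w_1^j > w_2^j$ of Lemma~\ref{lm:weighted-chain1} read with the chain traversed in the opposite direction. So I would reverse the order of the cycles, check that the two weight inequalities at every junction flip, and then invoke the already-proved lemma on the reversed game.

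Concretely, I would first build a reversed \ochain{} $\mathcal{N}'$ whose $i$-th cycle is $\mathcal{C}'_i := \mathcal{C}_{m+1-i}$, and inside each $\mathcal{C}'_i$ I would cyclically rotate the node labels so that the node it shares with $\mathcal{C}'_{i+1}$ becomes its new $1$-node. Such a rotation is legitimate because it preserves the directed cycle, and because every junction $1^j = k^{j+1}$ uses an index $k \geq 2$ in its upper cycle, the rotation leaves the other junction of each cycle at a non-$1$-node. The effect at each junction $1^j = k^{j+1}$ is purely local: the internal lower-cycle edge $n^j \to 1^j$ of weight $w_1^j$ becomes an \emph{upper}-cycle edge of $\mathcal{N}'$, while the edge $(k-1)^{j+1} \to 1^j$ of weight $w_2^j$ becomes the internal \emph{lower}-cycle edge of $\mathcal{N}'$. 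Writing $u_1^i, u_2^i$ for the junction weights of $\mathcal{N}'$, this yields $u_1^i = w_2^{m-i}$ and $u_2^i = w_1^{m-i}$, so the assumption $w_1^j < w_2^j$ becomes $u_1^i > u_2^i$ for all $i$, which is precisely the hypothesis of Lemma~\ref{lm:weighted-chain1}.

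Since $\mathcal{N}'$ is obtained from $\mathcal{N}$ only by renaming the players (nodes) and reindexing the cycles, the two structures induce isomorphic strategic games, and weak acyclicity is preserved under such an isomorphism; hence applying Lemma~\ref{lm:weighted-chain1} to $\mathcal{N}'$ immediately gives that $\mathcal{N}$ is weakly acyclic. I expect the only delicate point — and the main obstacle — to be the bookkeeping needed to certify that $\mathcal{N}'$ really is an \ochain{} in the sense of the definition after the reversal and the per-cycle rotation: one must check that consecutive cycles still share exactly one node, that this node is the $1$-node of the lower cycle and a $k$-node with $k \geq 2$ of the upper cycle, and that the free $1$-node correctly migrates from $\mathcal{C}_m$ to the new top cycle $\mathcal{C}'_m = \mathcal{C}_1$.

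If one prefers to avoid the renaming, the alternative is to mirror the proof of Lemma~\ref{lm:weighted-chain1} directly: schedule the cycles from $\mathcal{C}_m$ down to $\mathcal{C}_1$ (always updating the cycle with the \emph{largest} index containing a node off its best response, via Theorem~\ref{thm:cycle-2bonuses}), and maintain the dual invariant that, among the already-stabilised top cycles, whenever a junction node $1^j$ disagrees with its heavy predecessor $(k-1)^{j+1}$ in the upper cycle, that predecessor's colour is not available at $1^j$. A \guard-style lexicographic progress measure, symmetric to the one used for Lemma~\ref{lm:weighted-chain1}, would then bound the length of the resulting improvement path.
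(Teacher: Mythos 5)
Your reduction is correct, and it takes a genuinely different route from the paper. The paper proves this lemma directly, by mirroring the construction of Lemma~\ref{lm:weighted-chain1}: it schedules the cycles from the greatest index downwards (always stabilising, via Theorem~\ref{thm:cycle-2bonuses}, the highest-indexed cycle containing a node not playing a best response) and maintains the dual invariant that a stabilised junction node $1^l$ disagreeing with its predecessor $(k-1)^{l+1}$ in the upper cycle cannot have that predecessor's colour in its colour set --- which is exactly the ``alternative'' you sketch in your final paragraph. Your primary argument instead works at the level of the graph: reversing the order of the cycles and rotating each cycle so that its shared node with the next cycle of the reversed order becomes its $1$-node yields a bona fide \ochain{} (the other junction of each cycle lands at label $n-k+2\in\{2,\dots,n\}$, and the free $1$-node of the new top cycle $\mathcal{C}_1$ can be placed anywhere by a further rotation since that cycle has only one junction), while the two incoming weights at each junction swap roles, giving $u_1^i=w_2^{m-i}>w_1^{m-i}=u_2^i$; Lemma~\ref{lm:weighted-chain1} then applies to the relabelled game, and weak acyclicity is preserved under renaming of players. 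Your version is arguably cleaner --- it avoids re-verifying the invariant and automatically inherits any quantitative bound one proves for Lemma~\ref{lm:weighted-chain1} --- whereas the paper's direct construction has the advantage that its explicit top-down schedule and invariant are what the block-composition argument of Theorem~\ref{thm:necklace-nobonus} later invokes; but since the isomorphism maps the improvement path of Lemma~\ref{lm:weighted-chain1} precisely onto that top-down schedule, nothing essential is lost by your approach.
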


\begin{proof}
Let $s^0$ be an arbitrary joint strategy in the game whose underlying
graph is $\mathcal{N}$. We construct a finite improvement path
inductively as follows. Initially, the improvement path consists of
the joint strategy $s^0$. Suppose we have constructed an improvement
path $\rho'$ such that $\lasts(\rho')=s'$. Choose the greatest $j \in
\{1,\ldots,m\}$ such that there is a node in $\mathcal{C}_j$ which is
not playing its best response in $s'$. Apply
Theorem~\ref{thm:cycle-2bonuses} to the game induced by
$\mathcal{C}_j$ and $\beta_j^{s'}$ to extend the improvement
path. Since $w_1^j < w_2^j$ for all $j$, we can show that the partial
improvement path $\rho$ that is constructed in this manner satisfies
the following invariant:

\begin{itemize}
\item Let $\lasts(\rho)=s$ and let $j$ be the smallest index $j \in
  \{1, \ldots m-1\}$ such that for all $l \geq j$, $i \in \{1^l,\ldots
  n^l\}$, $s_{i}$ is a best response to $s_{-i}$. If $s_{1^l} \neq
  s_{(k-1)^{l+1}}$ then $s_{(k-1)^{l+1}} \not\in C(1^k)$.
\end{itemize}

Due to the invariant above and the fact that $w_1^j < w_2^j$ for all
$j$, we can use an argument very similar to that of the proof of
Lemma~\ref{lm:weighted-chain1}, to show that a finite improvement path
can be constructed.
\end{proof}

\begin{restatable}{theorem}{ochainweight}
\label{thm:necklace-nobonus}
Every coordination game on a weighted \ochain{} has an improvement
path of length $\mathcal{O}(nm^3)$.
\end{restatable}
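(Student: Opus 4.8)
The plan is to reduce a general weighted \ochain{} to the two monotone situations already settled by Lemmas~\ref{lm:weighted-chain1} and~\ref{lm:weighted-chain2}. First I would \emph{orient} every junction: since we assume $w_1^j\neq w_2^j$, call junction $j$ (joining $\mathcal{C}_j$ and $\mathcal{C}_{j+1}$ at the shared node $1^j$) \emph{forward} when $w_1^j>w_2^j$ and \emph{backward} when $w_1^j<w_2^j$. The point is that the shared node $1^j$ strictly prefers to match the predecessor sitting on the \emph{heavier} of its two incoming edges, so its best response is dictated by the cycle on the heavier side and is insensitive to the colour chosen on the lighter side. Reading the orientation string $d_1\ldots d_{m-1}$ left to right cuts the chain into maximal monotone runs: a block of forward junctions is exactly the hypothesis of Lemma~\ref{lm:weighted-chain1} and a block of backward junctions that of Lemma~\ref{lm:weighted-chain2}. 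Consecutive runs meet at a \emph{source} cycle (a backward junction followed by a forward one, so the cycle drives both of its shared nodes and pushes colour outward to both neighbours) or at a \emph{sink} cycle (forward then backward, so it is driven inward from both neighbours); the two end cycles are classified by their single junction.

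The key fact I would isolate is a source-side strengthening of invariant~(I) from the proof of Lemma~\ref{lm:weighted-chain1}. A source cycle carries external bonuses on at most its two shared nodes, so Theorem~\ref{thm:cycle-2bonuses} drives it to an internal Nash equilibrium in $\mathcal{O}(n)$ steps using the neighbours' current colours as bonuses. Because at a source the heavier edge feeding each shared node comes from inside the source cycle, the best response of that node is to match its \emph{own} cycle's predecessor, a value determined entirely inside the source; hence once the source is internally stabilised its shared nodes are \emph{pinned} and keep their colours no matter how the neighbouring cycles are updated afterwards. A pinned shared node lets colour pass a source in one direction only and never leak back, playing at the source end the role that the unavailability clause ``$s_{n^k}\notin C(1^k)$'' of invariant~(I) plays inside a forward run.

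With orientation and pinning in hand I would schedule deviations so that colours are always pushed \emph{towards the sinks}: repeatedly pick the active run nearest a not-yet-settled sink and stabilise its cycles one at a time with Theorem~\ref{thm:cycle-2bonuses}, forward runs handled as in Lemma~\ref{lm:weighted-chain1} and backward runs as in Lemma~\ref{lm:weighted-chain2}, stabilising each sink cycle only after both runs feeding it have settled. Progress is tracked by a lexicographic measure extending the $(\guard,\#\text{stable cycles})$ pair of Theorem~\ref{thm:necklace-noweight-nobonus} with a leading coordinate that counts the already-pinned sources (equivalently, the settled runs). Each coordinate is bounded by $m$, so the measure can strictly increase at most $\mathcal{O}(m^3)$ times; since every macro-step is a single-cycle stabilisation costing $\mathcal{O}(n)$ by Theorem~\ref{thm:cycle-2bonuses}, the constructed improvement path has length $\mathcal{O}(nm^3)$.

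The hard part will be the non-interference argument that justifies this measure: showing that stabilising the current run never spoils the best-response status of an already-settled run. This requires a single combined invariant obtained by gluing invariant~(I) of Lemma~\ref{lm:weighted-chain1} to its mirror image in Lemma~\ref{lm:weighted-chain2}, and the delicate cases are precisely the direction-change junctions. A genuine complication is that a shared node belongs to \emph{both} incident cycles, so pinning a source simultaneously fixes an internal node of a neighbouring run; verifying that the heavier-edge preference at such junctions blocks every attempted back-propagation, and that re-touching a run after a neighbour settles still advances the lexicographic measure, is the crux of the argument and is what accounts for the extra factor of $m$ over the unweighted bound of Theorem~\ref{thm:necklace-noweight-nobonus}.
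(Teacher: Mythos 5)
Your decomposition of the chain into maximal monotone runs of junctions is exactly the paper's block decomposition, and your plan---stabilise each run via Lemma~\ref{lm:weighted-chain1} or Lemma~\ref{lm:weighted-chain2} and track progress with a lexicographic measure extending the $(\guard, \#\,\text{stable cycles})$ pair of Theorem~\ref{thm:necklace-noweight-nobonus}---is the paper's proof, up to your sink-directed scheduling in place of the paper's left-to-right order over blocks and your ``pinned source'' language in place of the paper's break points (which also cover the case where the heavy predecessor's colour is unavailable to the shared node, so pinning can fail). The non-interference issues you flag as the crux are precisely the ones the paper itself only sketches, so this is essentially the same argument.
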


\begin{proof}[Proof sketch]
Let $\mathcal{N}$ be the \ochain{} consisting of the sequence of
weighted cycles $\mathcal{C}_1, \ldots, \mathcal{C}_m$. The idea is to
split this sequence of cycles into blocks. A block $B_j$ is simply a
sequence of simple cycles in $\mathcal{N}$, say $\mathcal{C}_p,
\ldots, \mathcal{C}_l$ such that one of the following conditions hold,
\begin{itemize}
\item for all $k \in \{p,\ldots,l-1\}$ either $w_1^k > w_2^k$,
\item for all $k \in \{p,\ldots,l-1\}$, $w_2^k > w_1^k$.
\end{itemize}

We can then repeatedly apply Lemma~\ref{lm:weighted-chain1} and
\ref{lm:weighted-chain2} and compose the resulting improvement paths
in a specific manner to construct a finite improvement path for
$\mathcal{N}$.

The improvement path constructed by applying
Lemma~\ref{lm:weighted-chain1} and Lemma~\ref{lm:weighted-chain2} can
be of length $\mathcal{O}(nm^2)$. While composing this path we might
have to propagate colours down the chain. We can argue that
we always make progress by at least one block. Thus in worst case, the
length of the improvement path can be $\mathcal{O}(nm^3)$. 
The details can be found in the appendix.
\end{proof}

If we allow both weights and bonuses in the underlying graph
which constitutes an \ochain{}, then it follows from
Example~\ref{ex:noNE} that there are coordination games that do not
have a Nash equilibrium.

\medskip

\noindent{\bf Closed chain of cycles.} As earlier, let $\mathcal{C}_j$
be the cycle $1^j \to 2^j \ldots \to n^j \to 1^j$ for $j \in \{1,
\ldots,m\}$. Consider the structure in which for all $j \in
\{1,\ldots,m-1\}$, we have $1^j = k^{j+1}$ for some $k \in \{2,\ldots
n\}$ and $1^m=k^1$. In other words, instead of having a chain of
simple cycles, we now have a ``cycle'' of simple cycles. We can argue
that if these simple cycles are unweighted then the coordination game
whose underlying graph is such a structure remains weakly
acyclic. However, if we allow the simple cycles to have non-trivial
weights then the resulting game need not have a Nash equilibrium as
demonstrated in Example~\ref{ex:necklace-weight}. Note that to
construct a counter example (Figure~\ref{fig:necklace-weight}), we
only need three cycles each containing three nodes and a single edge
in each cycle with weight 2.

\begin{restatable}{theorem}{cchainnoweight}
\label{thm:necklace-cycle-noweights}
Every coordination game on an unweighted \cchain{} has an improvement
path of length $\mathcal{O}(nm^2)$.
\end{restatable}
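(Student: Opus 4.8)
The plan is to reduce the \cchain{} to an \ochain{} and then invoke Theorem~\ref{thm:necklace-noweight-nobonus}. Write the structure as $m$ simple cycles $\mathcal{C}_1,\ldots,\mathcal{C}_m$ arranged in a circle, with the junction $1^j=k^{j+1}$ shared by $\mathcal{C}_j$ and $\mathcal{C}_{j+1}$, indices taken modulo $m$ so that $1^m=k^1$. Exactly as in the proof of Theorem~\ref{thm:necklace-noweight-nobonus}, for a fixed joint strategy each external edge entering a cycle is absorbed as a bonus, so every $\mathcal{C}_j$ becomes an unweighted simple cycle with at most two bonused nodes and Theorem~\ref{thm:TARK15} applies to it. The one real difference from the open chain is that no cycle has a free end: each $\mathcal{C}_j$ now depends on both of its neighbours, so the lexicographic progress measure (value of $\guard$, number of cycles at a best response) driving the proof of Theorem~\ref{thm:necklace-noweight-nobonus} has no natural base case from which to start the induction.

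To supply such a base case I would first manufacture an anchor, i.e.\ a junction that can safely be frozen. Call a junction $1^j$ \emph{locked} in a joint strategy $s$ if $\mathcal{C}_j$ is internally stable and $s_{1^j}=s_{n^j}$. A one-line best-response calculation shows that a locked junction is itself at a best response and can never be given a strictly profitable unilateral deviation while $\mathcal{C}_j$ stays stable: matching its in-cycle predecessor $n^j$ already yields payoff at least $1$, and payoff $2$ would require its external predecessor $(k-1)^{j+1}$ to carry the same colour, in which case that payoff is already attained. Such a locked junction can be produced cheaply: stabilise a single cycle $\mathcal{C}_{j_0}$ using Theorem~\ref{thm:TARK15}, and if its junction settles on the external rather than the in-cycle predecessor, propagate that colour once around $\mathcal{C}_{j_0}$ to force $s_{1^{j_0}}=s_{n^{j_0}}$; this costs only $\mathcal{O}(n)$ steps.

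Having locked $v=1^{j_0}$, I would cut the circle at $v$ by freezing its colour and regarding it as a fixed-colour source feeding both of its adjacent cycles. What remains is an \ochain{} on $\mathcal{C}_{j_0+1},\mathcal{C}_{j_0+2},\ldots,\mathcal{C}_{j_0}$, listed in circular order, with $v$ as a boundary source; since a fixed source is merely a bonus, the argument behind Theorem~\ref{thm:necklace-noweight-nobonus} applies, now with $v$ itself providing the base case of the induction, and yields an improvement path of length $\mathcal{O}(nm^2)$ that brings every node other than $v$ to a best response. If $v$ is still at a best response at the end of this run, the resulting joint strategy is a \NE; as the anchor-creation phase adds only $\mathcal{O}(n)$ steps, the total length is $\mathcal{O}(nm^2)$, as required.

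The hard part is precisely that last ``if''. The open-chain run must eventually re-stabilise the far cycle $\mathcal{C}_{j_0}$, which contains $v$'s in-cycle predecessor $n^{j_0}$; doing so can move $n^{j_0}$ off $v$'s colour and unlock $v$, reintroducing the very circular dependency the cut was meant to remove. I expect to tame this not with a single run but with a global monovariant: whenever $v$ becomes unlocked it deviates to a strictly better colour, and one re-propagates that colour once around the chain, an $\mathcal{O}(nm)$ operation that strictly enlarges the maximal arc of consecutive cycles that are simultaneously stable and agree in colour across their shared junctions. Since that arc can grow at most $m$ times before it closes up the whole circle --- at which point every junction is locked and the configuration is a \NE --- the cut point is re-coloured $\mathcal{O}(m)$ times, the re-propagations contribute $\mathcal{O}(nm^2)$ in aggregate, and the construction terminates within the claimed bound.
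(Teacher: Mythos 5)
Your high-level plan --- cut the circle, invoke Theorem~\ref{thm:necklace-noweight-nobonus} on the resulting \ochain{}, then control the feedback around the cut --- is the same reduction the paper uses (the paper cuts by treating $\mathcal{C}_1,\ldots,\mathcal{C}_{m-1}$ as an \ochain{} and then folding $\mathcal{C}_m$ back in). However, there are two genuine gaps, and they sit exactly where the circularity bites. First, the anchor cannot always be manufactured. After stabilising $\mathcal{C}_{j_0}$ via Theorem~\ref{thm:TARK15}, ``propagating the junction's colour once around $\mathcal{C}_{j_0}$'' is not a sequence of moves you are free to perform in an improvement path: a node stops adopting the colour as soon as the colour is not in its colour set (it is then already at a best response with payoff $0$), and the second junction $k^{j_0}=1^{j_0-1}$ of that cycle has in-degree two and may strictly prefer its external predecessor's colour. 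In the extreme case $C(n^{j_0})\cap C(1^{j_0})=\emptyset$ for every $j_0$, no junction can ever be locked in your sense, so your base case need not exist. (The lock itself, when it holds, is indeed a best response --- that calculation is fine.)

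Second, and more seriously, the termination argument for the wrap-around is asserted rather than proved. The claim that one re-propagation of $v$'s new colour ``strictly enlarges the maximal arc of consecutive cycles that are simultaneously stable and agree across their shared junctions'' is precisely the hard part: a propagated colour arriving at an in-degree-two junction can be rejected there in favour of the in-cycle predecessor's colour, injecting a \emph{new} colour into the chain, which can destabilise cycles inside your arc and shrink it. The paper's proof spends essentially all of its effort on this point via the break-point machinery: it shows that a junction $1^q$ which introduces a new colour must have that colour supported by its in-cycle predecessor $n^q$, hence acquires payoff at least $1$ and becomes a break point that the subsequent propagation does not disturb; the progress measure is built from these break points, not from an arc of agreeing cycles. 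You also do not account for the cost of re-stabilising the cycles outside the arc after each re-colouring of $v$. As written, neither termination nor the $\mathcal{O}(nm^2)$ bound is established.
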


\begin{proof}[Proof sketch]
Let $\{\mathcal{C}_j \mid j \in \{1,2, \ldots, m\}\}$ be the set of
simple cycles which constitute the graph $\mathcal{C}$. In each
$\mathcal{C}_j$, there is exactly two nodes with in-degree two. By the
definition of $\mathcal{C}$, the simple cycles $\mathcal{C}_1$ and
$\mathcal{C}_m$ share one node $1^m=k^1$ for some $k \in \{1, \ldots
n\}$. Let $s^0$ be an arbitrary joint strategy in the game whose
underlying graph is $\mathcal{C}$. The idea of the proof is to view
the sequence of cycles $\mathcal{C}_1, \ldots, \mathcal{C}_{m-1}$ as
an \ochain{}. By Theorem~\ref{thm:necklace-noweight-nobonus}, there is
a finite improvement path starting at $s^0$ and terminating in $s^1$
such that for all nodes in cycles $\mathcal{C}_1, \ldots,
\mathcal{C}_{m-1}$ are playing their best response in $s^1$. We can argue
that this path can be extended to a finite improvement path in
$\mathcal{C}$. 
The improvement path constructed by applying
Theorem~\ref{thm:necklace-noweight-nobonus} has length
$\mathcal{O}(nm^2)$. And this can be extended to a finite improvement
path in $\mathcal{C}$ with a constant number of updates of nodes in
the cycles. The details can be found in the appendix.
\end{proof}

\begin{example}
\label{ex:necklace-weight}
Consider the coordination game with the underlying graph given in
Figure~\ref{fig:necklace-weight}. Here, the nodes 4, 5, and 6 do not
have a choice of colours and so in any joint strategy they need to
choose the unique colour in their respective colour set. The set of
joint strategies that we need to consider is then the same as given in
Example~\ref{ex:noNE}. It follows that the game does not have a Nash
equilibrium.
\qed
\end{example}

\begin{figure}%
\centering
\tikzstyle{agent}=[circle,draw=black!80,thick, minimum size=1.8em,scale=0.8]
\begin{tikzpicture}[auto,>=latex',shorten >=1pt,on grid]
\R=1.3cm
\newcommand{\llab}[1]{{\small $\{#1\}$}}
\draw (90: \R) node[agent,label=right:{\llab{a,b}}] (1) {1};
\draw (90-120: \R) node[agent,label=right:{\llab{a,c}}] (2) {2};
\draw (90-240: \R) node[agent,label=left:{\llab{b,c}}] (3) {3};
\draw (30: \R) node[agent,label=right:\llab{a}] (4) {4};
\draw (30-120: \R) node[agent,label=right:{\llab{c}}] (5) {5};
\draw (30-240: \R) node[agent,label=left:{\llab{b}}] (6) {6};

    \draw[->] (1) -- (2) node [midway,left]{\small{$2$}};
    \draw[->] (3) -- (1) node [midway,right]{\small{$2$}};
    \draw[->] (2) -- (3) node [midway,above]{\small{$2$}};

\foreach \x/\y in {4/1,2/4,5/2,3/5,6/3,1/6} {
    \draw[->] (\x) to (\y);
}
\end{tikzpicture}
\caption{A weighted \cchain{} with no Nash equilibrium. \label{fig:necklace-weight}}
\end{figure}

As in the case of simple cycles, we can show that unweighted \cchains
and \ochains are c-weakly acyclic. This implies the existence of
strong equilibria in coordination games played on such graph
structures.

\begin{restatable}{theorem}{secchain}
\label{thm:se-open-chain}
Every coordination game on an unweighted \cchain{}
has a c-improvement path of length $\mathcal{O}(nm^3)$.
\end{restatable}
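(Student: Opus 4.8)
The plan is to lift the two-phase strategy behind Theorem~\ref{thm:se-cycle} from a single cycle to a \cchain{}, using the improvement-path bound of Theorem~\ref{thm:necklace-cycle-noweights} for the unilateral phases and Lemma~\ref{lem:unicolor-cycle} to control the coalitional phases. First I would apply Theorem~\ref{thm:necklace-cycle-noweights} to build an improvement path of length $\mathcal{O}(nm^2)$ from the arbitrary initial joint strategy to a \NE $s$. If $s$ is already a strong equilibrium we stop; otherwise we repair it by interleaving a profitable coalition deviation with a fresh $\mathcal{O}(nm^2)$ re-equilibration, and the whole construction is a single c-improvement path (the re-equilibration steps are just size-one deviations).

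The structural heart of the argument is to describe the profitable coalition deviations available at a \NE. By Lemma~\ref{lem:unicolor-cycle} each such deviation contains a unicoloured directed simple cycle, so I would first enumerate the directed simple cycles of a \cchain{}: besides the $m$ component cycles $\mathcal{C}_1,\dots,\mathcal{C}_m$, there are the \emph{wrap-around rings} obtained by concatenating, for each $j$, one of the two arcs into which its two shared nodes split $\mathcal{C}_j$. These rings exist precisely because the chain is closed and are exactly what is absent in an \ochain{}; the coalition deviations that recolour them are the genuinely new phenomenon. Given a non-strong \NE, I would try to select a profitable deviation whose recoloured support is a single simple cycle $Z$ set to one colour $c$. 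For a non-shared node of $Z$ (in-degree one) this localisation is immediate, since at a \NE such a node gains only by matching its cycle-predecessor, which is already the unique contributor; for a shared node (in-degree two) it is delicate, because such a node may need a second neighbour \emph{outside} $Z$ to also carry colour $c$, and here I must either argue that a single-cycle deviation can always be found or carry the full coalition $K$ into the next re-equilibration.

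To bound the number of coalition-deviation phases I would use a lexicographic progress measure over the cycles, in the spirit of the measures driving Theorems~\ref{thm:necklace-noweight-nobonus} and \ref{thm:necklace-cycle-noweights}, namely the number of component cycles that have become monochromatic and stay so under every subsequent update. The claim to establish is that each coalition deviation, once re-equilibrated, strictly advances this measure: recolouring $Z$ to $c$ forces the shared nodes on $Z$ to $c$, and the ensuing best-response updates propagate $c$ so that at least one further cycle becomes permanently fixed while no previously fixed cycle is re-opened. With at most $\mathcal{O}(m)$ such deviations, each costing an $\mathcal{O}(nm^2)$ re-equilibration and one coalition step, the total length is $\mathcal{O}(nm^2)+\mathcal{O}(m)\cdot\mathcal{O}(nm^2)=\mathcal{O}(nm^3)$.

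The main obstacle I anticipate is exactly this monovariant in the presence of the wrap-around rings. For a single cycle (Theorem~\ref{thm:se-cycle}) one deviation monochromatises the whole cycle and immediately yields a \NE, so the measure is trivial; here a deviation along a wrap-around ring touches only the upper (or lower) arcs and can destabilise the neighbouring cycles through the shared nodes, so I must rule out that re-equilibration cyclically undoes earlier progress — that the colour $c$ spread from $Z$ is never later driven out. The colour-availability constraints $c\in C(i)$ make this bookkeeping subtle, and proving that settled cycles are never re-opened is the technically demanding step; it is also what should make the open-chain case fall out by the same argument once the wrap-around rings are removed.
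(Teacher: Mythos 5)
Your skeleton matches the paper's proof: reach a \NE{} via Theorem~\ref{thm:necklace-cycle-noweights}, alternate profitable coalition deviations with $\mathcal{O}(nm^2)$ re-equilibrations, invoke Lemma~\ref{lem:unicolor-cycle} to extract a unicoloured simple cycle from each deviation (the relevant cycles being the $m$ component cycles and the two --- not $2^m$; the edge orientations force a consistent choice of arc in every $\mathcal{C}_j$ --- rings through $A=\{1^j\}$), and bound the number of coalition phases by a ``permanently fixed cycles'' measure. But the step you yourself flag as ``the technically demanding step'' --- that a cycle recoloured by a coalition deviation is never re-opened --- is exactly the content of the theorem, and you do not supply the argument. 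It is not a propagation/bookkeeping argument about colour availability at all; it is a short payoff count exploiting in-degree $\leq 2$. If $\mathcal{C}$ is the unicoloured cycle inside the deviating coalition $K$ at the \NE{} $s$, then every $i\in\mathcal{C}$ has $p_i(s)\geq 0$, hence $p_i(s')\geq 1$ after the profitable deviation, and as long as its predecessor on $\mathcal{C}$ keeps the new colour, any further profitable unilateral switch by $i$ would require payoff $\geq 2$ while abandoning the edge from that predecessor caps its payoff at $1$ (it has at most one other incoming edge, of weight $1$). So the \emph{first} node of $\mathcal{C}$ to move during re-equilibration yields a contradiction, and $\mathcal{C}$ survives to the next \NE{} $s''$ intact with every node earning $\geq 1$. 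The same in-degree count shows nodes of $\mathcal{C}\setminus A$ (and their successors on $\mathcal{C}$) cannot join any later coalition, with the single exceptional case $\mathcal{C}=A$, where one checks that the only possible further deviation recolours the entire graph and is therefore terminal.

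Your other worry --- whether one can always \emph{choose} a profitable deviation whose support is a single simple cycle --- is a non-issue you have imported unnecessarily: the paper never localises the deviation. It takes an arbitrary profitable deviation by an arbitrary coalition $K$ and merely \emph{tracks} one unicoloured cycle $\mathcal{C}\subseteq K$ guaranteed by Lemma~\ref{lem:unicolor-cycle}; the rest of $K$ is simply absorbed into the subsequent re-equilibration. Without the payoff-counting lemma your progress measure is not established, so as written the proposal has a genuine gap at its central step, even though the architecture is the right one.
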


\begin{proof}[Proof sketch]
We provide a proof sketch, the details can be found in the appendix.
Let $s$ be a \NE that this game reaches via an improvement path of
length $\mathcal{O}(nm^2)$ as constructed in Theorem
\ref{thm:necklace-cycle-noweights}.  If $s$ is not a strong
equilibrium, then there exists a coalition $K$ with a profitable
deviation, $s'$, from $s$.  Due to Lemma \ref{lem:unicolor-cycle}, the
coalition $K$ has to include at least one simple cycle, $\mathcal{C}$,
switching to the same colour in $s'$.  This can be one of the cycles
$\mathcal{C}_i$ or one of the two cycles going around the whole game
graph containing the set of nodes $A = \{1^j\ |\ j \in \{1,\ldots,m\} \}$. 
Note that $s'$ may not be a \NE but because the game is weakly
acyclic there is a finite improvement path which leads to a \NE $s''$
from $s'$.  We can argue that $s'_i = s''_i$ for all $i \in \mathcal{C}$.

Now again, if $s''$ is not a strong equilibrium, then there exists a
new coalition $K'$ with a profitable deviation, $s'''$, from $s''$.
We can show that either $\mathcal{C} = A$ or no node from
$\mathcal{C}$ can be part of $K'$. This implies that we can construct
a c-improvement path by appropriately composing the improvement paths
from Theorem \ref{thm:necklace-cycle-noweights} along with deviations
by simple cycles.  At least one simple cycle changes colour in each
such deviation and none of its nodes change colour afterwards.  This
shows that the number of non-unilateral coalition deviation is at most
equal to the number of different simple cycles in the game graph,
which is equal to $m$.  Thus there is a c-improvement path of length
$\mathcal{O}(nm^3)$.
\end{proof}

\begin{restatable}{cor}{seochain}
\label{cor:se-open-chain}
Every coordination game on an unweighted \ochain{}
has a c-improvement path of length $\mathcal{O}(nm^3)$.
\end{restatable}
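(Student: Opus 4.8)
The plan is to mirror the argument of Theorem~\ref{thm:se-open-chain}, which established the same bound for \cchains, while exploiting the fact that an \ochain{} is structurally simpler. The one feature that made the closed-chain proof delicate was the presence of the two ``wraparound'' simple cycles through the shared nodes $A = \{1^j \mid j \in \{1,\ldots,m\}\}$; in an \ochain{} $\mathcal{N}$ no such cycles exist, so the only directed simple cycles in the game graph are the $m$ component cycles $\mathcal{C}_1,\ldots,\mathcal{C}_m$. This is precisely the property that will let us bound the number of non-unilateral coalition deviation steps by $m$, and it lets us dispense entirely with the closed-chain case analysis that had to distinguish $\mathcal{C} = A$ from the component cycles.

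Concretely, I would first invoke Theorem~\ref{thm:necklace-noweight-nobonus} to reach a \NE $s$ from an arbitrary starting profile via an improvement path of length $\mathcal{O}(nm^2)$. If $s$ is already a strong equilibrium we are done. Otherwise some coalition $K$ has a profitable deviation $s'$ from $s$, and by Lemma~\ref{lem:unicolor-cycle} this deviation contains a unicoloured directed simple cycle; since the only simple cycles are the $\mathcal{C}_j$, at least one full component cycle $\mathcal{C}$ switches to a common colour in $s'$. The profile $s'$ need not be a \NE, but because $\mathcal{N}$ is weakly acyclic (again by Theorem~\ref{thm:necklace-noweight-nobonus}) we can extend $s'$ by a further improvement path of length $\mathcal{O}(nm^2)$ to a new \NE $s''$. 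The crucial claim, established exactly as in the proof of Theorem~\ref{thm:se-open-chain}, is that every node of the just-coordinated cycle $\mathcal{C}$ keeps its colour throughout this extension, so that $s'_i = s''_i$ for all $i \in \mathcal{C}$. Intuitively, once all of $\mathcal{C}$ agrees on a colour each of its nodes already collects the within-cycle coordination payoff, and by the join-the-crowd property no sequence of best-response moves by the remaining players can make a different colour strictly preferable, so these nodes are never scheduled to move.

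Iterating this construction yields a c-improvement path in which each non-unilateral step permanently locks at least one previously unlocked component cycle onto a fixed colour, whence there can be at most $m$ such coalition deviation steps. Between consecutive coalition steps we insert an improvement path of length $\mathcal{O}(nm^2)$ supplied by Theorem~\ref{thm:necklace-noweight-nobonus}, giving a total length of $\mathcal{O}(nm^3)$. The main obstacle is the locking claim of the previous paragraph: one must verify both that after a coalition forces $\mathcal{C}$ to a common colour no subsequent best-response update inside the extending improvement path ever disturbs $\mathcal{C}$, and that no later coalition deviation can profitably re-colour a cycle that has already been locked. Both follow from the monotonicity of the payoffs together with Lemma~\ref{lem:unicolor-cycle}, exactly as in Theorem~\ref{thm:se-open-chain} --- care is needed here because the endpoints $1^j$ are shared between adjacent cycles, so locking $\mathcal{C}_j$ simultaneously constrains a node of $\mathcal{C}_{j+1}$ --- but this is the step that requires argument rather than the counting that surrounds it.
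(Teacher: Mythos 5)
Your argument is correct, but it takes a genuinely different route from the paper. The paper proves this corollary by a reduction in the opposite direction: it converts the \ochain{} into a \cchain{} by adding one fresh colour $c^*$ and one auxiliary four-node cycle (two new nodes restricted to $c^*$, bridging a node of $\mathcal{C}_1$ and a node of $\mathcal{C}_m$), observes that the resulting game has essentially the same behaviour with a one-to-one correspondence between c-improvement paths, and then simply invokes Theorem~\ref{thm:se-open-chain}. You instead rerun the internal argument of Theorem~\ref{thm:se-open-chain} directly on the open chain, using Theorem~\ref{thm:necklace-noweight-nobonus} for the interleaved improvement paths and Lemma~\ref{lem:unicolor-cycle} to force each coalition deviation to contain a full component cycle. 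Your approach is arguably cleaner in that the open chain has no wraparound cycles through $A=\{1^j\}$, so the delicate $\mathcal{C}=A$ case of the closed-chain proof disappears entirely, and you correctly flag the one point that still needs care: the locking claim for the shared nodes $1^j$, which have in-degree two, and the fact that a locked $1^j$ also blocks the neighbouring cycle from later serving as the unicoloured cycle of a coalition deviation (this only reduces the number of coalition steps, so the $\mathcal{O}(nm^3)$ count survives). What the paper's reduction buys is brevity and reuse of an already-proved theorem; what your direct argument buys is avoiding the (unverified in detail) claim that the gadget preserves c-improvement paths exactly, at the cost of repeating the locking analysis. Both are sound, and both yield the stated $\mathcal{O}(nm^3)$ bound.
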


\begin{proof}
Any \ochain{} can be converted to a \cchain{} by adding one additional colour, $c^*$, and one
simple cycle with four nodes: one node from $\mathcal{C}_1$ different from $1^1$ and $n^1$,
one node from $\mathcal{C}_m$ different from $1^m$ and $n^m$,
two extra nodes between these two with $c^*$ as the only colour available to them.
It is easy to see that a coordination game played on this 
\cchain{}
has essentially the same 
behaviour as a game played on the original \ochain{}.
In particular there is a one-to-one mapping between their c-improvement paths.
\end{proof}

Finally, so far we assumed that we know the decomposition of the game
graph into a chain of cycle in advance.  In general the input may be
an arbitrary graph and we would need to find this decomposition first.
Fortunately this can be done in linear time.

\begin{restatable}{prop}{checkchain}
\label{pr:detect-chains}
Checking whether a given graph $G$ is a \ochain{} or \cchain{}, and if so
partitioning $G$ into simple cycles $\mathcal{C}_1, \ldots, \mathcal{C}_m$ can be done in $\calO(|G|)$.
\end{restatable}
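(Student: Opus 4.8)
The plan is to reduce the recognition problem to a few local degree conditions plus a single contraction pass, and then read off the chain structure. First I would compute the in- and out-degree of every vertex in one scan over the edge set, which takes $\calO(|G|)$ time. The crucial structural fact is that in any \ochain{} or \cchain{} the shared nodes $1^j = k^{j+1}$ are precisely the vertices of in-degree and out-degree $2$, while every other node has in-degree and out-degree $1$. Hence I would immediately reject $G$ unless every vertex has $(\mathrm{indeg},\mathrm{outdeg}) \in \{(1,1),(2,2)\}$; call the $(2,2)$-vertices \emph{junctions} and let $J$ be their set. I would also reject if $G$ is disconnected, since a chain of cycles is connected.

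Next I would \emph{contract} the $(1,1)$-vertices to expose the cycle skeleton. Because each $(1,1)$-vertex has a unique successor and predecessor, deleting $J$ leaves a disjoint union of simple directed paths; equivalently, starting from each out-edge of a junction and following the forced successors until the next junction is reached traces a maximal \emph{arc} $u \Rightarrow w$ between two junctions. Each edge is visited a constant number of times, so this whole pass is $\calO(|G|)$, and it yields a multigraph $H$ on $J$ in which, by the degree check, every junction has in-degree $2$ and out-degree $2$. (If following successors ever returns to the start without meeting a junction, the corresponding component is a bare simple cycle; this is acceptable only when it is the whole graph, in which case $G$ is a single cycle and not a chain of $m \ge 2$ cycles.)

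The key step is to recover the constituent cycles from $H$ and verify that they are arranged in a path or a ring. In a genuine chain every cycle $\mathcal{C}_j$ meets $J$ in at most two junctions, so in $H$ it appears either as a self-loop at a single junction (the two end cycles of an \ochain{}) or as a pair of anti-parallel arcs between its two junctions. I would therefore pair, at each junction $v$, each out-arc leading to a neighbour $X$ with the in-arc coming from the same neighbour $X$: in a valid chain the two cycles through $v$ go to distinct neighbours, so this pairing is forced and reconstructs each $\mathcal{C}_j$ together with its original vertex sequence. I would then reject if any reconstructed cycle touches three or more junctions, has fewer than three nodes, or if some junction fails to lie on exactly two cycles. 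Finally I would build the \emph{cycle-adjacency graph} whose vertices are the recovered cycles and whose edges join cycles sharing a junction, and accept iff this graph is a simple path (declaring $G$ an \ochain{}) or a simple cycle (declaring $G$ a \cchain{}); the ordering along this path or ring gives the output sequence $\mathcal{C}_1, \ldots, \mathcal{C}_m$, and relabelling each cycle so that the shared node plays the role of $1^j$ recovers exactly the form required by the definition.

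The main obstacle is correctness rather than efficiency. Soundness is easy, because every accepted instance passes all the structural checks and therefore literally is a chain of cycles with the output decomposition. The delicate point is completeness together with the degenerate cases of the neighbour-matching: when two cycles share \emph{both} their junctions (which can only happen for a two-cycle \cchain{}), or when a junction has two arcs to the same neighbour, the ``match by neighbour'' rule is ambiguous and must be resolved by trying both pairings and keeping the one that yields simple cycles of length $\ge 3$. I would argue that apart from this bounded, locally-checkable ambiguity the decomposition of a true chain is forced, so that the algorithm accepts every \ochain{} and \cchain{} and rejects everything else, all within the same $\calO(|G|)$ budget.
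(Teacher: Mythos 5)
Your proposal is correct and rests on the same core observation as the paper's proof: the shared nodes $1^j$ are exactly the vertices of degree $(2,2)$, and once they are identified a linear traversal recovers the cycles. The paper's own argument stops essentially there --- it lists the outdegree-$2$ nodes in DFS order, identifies the $j$-th with $1^j$, and asserts the rest is easy --- so it really only addresses the ``yes'' direction. What you add is the verification machinery the paper glosses over: the explicit degree and connectivity rejections, the contraction of $(1,1)$-vertices into arcs to form the junction multigraph, the pairing of anti-parallel arcs to reconstruct each $\mathcal{C}_j$, and the final check that the cycle-adjacency graph is a path or a ring. You also correctly isolate the only genuinely ambiguous case (two cycles sharing both junctions, i.e.\ a two-cycle \cchain{}) and observe that there either pairing yields a valid decomposition, so the ambiguity is harmless. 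All of this stays within $\calO(|G|)$. In short, your write-up is a more careful and complete version of the paper's sketch rather than a different method; the extra care is warranted, since the proposition claims a decision procedure and the paper's proof does not really argue why a non-chain input is rejected.
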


\begin{proof}
Assuming that $G$ is a \cchain{}, the set $A = \{1^j | 1\leq j \leq
m\}$ is just the set of all nodes in $G$ with outdegree $2$. %
denoted by $B$.  We perform a depth first search starting from any
node of $G$ and list the nodes with outdegree $2$ as we encounter
them.  We identify the $j$-th node on this list with $1^j$.
We can then easily identify the remaining nodes in each cycle $C_j$
for $1\leq j \leq m$. Argument is similar for \ochains.
\end{proof}

\section{Simple cycles with cross-edges}
\label{sec:partion-cycles}
In this section we consider coordination games whose underlying graph
forms simple cycles along with some additional ``non-cyclic'' edges
between nodes. We say that the graph $G=(V,E)$ is a simple cycle with
cross-edges if $V=\{1,2,\ldots,n\}$ and the edge set $E$ can be
partitioned into two sets $E_c$ and $E_p$ such that $E_c=\{i \to
i\oplus 1 \mid i \in \{1,\ldots n\}\}$ and $E_p=E \setminus E_c$. In
other words, $E_c$ contains all the cyclic edges and $E_p$ all the
additional cross-edges in $G$.

The results in the previous section show that simple cycles are quite
robust in terms of maintaining the property of being weakly
acyclic. Even with weighted edges and chains of simple cycles, the
resulting coordination games remain weakly acyclic. In this section,
we study the same question: whether simple cycles with cross-edges are
weakly acyclic. We first show that if we allow arbitrary (unweighted)
cross-edges, then there are games that may not have a Nash equilibrium
(Example~\ref{ex:part-cycle-noNE}). We then identify a restricted
class of cycles with cross-edges for which the game is weakly acyclic.

\begin{example}
\label{ex:part-cycle-noNE}
Consider the graph $G'$ which we obtain by adding the following edges
to the graph in Figure~\ref{fig:example-graph}: $6 \to 7$, $4 \to 8$
and $5 \to 9$. Thus $G'$ defines a simple cycle: $1 \to 4 \to 8 \to 2
\to 5 \to 9 \to 3 \to 6 \to 7 \to 1$ along with the cross-edges
represented in Figure~\ref{fig:example-graph} (the nodes in $G'$ can
be easily renamed if required to form the cyclic ordering $1 \to 2
\ldots 8 \to 9$). Note that in the resulting graph $G'$, for any joint
strategy, the payoff for node 7 is always 0 since $C(7)$ and $C(6)$
are disjoint. Same holds for node 8 and node 9. Also, note that the
best response for nodes 4, 5 and 6 is to always select the same
strategy as nodes 1, 2 and 3 respectively. Therefore, to show that the
game does not have a Nash equilibrium, it suffices to consider the
strategies of nodes 1, 2 and 3. We can denote this by the triple
$(s_1, s_2, s_3)$. The joint strategies are then the same as those
listed in Example~\ref{ex:noNE}. It follows that the game does not
have a Nash equilibrium.
\qed
\end{example}

\noindent {\bf Partition-cycle.} Let $G=(V,E)$ be a simple cycle with
cross-edges where $E=E_c \cup E_p$. 
We call $G$ a \emph{partition-cycle} if $(V, E_c)$ forms a simple
cycle and the vertex set $V$ can be partitioned into two sets $V_T$
and $V_B$ such that $V_T,V_B \neq \emptyset$ and the following
conditions are satisfied: $E_p \subseteq V_T \times V_B$, 
\begin{itemize}
\item $E_c \cap (V_T \times V_T)$ forms a path in $(V, E_c)$,
\item $E_c \cap (V_B \times V_B)$ forms a path in  $(V, E_c)$.
\end{itemize}

\begin{figure}%
\centering
\tikzstyle{agent}=[circle,draw=black!80,thick, minimum size=2em,scale=.8]
\begin{tikzpicture}[auto,>=latex',shorten >=1pt,on grid]
\newcommand{\llab}[1]{{\small $\{#1\}$}}
\node[agent,label=above:{\llab{b,c}}] (1) {1};
\node[agent,right = 1.7 of 1, label=above:{\llab{b,c}}] (2) {2};
\node[agent,right = 1.7 of 2, label=above:{\llab{b}}] (3) {3};
\node[agent,right = 1.7 of 3, label=above:{\llab{c}}] (4) {4};
\node[agent, below right = .6 and 1.7 of 4, label=right:{\llab{a}}] (5) {5};
\node[agent,below = 1.2 of 4, label=below:{\llab{a,b}}] (6) {6};
\node[agent,left = 1.7 of 6, label=below:{\llab{a,c}}] (7) {7};
\node[agent,left = 1.7 of 7, label=below:{\llab{b,c}}] (8) {8};
\foreach \x/\y in {2/3, 3/4, 6/7, 7/8}{
    \draw[->] (\x) to (\y);
}

\foreach \x/\y in {4/5, 5/6, 8/1}{
\draw[->, bend left, bend angle=45] (\x) to (\y);
}

\foreach \x/\y in {1/2, 3/8, 4/7, 2/6, 1/6}{
    \draw[->] (\x) to (\y);
}

\end{tikzpicture}
\caption{A partition-cycle. \label{fig:partcycle}}
\end{figure}

\begin{example}
The directed graph in Figure~\ref{fig:partcycle} is an example of a
partition-cycle. One possible partition of the vertex set would be
$V_T =\{1,2,3,4,5\}$ and $V_B=\{6,7,8\}$. The edge set $E_c$
consists of the edges $1 \to 2, 2 \to 3, \ldots, 8 \to 1$
whereas $E_p=\{1 \to 6, 2 \to 6, 3 \to 8, 4
\to 7\}$.  
\qed
\end{example}

We first show that every coordination game whose underlying graph is
an unweighted partition cycle is weakly acyclic. For the sake of
simplicity, we fix the following notation: the partition-cycle is
given by $G=(V,E)$ where $V=\{1,\ldots n\}$, $V_T=\{1,2, \ldots, k\}$
and $V_B=\{k+1,k+2, \ldots, n\}$.
If $E_p =
\emptyset$ then we get a simple cycle without cross-edges on $n$
nodes. For $i \in V_B$, $c \in C(i)$ and a joint strategy $s$, let
$\mathcal{S}(i,c,s)=\{j \in V_T \mid j \to i \text{ and } s_j=c\}$. We
also define the set $\mathit{MC}(i,s)=\{c \in C(i) \mid
|\mathcal{S}(i,c,s)| \geq |\mathcal{S}(i,c',s)| \text{ for all } c'
\in C(i)\}$. Given a player $i$ and a joint strategy of the other players
$s_{-i}$ let $\mathit{BR}(i, s_{-i})$ denote the set of best responses
of player $i$ to $s_{-i}$.

\begin{restatable}{theorem}{pcyclenoweightnobonus}
\label{thm:partcycle-noweight-nobonus}
Every coordination game without bonuses on an unweighted
partition-cycle has an improvement path of length $\mathcal{O}(n(n-k))$.
\end{restatable}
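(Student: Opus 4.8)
The plan is to exploit the strong asymmetry between the two parts of the partition-cycle. Since every cross-edge lies in $V_T \times V_B$, no node of $V_T$ has an incoming cross-edge, so each top node $i \in \{2,\ldots,k\}$ has in-degree one (its cyclic predecessor $i-1$) and node $1$ has in-degree one as well (its cyclic predecessor $n \in V_B$). Consequently a single forward sweep $1,2,\ldots,k$ makes every top node best-respond: each node copies its predecessor's colour when that colour is available to it, and otherwise stays put (its payoff being $0$ in either case). This sweep costs $\mathcal{O}(k)$ and, crucially, does not touch the bottom.

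Next I would settle the bottom while the top is frozen. With the top colours fixed, the incoming cross-edges of a bottom node act as fixed bonuses (as the paper notes, incoming edges from colour-fixed nodes are exactly bonuses), and the only remaining ``live'' in-neighbour of a bottom node $i$ is its cyclic predecessor $i-1$ --- which is the top node $k$ for $i=k+1$ and a bottom node otherwise. Hence the live dependency graph on $V_B$ is the directed path $k{+}1 \to k{+}2 \to \cdots \to n$, i.e.\ a DAG with bonuses. Using the fact that coordination games on (weighted) DAGs, and hence on DAGs with bonuses, have improvement paths linear in the number of nodes \cite{ASW15}, a single forward sweep $k{+}1,\ldots,n$ settles the bottom in $\mathcal{O}(n-k)$ steps.

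After one such phase the only node that can fail to best-respond is node $1$: its predecessor $n$ may have changed colour during the bottom sweep, whereas every other top node's predecessor (an earlier top node) was untouched and every bottom node was just made to best-respond. So either the configuration is already a \NE (node $1$ still best-responds, and we stop), or node $1$ deviates, which re-cascades a colour down the top path and thereby alters the bonuses feeding the bottom, forcing another phase. I would organise the whole improvement path as a loop of such phases, each costing $\mathcal{O}(k) + \mathcal{O}(n-k) = \mathcal{O}(n)$.

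The main obstacle is bounding the number of phases by $\mathcal{O}(n-k)$, i.e.\ ruling out the infinite colour-propagation that already plagues bare simple cycles. The place where the clean propagation breaks is exactly where a node lacks the propagated colour, and since the top is a pure path controlled entirely by $s_n$, the only genuine feedback is through the single edge $n \to 1$. I would capture progress by a lexicographic measure on the bottom path analogous to the break-point/$\guard$ device used for chains of cycles: a bottom node is declared a break point once the prefix of the bottom path up to it is best-responding and locked (its colours can no longer be disturbed by the feedback), $\guard$ is the largest break point, and the pair $(\guard,\ \text{number of best-responding bottom nodes})$ is shown to strictly increase in the lexicographic order at every phase while $\guard$ never decreases. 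Proving that each phase either locks one more bottom node or strictly extends the best-responding prefix --- so that $\guard \le n-k$ caps the number of phases --- is the crux, and it is where the restriction $E_p \subseteq V_T \times V_B$ (which keeps $V_T$ a pure path and confines all feedback to $n \to 1$) must be used essentially. Multiplying $\mathcal{O}(n-k)$ phases by $\mathcal{O}(n)$ per phase yields the claimed $\mathcal{O}(n(n-k))$ bound.
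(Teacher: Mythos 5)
Your scheduling is exactly the paper's: one cyclic sweep $1,\ldots,n$ per phase (your top sweep followed by your bottom sweep is the same thing), after which only node $1$ can fail to best-respond, and the whole theorem reduces to bounding the number of phases by $\mathcal{O}(n-k)$. You correctly identify this as the crux --- but you then leave it unproven, and the invariant you propose for it is not the one that works. You suggest that a \emph{prefix} of the bottom path $k{+}1 \to \cdots \to n$ gets progressively ``locked'', its colours no longer disturbable by the feedback. That is not what happens: the feedback re-enters at node $1$, is copied down the whole of $V_T$, and thereby changes the cross-edge ``bonuses'' of every bottom node with an incoming cross-edge, so no individual bottom node's colour is ever safe from further change. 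The paper's progress measure is instead a \emph{suffix} $X=\{l,\ldots,n\}$ of $V_B$ whose members all share the current colour of node $n$; these nodes are not frozen --- they may change colour in later phases --- but the paper shows that if $s_n$ changes once more then all of $X$ changes \emph{en bloc} to the new colour and the agreeing suffix strictly lengthens, so $|X|$ is the quantity that climbs to $n-k$.

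The second missing ingredient is the tie-breaking discipline that makes this monotonicity provable. The paper does not let players pick arbitrary best responses: property (P1) forces a node, whenever its cyclic predecessor's colour ties in payoff with a colour that is a majority among its cross-edge in-neighbours (a colour in $\mathit{MC}(i,s)$), to choose the majority colour. Combined with the observation that a top sweep only ever propagates $s_n$ --- so the cross-edge support $|\mathcal{S}(j,s_n,\cdot)|$ for that colour never decreases between phases --- this is what guarantees that a node of $X$ either keeps its colour or follows $n$'s new colour rather than defecting to a third option and breaking the suffix. With arbitrary tie-breaking among best responses your phase loop could in principle oscillate, which is exactly the behaviour you set out to exclude. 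So the architecture is right, but the step you flag as ``the crux'' is essentially the entire proof, and the invariant you sketch for it would need to be replaced by the suffix-agreement invariant together with a (P1)-style tie-breaking rule before the $\mathcal{O}(n(n-k))$ bound follows.
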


\begin{proof}
Consider an initial joint strategy $s^0$. We construct a finite
improvement path starting in $s^0$ as follows. We proceed around the
cycle and consider the players $1, 2, \ldots ,n$ in that order. For
each player $i$, in turn, for the corresponding joint strategy $s$, if
$s_i$ is not a best response to $s_{-i}$, we update it to a best
response respecting the following property:

\begin{itemize}
\item[(P1)] If $s_{i \ominus 1} \in \mathit{BR}(i, s_{-i})$ and there
  exists a $c \in \mathit{MC}(i,s)$ such that $p_i(c,s_{-i})=p_i(s_{i
    \ominus 1}, s_{-i})$ then player $i$ switches to $c$ (in
  this case $c \in \mathit{BR}(i, s_{-i})$ as well).
\end{itemize}

Let $s^1$ be the resulting joint strategy at the end of the first
round. It follows that the players $2,\ldots,n$ are playing their best
response in $s^1$. If $s^1$ is a Nash equilibrium then the improvement
path is constructed. If not then the only player who is not playing
its best response is player $1$. This implies that $s^1_n \neq
s^0_n$. Let $l_1$ be the least index in $V_B=\{k+1,\ldots n\}$ such
that for all $j \in \{l_1, \ldots, n\}$, $s^1_{l_1} \neq s^0_{l_1}$
and $s^1_{l_1}=s^1_n$. Let $X=\{l_1, l_1+1, \ldots, n\}$. Note that $X
\neq \emptyset$ since $n \in X$. We repeatedly let players update to
their best response strategies in the cyclic order in multiple
rounds. We can argue that in each round $|X|$ strictly increases. By
definition, $|X| \leq |V_B|$ and therefore the improvement path
constructed in this manner eventually terminates in a Nash
equilibrium.  

In the second round starting at the joint strategy $s^1$, we let
players update to their best response following the cyclic order $1,2,
\ldots k$. Let $s^2$ be the resulting joint strategy. Note that in
this sequence, if a player is not playing its best response then the
best response strategy is simply to switch to the current strategy of
its unique predecessor on the cycle (recall that all nodes in $V_T$
have exactly one incoming edge). Thus the only colour which is
propagated is $s^1_n$. Now starting at $s^2$, let players update to
their best response following the cyclic order $k+1, \ldots n$ and let
$s^3$ be the resulting joint strategy. If $s^3$ is a Nash equilibrium
then we have a finite improvement path. If not, then player 1 is the
unique player not playing its best response and $s^3_n \neq
s^1_n$. 
We know that for all $j \in X$, $s^1_j=s^1_n$. By the above argument
we also have $|\mathcal{S}(j, s^1_n, s^3)| \geq |\mathcal{S}(j, s^1_n,
s^1)|$. Thus if $s^3_n \neq s^1_n$ then for all $j \in X$, $s^3_j =
s^3_n$. Now consider the node $l_1$ and let $t$ and $t'$ be the joint
strategies in the improvement path constructed where
$t=(s^1_{l_1},t_{-l})$ and $t'=(s^3_{l_1},t_{-l})$. For all $m \in
V_T$, we have $t_m=s^2_m(=s^3_m)$ and $|\mathcal{S}(l_1, s^1_{l_1},
s^3)| \geq |\mathcal{S}(l_1, s^1_{l_1}, s^1)|$. Thus if $t_{l_1}$ is
not a best response of player $l_1$ then $s^3_{l_1}=s^3_{{l_1}-1}$ and
$s^3_{{l_1}-1} \neq s^1_{{l_1}-1}$. Now let $l_2$ be the least index
in $V_B=\{k+1,\ldots n\}$ such that for all $j \in \{l_2, \ldots,
n\}$, $s^3_{l_1} \neq s^1_{l_1}$ and $s^3_{l_1}=s^3_n$. Let $X'=\{l_2,
l_2+1, \ldots, n\}$. Clearly, $l_2 < l_1$ and therefore, $|X'| > |X|$
and $X \subseteq X'$. Let $X:=X'$ and we repeat this process. In each
successive round, $|X|$ strictly increases and by definition, $|X|
\leq |V_B|$. Therefore, in at most $|V_B|$ rounds, either we reach a
Nash equilibrium or we reach a joint strategy $s'$ where for all $j,m
\in V_B$, $s'_j=s'_m$. In this case we go around in the cyclic order
$1,2, \ldots,k$ and update players to their best response. As earlier
we can argue that the only colour which is propagated is $s'_n$ and
therefore this improvement path terminates in a Nash equilibrium.

The above proof shows that starting from the second round, the size of
the set $X$ strictly increases and we know that $|X| \leq |V_B| = n-k$.
Each time, in the worst case, we might have to update all the
nodes in the cyclic order. Thus in the worst case the length of this
improvement path is at most $\mathcal{O}((n-k)\cdot n)$
\end{proof}

From Theorem~\ref{thm:cycle-2bonuses}, we know that simple cycles even
with weighted edges are weakly acyclic.
However, partition-cycles with weighted edges need not always have a Nash equilibrium (Example~\ref{ex:partcycle-noNE}). 
On the other hand, we show in
Theorem~\ref{thm:partcycle-noweight-bonus} that unweighted
partition-cycles with bonuses remain weakly acyclic. Thus
Theorem~\ref{thm:TARK15} can be extended to partition-cycles. 

\begin{example}
\label{ex:partcycle-noNE}
Consider the partition-cycle $G$ given in Figure~\ref{fig:partcycle}
and suppose we add weight 2 to edges $6 \to 7$ and $7 \to 8$. The
resulting game does not have a Nash equilibrium.  Note that in any
joint strategy, nodes 3, 4 and 5 have to choose the colour $b$, $c$
and $a$ respectively. Therefore, it suffices to consider strategies of
nodes 6, 7, 8, 1 and 2. Also note that in any joint strategy $s$, the
best response for players 1 and 2 is $s_8$ (the strategy of player 8
in $s$). Thus we can also restrict attention to joint strategies $s$
in which $s_1=s_2=s_8$. So let us denote a joint strategy $s$ by the
triple $(s_6,s_7,s_8)$. Below we list all such joint strategies and we
underline a strategy that is not a best response:
$(\underline{a},a,b)$, $(a,a,\underline{c})$, $(a,c,\underline{b})$,
$(a,\underline{c},c)$, $(b,\underline{a},b)$, $(\underline{b},a,c)$,
$(b,c,\underline{b})$ and $(\underline{b},c,c)$.
\qed
\end{example}

\begin{restatable}{theorem}{pcyclenoweight}
\label{thm:partcycle-noweight-bonus}
Every coordination game with bonuses on an unweighted partition-cycle
has an improvement path of length $\mathcal{O}(kn(n-k))$.
\end{restatable}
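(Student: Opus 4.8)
The plan is to adapt the proof of Theorem~\ref{thm:partcycle-noweight-nobonus}, since the combinatorial skeleton is unchanged: every node in $V_T=\{1,\ldots,k\}$ still has exactly one incoming cyclic edge and no incoming cross-edge, while the cross-edges feed only into $V_B=\{k+1,\ldots,n\}$. First I would record the bonus-aware best responses. For $i\in V_T$ with cyclic predecessor $i\ominus 1$ the payoff is $p_i(s)=[s_i=s_{i\ominus 1}]+\beta(i,s_i)$, so a best response is to match $s_{i\ominus 1}$ whenever $1+\beta(i,s_{i\ominus 1})\geq \max_{c\in C(i)}\beta(i,c)$, and otherwise to play the bonus-maximizing colour independently of the predecessor. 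For $i\in V_B$ a best response maximizes $|\{j\in N_i : s_j=c\}|+\beta(i,c)$ over $c\in C(i)$. I would then replace the tie-breaking rule (P1) by its bonus-weighted analogue: whenever $s_{i\ominus 1}$ is a best response of $i$, prefer switching to a colour $c\in C(i)$ maximizing the weighted support $|\mathcal{S}(i,c,s)|+\beta(i,c)$ among those attaining the best-response payoff.

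The scheduling and round structure would mirror the no-bonus proof: proceed around the cycle $1,2,\ldots,n$ letting each player best-respond under the new tie-break, and in the repeated phases first sweep $V_T$ in order $1,\ldots,k$ and then $V_B$ in order $k+1,\ldots,n$. A single sweep through $V_T$ still makes all top nodes best-respond, since the best response of each $i\in V_T$ depends only on the already-updated predecessor $i\ominus 1$. The primary progress measure remains the stabilized suffix $X\subseteq V_B$ of bottom nodes locked onto the propagated colour, with $|X|\leq |V_B|=n-k$; to this I would append a secondary component tracking the stabilization of the top path $V_T$.

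The main obstacle is that bonuses destroy the clean ``a single colour is propagated'' invariant underlying the $V_B$ support-monotonicity argument of Theorem~\ref{thm:partcycle-noweight-nobonus}: a top node may now abandon the propagating colour in favour of its own dominant bonus colour, so $|\mathcal{S}(j,c,s)|$ need no longer be monotone. To repair this I would partition the $V_T$ path into maximal segments delimited by \emph{barrier} nodes, where $i$ is a barrier if $\max_{c\in C(i)}\beta(i,c) > 1+\beta(i,s_{i\ominus 1})$, so that its colour is fixed by its bonus regardless of the incoming colour. Within each segment a single colour genuinely propagates by copying, so the monotonicity argument can be applied segment by segment, and the colours injected along the cross-edges into $V_B$ stabilize once every segment does.

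Finally, to obtain the stated bound I would use a lexicographic progress measure, in the spirit of the guard-based arguments used earlier for chains of cycles: one component is $|X|\leq n-k$ and the other records how much of the segmented $V_T$ path has stabilized, bounded by $k$. Because a change of the colour forced into node $1$ from node $n\in V_B$ can make a top node re-settle, a phase may advance the $V_T$ component without increasing $|X|$; but each phase strictly advances the pair lexicographically, so there are $\mathcal{O}(k(n-k))$ phases, each costing $\mathcal{O}(n)$ updates. This yields an improvement path of length $\mathcal{O}(kn(n-k))$, as claimed. Identifying the correct monotone second component and verifying that the weighted (P1) tie-break guarantees the segmentwise support growth is the technical crux I expect to require the most care.
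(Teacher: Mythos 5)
Your plan matches the paper's proof: the paper likewise schedules players cyclically with bonus-aware tie-breaking rules --- its (P2) sends each updating node of $V_T$ to a bonus-maximal colour in $\mathit{MB}(i)$ (exactly what your weighted tie-break reduces to on $V_T$, where cross-edge support is zero) and its (P3) sends nodes of $V_B$ to colours maximizing $\beta(i,c)+|\mathcal{S}(i,c,s)|$ --- then argues that the top path locks into $\mathit{MB}$-colours after a bounded prefix (your barrier/segment stabilization, expressed there as ``every update of a $V_T$ node lands in $\mathit{MB}(i)$ and thereafter only moves to another $\mathit{MB}(i)$ colour matching its predecessor'') and reruns the $X$-growth argument with the support measure replaced by $\beta+\mathcal{S}$, paying the extra factor $k$ for the top-path stabilization. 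So the scheduling, the tie-breaks, the lexicographic progress measure, and the source of the $\mathcal{O}(kn(n-k))$ bound are essentially identical to the paper's.
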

\begin{proof}[Proof sketch]
The main idea is to enforce the players to update their strategy based
on a specific priority over colours induced by the bonuses.  For each
node in $V_T$ to satisfy the priority of updates over colours induced
by the bonuses, we might have to cycle through each node and construct
the improvement path as given in the proof of
Theorem~\ref{thm:partcycle-noweight-nobonus}. Thus in the worst case,
the length of the improvement path which is constructed is
$\mathcal{O}(k\cdot(n-k)n)$. 
Details are provided in the appendix.
\end{proof}

Note that Example~\ref{ex:partcycle-noNE} shows that with just
two weighted edges between nodes in $V_B$, it is possible to construct
games which may not have a Nash equilibrium. We now show that if the
weights are only present on edges between nodes in $V_T$ or on the
cross-edges $E_p$ then the resulting game remains weakly acyclic. If
we allow bonuses on nodes then we can add weights to the cross-edges
$E_p$ and the resulting game remains weakly acyclic. On the other
hand, from Example~\ref{ex:noNE} we already know that if we allow both
weights and bonuses, even without cross-edges, there are graphs in
which the resulting game need not have a Nash equilibrium.

Given a partition cycle $G=(V_T \cup V_B, E_c \cup E_p)$, let $E_T =
(V_T \times V_T) \cap E_c$. That is, the set $E_T$ consists of all the
cyclic edges between nodes in $V_T$.

\begin{restatable}{theorem}{pcyclewtopcross}
\label{thm:partcycle-weightsTopCross-nobonus}
Every coordination game without bonuses on a partition-cycle with
weights on edges in $E_T \cup E_p$ is weakly acyclic.
\end{restatable}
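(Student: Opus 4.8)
The plan is to reduce the weighted problem to the unweighted partition-cycle analysis of Theorem~\ref{thm:partcycle-noweight-nobonus} by first isolating which weights actually influence the players' incentives. The key structural observation is that every node $i \in V_T$ has in-degree exactly one: its only incoming edge is the cyclic edge from its predecessor, since $E_p \subseteq V_T \times V_B$ guarantees that no cross-edge ever enters $V_T$. For a node with a single incoming edge the best response is simply to copy the predecessor's colour whenever that colour is available (this yields the unique positive payoff $w_{\cdot \to i} \geq 1$) and is arbitrary otherwise; in particular the \emph{value} of the weight is irrelevant. Hence the weights on $E_T$ play no role whatsoever, and we may treat those edges as unit-weight. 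The only weights that genuinely matter are those on the cross-edges $E_p$, all of which point into nodes of $V_B$, exactly the nodes with in-degree greater than one.

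With this reduction in hand I would replay the round-based construction of Theorem~\ref{thm:partcycle-noweight-nobonus}, updating players in the cyclic order $1,2,\ldots,n$ in successive rounds. For a $V_T$ node the update is the copy rule above; for a $V_B$ node $i$ I would use the weighted analogue of the tie-breaking rule (P1), redefining $\mathcal{S}(i,c,s)$ so that the relevant quantity becomes the total cross-edge weight $\sum_{j \in V_T:\, j \to i,\, s_j = c} w_{j \to i}$ supporting colour $c$, with $\mathit{MC}(i,s)$ the set of colours of maximal such weighted support. After the first round only node $1$ can fail to play a best response, exactly as before: for every node other than $1$ all in-neighbours have strictly smaller index (a $V_T$ node sees only $i-1$, and a $V_B$ node sees $i-1$ together with cross-sources of index $\leq k < i$), so they are updated earlier in the round and never re-disturbed, whereas node $1$'s predecessor $n$ is updated after it. The progress measure is again a suffix $X = \{l,\ldots,n\}$ of $V_B$ that has locked onto the colour $c^{\ast}$ currently being propagated from node $n$ through the unit edge $n \to 1$ and down the top path; I would show that $|X|$ strictly increases from one round to the next and is bounded by $|V_B| = n-k$, forcing termination in a Nash equilibrium.

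The step I expect to be the crux is re-establishing, in the weighted setting, the monotonicity that drives the growth of $X$. In the unweighted proof this was the inequality $|\mathcal{S}(j,c^{\ast},s^3)| \geq |\mathcal{S}(j,c^{\ast},s^1)|$: re-propagating the winning colour can only increase the number of cross-edges supporting it. The version I need is that the \emph{total weight} supporting $c^{\ast}$ at each bottom node is non-decreasing along the re-propagation, while the support of every competing colour is non-increasing. This holds because whenever a top node switches to $c^{\ast}$ it adds its cross-edge weight to $c^{\ast}$ at each of its bottom targets and withdraws exactly that weight from the colour it previously carried, so $c^{\ast}$ can only become more dominant; the unit cyclic edges inside $V_B$ reinforce this, since a bottom node that adopts $c^{\ast}$ passes a unit of support for $c^{\ast}$ to its successor. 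Consequently, once a bottom node strictly prefers $c^{\ast}$ it continues to, and the locked suffix cannot shrink. The delicate points to verify are that this monotonicity is not spoiled by top nodes lacking $c^{\ast}$ in their colour set (which create gaps in the downward propagation) nor by the single unit-weight junction edge $k \to k+1$, and that node $1$'s feedback through $n \to 1$ cannot reset an already-locked suffix; discharging these cases carefully is precisely where the weighted argument demands more work than its unweighted counterpart.
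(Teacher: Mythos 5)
Your proposal is correct, and for the cross-edge weights it takes a genuinely different route from the paper. You handle the $E_T$ weights exactly as the paper does: every node of $V_T$ has in-degree one (since $E_p \subseteq V_T \times V_B$), so its best response is to copy its predecessor's colour when available and the magnitude of the weight is immaterial. For the weights on $E_p$, however, the paper does \emph{not} redo the counting argument; it reduces to the unweighted case by a gadget: a cross-edge $u \to v$ of weight $w$ is simulated by splitting $u$ into a chain of $w$ clones $u_1, \ldots, u_w$, each carrying a unit cross-edge into $v$, so that whenever the clones play best responses they are unicoloured and $v$ sees $w$ unit edges supporting $s_u$; the unweighted Theorem~\ref{thm:partcycle-noweight-nobonus} is then invoked on the enlarged graph and the improvement path is projected back. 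You instead replace the cardinality $|\mathcal{S}(i,c,s)|$ by the total cross-edge weight supporting $c$ and re-establish the monotonicity invariant (support of the propagated colour non-decreasing, support of competitors non-increasing) directly in the weighted setting; this goes through because a top node switching to $c^{\ast}$ transfers exactly its own cross-weight from its old colour to $c^{\ast}$ at every target, which is the same exchange argument as in the unit-weight case. The trade-off: the paper's reduction reuses the unweighted theorem as a black box but requires checking that the simulation is faithful (clone synchronisation and the back-translation of improvement paths), and it inflates the graph by the sum of the weights, so any length bound extracted from it is only pseudo-polynomial in the weights; your direct argument needs the monotonicity lemma to be re-verified but keeps the round structure on the original $n$ nodes, yielding termination in $\mathcal{O}(n(n-k))$ steps independently of the weights. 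Since the theorem as stated only claims weak acyclicity, both are adequate, but your version is quantitatively stronger.
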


\begin{proof}
Let $G=(V_T\cup V_B,E_c \cup E_p)$ be a partition-cycle and $E_T =
(V_T \times V_T) \cap E_c$. We first show that for each weighted edge
in $E_p$ we can add a set of unweighted edges and obtain a new
partition-cycle $G'$ such that every improvement path in $G'$ can be
converted into an improvement path in $G$. Let $u \to v$ be an edge in
$E_p$ with weight $w$. Note that by definition of $G$, $u \in V_T$ and
$u \in V_B$. Let $x \to u$ and $u \to y$ be the cyclic edges in $E_c$
associated with the node $u$. We replace the node $u$ with $w$ new
nodes $u_1, \ldots,u_w$ and for all $j \in \{1, \ldots, w\}$ we set
$C(u_j)=C(u)$. We also add the following unweighted edges to the edge
set $E$.  For all $j \in \{1, \ldots, w-1\}$, $u_j \to u_{j+1} \in
E_c$, $u_j \to v \in E_p$, $u_w \to v \in E_p$ and $\{x \to u_1, u_w
\to y\} \subseteq E_c$. In any joint strategy $s$, the best response
of nodes $u_2, \ldots u_w$ would be to choose the same colour as
$u_1$. Which implies the following: the node $v$ had an incoming edge
of weight $w$ supporting the colour $s_u$ in $G$ iff in the modified
graph in any joint strategy in which the nodes $u_2, \ldots, u_w$ are
playing their best response, the node $v$ has $w$ edges supporting the
colour $s_u$.

The proof of Theorem~\ref{thm:partcycle-noweight-nobonus} shows that
it is possible to construct a finite improvement path by updating
players in the cyclic order. A crucial property which was used 
is that in each successive rounds, while updating players in $V_T$,
the only colour which is propagated is $s_n$, the colour chosen by
node $n$ in the end of the previous round. Even if the edges in $E_T$
are weighted, the property continues to hold since the best response
for each node $i \in V_T$ is still to choose the same colour as it
unique predecessor $i \ominus 1$ on the cycle, provided the colour is
in $C(i)$. Note that the edges in $E_c \setminus E_T$ are
unweighted. Thus by using a similar argument as in the proof of
Theorem~\ref{thm:partcycle-noweight-nobonus}, we can conclude that the
game is weakly acyclic.
\end{proof}

\begin{theorem}
\label{thm:partcycle-weightsCross-bonus}
Every coordination game with bonuses on a partition-cycle with weights
on edges in $E_p$ is weakly acyclic.
\end{theorem}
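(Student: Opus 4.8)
The plan is to reduce the statement to the already-settled unweighted case of Theorem~\ref{thm:partcycle-noweight-bonus} by reusing the node-splitting gadget from the proof of Theorem~\ref{thm:partcycle-weightsTopCross-nobonus}. Concretely, I would process every weighted cross-edge $u \to v \in E_p$ of weight $w$: exactly as in Theorem~\ref{thm:partcycle-weightsTopCross-nobonus}, I replace its source $u \in V_T$ by a chain of $w$ fresh copies $u_1 \to u_2 \to \cdots \to u_w$ inserted along the cycle in place of $u$ (so the cyclic edges $x \to u$ and $u \to y$ become $x \to u_1$ and $u_w \to y$), set $C(u_j) = C(u)$ for all $j$, and add unweighted cross-edges $u_j \to v$ for every $j$ (when a single $u$ carries several weighted cross-edges, I split it into $\max$-many copies and route each cross-edge from the appropriate prefix of copies). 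The only new ingredient relative to Theorem~\ref{thm:partcycle-weightsTopCross-nobonus} is the bonuses: I assign the original bonus $\beta(u,\cdot)$ to the lead copy $u_1$ and give the auxiliary copies $u_2,\ldots,u_w$ the zero bonus. Iterating over all weighted cross-edges yields an \emph{unweighted} graph $G'$; since every inserted copy lies in $V_T$ and every new cross-edge still points from $V_T$ into $V_B$, the top and bottom parts each still induce a path on the enlarged cycle, so $G'$ is a genuine partition-cycle, unweighted but carrying bonuses.

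Next I would apply Theorem~\ref{thm:partcycle-noweight-bonus} to $G'$ and pull the resulting improvement path back to $G$. Starting from an arbitrary joint strategy $s$ of $G$, I lift it to $G'$ by colouring \emph{all} copies of each split node with $s_u$; this projects to $s$ and makes every block synchronised. The structural fact driving the pullback is that each auxiliary copy $u_j$ ($j \geq 2$) has a single incoming edge (from $u_{j-1}$) and no bonus, so its unique best response is always to mirror $u_{j-1}$, while $u_1$ decides exactly as the original $u$ would, since it inherits $u$'s lone incoming cyclic edge together with $u$'s bonus. Whenever a block is synchronised, the $w$ unit cross-edges $u_1,\ldots,u_w \to v$ contribute to the payoff of $v$ precisely what the single weight-$w$ edge $u \to v$ contributes in $G$. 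I would therefore project a synchronised joint strategy of $G'$ by reading each original node off its lead copy, and map the improvement path of $G'$ to $G$ by identifying every move of a lead copy $u_1$ with the corresponding move of $u$ and deleting the (colour-copying) moves of the auxiliary copies, which leave the projection unchanged.

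The step needing the most care, and the main obstacle, is verifying that every retained move projects to a \emph{profitable} unilateral deviation in $G$. A move of a lead copy $u_1$ is harmless because its payoff depends only on $x \to u_1$ and its bonus, which coincide with the data of $u$ in $G$; the delicate case is a move of a node of $V_B$, whose $G'$-payoff matches its $G$-payoff only when every block feeding into it is synchronised at that instant. This is exactly the invariant already exploited in Theorem~\ref{thm:partcycle-weightsTopCross-nobonus}: the scheduling inherited from Theorem~\ref{thm:partcycle-noweight-nobonus} sweeps the nodes of $V_T$ in cyclic order, propagating a single colour along the top path, so each block re-synchronises as soon as it is swept and well before any $V_B$ neighbour is updated. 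The one thing left to check is that the bonuses do not disturb this invariant, and they do not: bonuses sit only on the lead copies and on the $V_B$ nodes, neither of which affects the pure copying behaviour of the auxiliary nodes, so the block-synchronisation argument carries over verbatim. Hence the pulled-back sequence is a finite improvement path of $G$ from the given initial strategy, and $G$ is weakly acyclic.
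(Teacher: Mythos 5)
Your proposal is correct and follows essentially the same route as the paper: the paper's (very terse) proof likewise converts each weighted cross-edge into a set of unweighted ones via the node-splitting chain of Theorem~\ref{thm:partcycle-weightsTopCross-nobonus} and then invokes Theorem~\ref{thm:partcycle-noweight-bonus}. Your write-up simply makes explicit the details the paper leaves implicit (placement of bonuses on the lead copy, handling multiple weighted cross-edges out of one node, and the block-synchronisation argument for pulling the improvement path back).
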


\begin{proof}
Each weighted edge in $E_p$ can be converted into a set of unweighted
edges such that the resulting graph $G'$ is still a
partition-cycle. From every finite improvement path in the
coordination game whose underlying graph is $G'$, we can construct a
finite improvement path in $G$. Thus by
Theorem~\ref{thm:partcycle-noweight-bonus}, the result follows.
\end{proof}

Finally, we assumed that the decomposition and ordering of the nodes
in the input partition-cycle graph is given in advance. The
decomposition can be computed in linear time as well.
\begin{proposition}
Checking whether a given graph $G$ is a partition-cycle and if so finding its $V_T$, $V_B$ 
and suitable ordering on these subsets of nodes can be done in $\calO(|G|)$.
\end{proposition}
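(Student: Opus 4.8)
The plan is to reduce recognition to two stages, each carried out by a constant number of linear scans: (i) recover the cyclic edge set $E_c$ from the degree structure of $G$, and (ii) test whether the remaining edges can be oriented across a single contiguous split of the recovered cycle. First I would compute the in- and out-degree of every node in one pass. The crucial structural fact, which I would isolate as a \emph{forced-edge} lemma, is this: call an edge $u \to v$ \emph{forced} if $u$ has out-degree $1$ or $v$ has in-degree $1$. Since $E_p \subseteq V_T \times V_B$, every node of $V_T$ has in-degree exactly $1$ and every node of $V_B$ has out-degree exactly $1$. Consequently no cross-edge is ever forced (its tail lies in $V_T$ and already carries the cyclic out-edge, so has out-degree $\geq 2$, and symmetrically for its head in $V_B$), whereas every cyclic edge is forced \emph{except possibly} the single boundary edge running from the last $V_T$-node to the first $V_B$-node. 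Hence, in any partition-cycle, the forced edges are exactly $E_c$ minus at most one edge, and a short case check shows each node has at most one forced in-edge and one forced out-edge.

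From this lemma the forced subgraph always assembles into either the full Hamiltonian cycle or a single Hamiltonian path spanning $V$; in the latter case $E_c$ is obtained by adding the unique edge from the path's last node to its first node, which must already be present in $E$. I would build the forced subgraph, traverse it once to confirm it is a single path or cycle covering all of $V$, and thereby either recover a candidate $E_c$ or reject $G$. Because the forced edges lie inside $E_c$ for \emph{every} admissible decomposition and already span all but one cyclic edge, the recovered $E_c$ is in fact the only possible one.

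Given the candidate cyclic order $v_1 \to v_2 \to \cdots \to v_n \to v_1$ and $E_p := E \setminus E_c$, the second stage finds the split. Let $S$ and $D$ be the tails and heads of the cross-edges; a valid partition requires $S \subseteq V_T$, $D \subseteq V_B$, $S \cap D = \emptyset$, with $V_T, V_B$ complementary nonempty arcs of the cycle. Because $V_T$ is contiguous, reading the forced labels ($S$ as top, $D$ as bottom) around the cycle must yield, cyclically, a single block of $S$-labels followed by a single block of $D$-labels. I would scan the cyclic order once, count label transitions, reject if there is more than one $S$-block or one $D$-block, and otherwise place the two arc boundaries inside the free gaps between blocks, which fixes $V_T$ and $V_B$; the required ordering is then read off by walking $E_c$ from the first $V_T$-node. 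Note that once $V_T$ is an arc, the two path conditions on $E_c \cap (V_T \times V_T)$ and $E_c \cap (V_B \times V_B)$ hold automatically, since the only cyclic edges leaving an arc are its two boundary edges. A final linear pass verifies $E_p \subseteq V_T \times V_B$ and nonemptiness of both parts; since the output is checked directly against the definition, soundness (no false positives) is immediate, and the forced-edge lemma rules out false negatives.

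I expect the forced-edge lemma to be the main obstacle, specifically the argument that $E_c$ is \emph{uniquely} and \emph{completely} recoverable: one must verify that forced edges always lie inside the true $E_c$, that they miss at most the single $V_T \to V_B$ boundary edge, and hence that closing the forced Hamiltonian path (or reading off the forced Hamiltonian cycle) yields the only admissible cyclic edge set, including the degenerate case $E_p = \emptyset$ where any arc split is valid. Once completeness and uniqueness of the forced subgraph are settled, the arc-splitting and verification steps are routine single-pass computations, so the overall bound is $\calO(|G|)$, in the same spirit as Proposition~\ref{pr:detect-chains}.
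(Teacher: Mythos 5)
Your proof is correct, but it takes a genuinely different route from the paper's. The paper works vertex-first: it takes the out-degree-$1$ nodes as candidates for $V_B$, extracts disjoint paths from the subgraph they induce, uses an arbitrary cross-edge to single out which of those paths is $V_B$, and then orders $V_B$ and $V_T$ by topologically sorting the two induced subgraphs. You work edge-first: your forced-edge lemma recovers the cyclic edge set $E_c$ itself (uniquely, up to one closing edge) from purely local degree conditions, and only afterwards do you locate the $V_T$/$V_B$ split by checking that the cross-edge tails and heads form one contiguous block each around the recovered cycle. Your route buys a sharper correctness argument: $E_c$ is pinned down as the \emph{only} admissible cyclic edge set, soundness is delegated to a final verification pass against the definition, and completeness follows directly from the lemma. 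The paper's sketch is shorter but leaves more implicit --- for instance, the out-degree-$1$ candidate set can strictly contain $V_B$ (a $V_T$ node with no outgoing cross-edges also has out-degree $1$), so the path containing a cross-edge's head may spill over into $V_T$, and the paper does not discuss why the resulting split is still admissible. Both algorithms amount to a constant number of linear scans, so both achieve the claimed $\calO(|G|)$ bound.
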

\begin{proof}
Note that the ordering of $G = (V,E)$ we are looking for defines a Hamiltonian path in $G$ 
with particular properties.
We start by selecting only the nodes in $G$ with outdegree $1$; 
these are all the nodes that can potentially be in $V_B$.
Next, we look at the graph $G' = (V_B, E \cap V_B  \times V_B)$.
First, we remove any edges from $G'$ that form a cycle using, e.g. 
depth-first search. We then topologically sort the resulting DAG.
We obtain several disjoint paths $B_1, \ldots, B_k$ as candidates for $V_B$.
We assume that there is at least one cross-edge in $G$, because otherwise the problem is trivial.
We check to which of these disjoint paths this cross-edge leads to and we set that path as $V_B$
and the rest of the nodes are set as $V_T$.
The order on $V_B$ is given by the topological order.
We then look at $G'' = (V_T, E \cap V_T  \times V_T)$.
If $G''$ has a cycle then $G$ is a not a partition-cycle.
Otherwise, topologically sorting $G''$ gives us the order of nodes in $V_T$. 
Finally, it is straightforward to test whether $V_T$ and $V_B$ satisfy the 
remaining requirements for the graph $G$ to be a partition-cycle.
\end{proof}

\section{Conclusions}
\label{sec:conclusions}

\begin{figure}
\centering
\tikzstyle{agent}=[circle,draw=black,thick, minimum size=2em,scale=0.7]
\tikzstyle{small}=[scale=0.9]
\begin{tikzpicture}[auto,>=latex',shorten >=1pt,on grid,scale=0.86]
\newdimen\R
\R=1.7cm
\newcommand{\llab}[1]{{\small $\{#1\}$}}
\newcommand{\lla}{\llab{\underline{a},b,c}}
\newcommand{\llb}{\llab{a,\underline{b},c}}
\newcommand{\llc}{\llab{a,b,\underline{c}}}
\draw (90: \R) node[agent,label=right:{\lla}] (1) {1};
\draw (90-120: \R) node[agent,label={[label distance=-4pt]below left:{\lla}}] (2) {2};
\draw (90-240: \R) node[agent,label={[label distance=-4pt]below right:{\llb}}] (3) {3};
\draw (90+15: 2*\R) node[agent,label=left:{\llb}] (4) {4};
\draw (90-15: 2*\R) node[agent,label=right:{\lla}] (5) {5};
\draw (-15: 2*\R) node[agent,label=right:{\lla}] (6) {6};
\draw (-15-30: 2*\R) node[agent,label=right:{\llc}] (7) {7};
\draw (180+45: 2*\R) node[agent,label=left:{\llc}] (8) {8};
\draw (180+15: 2*\R) node[agent,label=left:{\llb}] (9) {9};
\draw (30: 1.42*\R) node[agent,label=right:{\lla}] (A) {A};
\draw (120+30: 1.42*\R) node[agent,label=left:{\llb}] (B) {B};
\draw (240+30: 1.42*\R) node[agent,label={[label distance=-2pt]above:{\llc}}] (C) {C};
\foreach \x/\y/\w in {1/2/2,3/1/2} { %
    \draw[->, thick] (\x) to node[small] {\w} (\y) ;
}
\draw[->,thick] (2) to node[small,above] {2} (3);

\foreach \x/\y/\w in {4/1/2,1/5/3,2/6/2,2/7/3,8/3/2,9/3/3} { %
    \draw[<->, thick] (\x) to node[small] {\w} (\y) ;
}
\foreach \x/\y/\w in {5/A/3,A/6/2,B/4/2,9/B/3,8/C/2,C/7/3} { %
    \draw[<->, thick] (\x) to node[small] {\w} (\y) ;
}
\end{tikzpicture}
\caption{\label{fig:no-way-to-se}
A coordination game with trivial strong equilibria unreachable from the given initial joint strategy. 
}
\vskip-0.8em
\end{figure}

We presented natural classes of graphs for which coordination games
have improvement or c-{im\-prove\-ment} paths of polynomial size.  We
also showed that for most natural extensions of these classes, the
resulting coordination game may not even have a Nash equilibrium.
Note that although we defined bonuses as natural numbers, our results
also hold for any integer bonuses, because after increasing all
bonuses by a fixed amount, all players' incentives stay the same. 

In general, local search may not be an efficient technique to find a
Nash equilibrium or a strong equilibrium in coordination games
even when the game graph is strongly connected. 
In fact, a coordination game can have trivial strong equilibria which
cannot be reached from some of its initial joint strategies. For
example, the game in Figure \ref{fig:no-way-to-se} has three trivial
strong equilibria in which all players pick the same colour. However,
every improvement or c-improvement path from the initial joint
strategy (given by the underlined strategies) is infinite.  Moreover,
although the game graph is weighted, the weighted edges
can easily be replaced by unweighted ones
just by adding auxiliary nodes
(see Example \ref{ex:no-way-to-se} in the appendix).
Therefore, the non-existence of a finite improvement or c-improvement
path in coordination games even for strongly connected unweighted graphs does not imply
the non-existence of \NEs or strong equilibria.

In proving our results, we used various generalised potential
techniques, and exploited structural properties of the classes of
graphs studied.  It would be interesting to see whether there is a
common progress measure that works for all the classes of
graphs that we consider as well as for more general ones.  In
particular, we conjecture that coordination games on unweighted graphs
with indegree at most two are c-weakly acyclic. Extensive computer
simulations seem to support this conjecture.  This class of graphs
strictly generalises the unweighted \ochains and \cchains that we
showed to be c-weakly acyclic.
We also leave open the existence of finite c-improvement paths in weighted open chains of cycles and partition-cycles. 
Although they seem likely to exist, 
unicoloured simple cycles introduced by coalition deviations from \NEs
can disappear when trying to reach a new \NE after them,
so a detailed analysis of the interplay between these two steps is required to prove 
their c-weak acyclicity. 

\subsection*{Acknowledgements}
We are grateful to Krzysztof Apt for useful discussions. Sunil Simon
was supported by the Liverpool-India fellowship provided by the
University of Liverpool. Dominik Wojtczak was supported by EPSRC grant EP/M027651/1.

\bibliographystyle{abbrv}
\bibliography{e,clustering,extrabib}

\section*{Appendix A -- Sequence of simple cycles}
\chainnoweight*

\begin{proof}
Let $\{\mathcal{C}_j \mid j \in \{1,2, \ldots, m\}\}$ be the set of
simple cycles which constitute the \ochain{} $\mathcal{N}$. The
maximum in-degree of any node in $\mathcal{N}$ is two and in each
$\mathcal{C}_j$, there are at most two nodes $u$ and $v$ with
in-degree two with one of the incoming edges $x \to u$ from a node $x$
in $\mathcal{C}_{j+1}$ if $j < m$ and the other $y \to v$ from a node
$y$ in $\mathcal{C}_{j-1}$ if $j >1$. Given a joint strategy $s$, we
can view these external incoming edges into $\mathcal{C}_{j}$ as
bonuses to the nodes $u$ and $v$. That is, $\beta_j^s(u,c)=1$ if
$s_x=c$ and $0$ otherwise, $\beta_j^s(v,c)=1$ if $s_y=c$ and $0$
otherwise. For all $i \in \{1^j,\ldots,n^j\} \setminus \{u,v\}$, for
all $c$, $\beta_j^s(i,c)=0$.

For each $j \in \{1,2, \ldots, m\}$ and a joint strategy $s$, consider
the cycle $\mathcal{C}_j$ along with the bonus $\beta_j^s$. This
induces a coordination game on a (unweighted) simple cycle with
bonuses. By Theorem~\ref{thm:TARK15} such a coordination game is
weakly acyclic.

Given a joint strategy $s$, for $j \in \{1,\ldots,m\}$, call the node
$1^j$ a {\it break point} in $s$ if the following two conditions are
satisfied:
\begin{itemize}
\item[(C1)] for all $k \leq j$, for all $i \in \{1^k,\ldots                            
  n^k\}$, $s_{i}$ is a best response to $s_{-i}$,
\item[(C2)] $s_{1^j} = s_{n^j}$.
\end{itemize}

For a joint strategy $s$, let $\guard(s)$ be the largest $j
\in \{1, \ldots, m-1\}$ such that $1^j$ is a break point in $s$, if no
such $j$ exists then $\guard(s)=0$. 

Let $s^0$ be an arbitrary joint strategy in the game whose underlying
graph is $\mathcal{N}$. We construct a finite improvement path
inductively as follows. Initially, the improvement path consists of
the joint strategy $s^0$. Suppose we have constructed an improvement
path $\rho'$ such that $\lasts(\rho')=s'$. Choose the least $j \in
\{1,\ldots,m\}$ such that there is a node in $\mathcal{C}_j$ which is
not playing its best response in $s'$. Apply Theorem~\ref{thm:TARK15}
to the game induced by $\mathcal{C}_j$ and $\beta_j^{s'}$ to extend
the improvement path. 

In other words, the procedure works as follows: Suppose there is a
node in $\mathcal{C}_1$ which is not playing its best response in
$s^0$. Start with the coordination game induced by $\mathcal{C}_1$ and
$\beta_1^s$. By Theorem~\ref{thm:TARK15}, there is a finite
improvement path that terminates in a joint strategy $s^1$ such that
for all $i \in \{1^1,\ldots n^1\}$, $s^1_{i}$ is a best response to
$s^1_{-i}$. 

Now suppose we have constructed a partial improvement path $\rho$
where $\lasts(\rho)=s$ and let $j$ be the largest index $j \in \{1,
\ldots m-1\}$ such that for all $k \leq j$, $i \in \{1^k,\ldots
n^k\}$, $s_{i}$ is a best response to $s_{-i}$.
Consider the coordination game induced by the $\mathcal{C}_{j+1}$ and
$\beta_{j+1}^{s^j}$. By Theorem~\ref{thm:TARK15}, there is a finite
improvement path that terminates in a joint strategy $s^{j+1}$ such
that for all $i \in \{1^{j+1},\ldots n^{j+1}\}$, $s^{j+1}_{i}$ is a
best response to $s^{j+1}_{-i}$. Now since the two cycles
$\mathcal{C}_{j}$ and $\mathcal{C}_{j+1}$ share a node,
i.e.\ $1^j=k^{j+1}$. It is possible that in the joint strategy
$s^{j+1}$, the node $2^j$ is not playing its best response any
longer. To avoid multiple subscripts, let us denote the node $2^j$ by
$2$, $1^j$ by $1$. So we have
that $s^{j+1}_{2}$ is not a best response to $s^{j+1}_{-2}$. Note that
by assumption the node $2^j$ was playing its best response in the
joint strategy $s^j$. And the only node in $\mathcal{C}_j$ that could
possibly change its strategy in $s^{j+1}$ is $1^j$. 
Assume that node 2 has a unique predecessor (or the in-degree of node
2 is 1). Then we also have, $s^{j+1}_{1} \neq s^{j+1}_{2}$,
$s^{j+1}_{1} \in C(s^{j+1}_{2})$ and $p_{2}(s^{j+1}_{1}, s^{j+1}_{-2})
\geq p_{2}(c, s^{j+1}_{-2})$ for all $c \in C(2^j)$. Let the node
$2^j$ switch to the colour $s^{j+1}_{1}$. We then update the nodes in
the cyclic order in $\mathcal{C}_j$ successively if they are not
playing their best response. We then do the same procedure for cycles
in the order $\mathcal{C}_{j-1}, \ldots,
\mathcal{C}_{\guard(s^{j+1})}$.

Now suppose in this sequence of updates the only colour which is
propagated is $s^{j+1}_{1}$. Then we have reached a joint strategy in
which all nodes on cycles $\mathcal{C}_{j+1}, \mathcal{C}_{j}, \ldots,
\mathcal{C}_{\guard(s^{j+1})}$ are playing their best response. So the
number of cycles playing their best response has strictly
increased. If while propagating down the sequence $\mathcal{C}_{j-1},
\ldots, \mathcal{C}_{1}$ a new colour is introduced then note that
this new colour can only be introduced by a node of indegree
2. Suppose the first instance in the improvement path a new colour is
introduced is by node $1^l$ for $l < j$. Let $s^1$ be the joint
strategy before node $(k-1)^{l+1}$ updates to its best response and
$s^2$ be the joint strategy before node $1^l$ changes to the colour $c
\neq s^{j+1}_{1}$ (which is its best response). Recall that the node
$1^l$ is same as the node $k^{l+1}$ for some $k \in \{1,\ldots,n\}$.
Since we proceed in the cyclic order in $\mathcal{C}_{l}$, we know
that in $s^1$ node $1^l$ was playing a best response where as by
assumption in $s^2$ node $1^l$ is not. The only difference between
$s^1$ and $s^2$ is in the strategy of node $(k-1)^{l+1}$ and by
assumption, the strategy of node $(k-1)^{l+1}$ in $s^2$ is same as
$s^{j+1}_{1}$. Since $c \neq s^{j+1}_{1}$, $c \neq s^2_{1^l}$ and $c$
is a best response for node $1^l$ to the joint strategy $s^2_{-1^l}$,
it implies that $s^2_{n^l}=c$. Let $s^3$ be the joint strategy
obtained from $s^2$ by having $s^3_{1^l}=c$.  This implies that node
$1^l$ satisfies condition (C2) in the joint strategy $s^3$. If it also
satisfies condition (C1) then we have identified a break point. It
also follows that $\guard(s^3) > \guard(s^{j+1})$ and we have strictly
reduced the number of cycles we need to consider.

If condition (C1) is not satisfied in $s^3$ then the only node not on
its best response in $\mathcal{C}_l$ is $2^l$. Apply the same
propagation and let $1^k$ be the last node in this sequence which
introduces a new colour $c'$ and let $s^4$ be the joint strategy
obtained after node $1^k$ switches. By the same argument it holds that
condition (C2) is satisfied by node $1^k$ in the joint strategy
$s^4$. Now update the nodes in the cyclic order and in the sequence
$\mathcal{C}_k, \mathcal{C}_{k-1}, \ldots,
\mathcal{C}_{\guard(s^{j+1})}$. It can be verified that the only
colour propagated is $c'$. Let $s^5$ be the resulting strategy in
which all the nodes in the cycles $\mathcal{C}_k, \mathcal{C}_{k-1},
\ldots, \mathcal{C}_{\guard(s^{j+1})}$ are playing their best
response. This implies that the node $1^k$ is a break point in
$s^5$. Thus we have $\guard(s^5) > \guard(s^{j+1})$ and we can repeat
the same procedure inductively for the cycles
$\{\mathcal{C}_{\guard(s^5)+1}, \ldots, \mathcal{C}_m\}$.

The improvement path constructed in Theorem~\ref{thm:TARK15} is of
length $\mathcal{O}(n)$. Each time the number of cycles playing the
best response increase, a single colour can be propagated down the
entire chain of cycles. In the worst case, the value of the guard can
increase by 1 at the end of each phase. Thus in the worst case, the
length of the improvement path that is constructed is
$\mathcal{O}(nm^2)$.
\end{proof}

\wchainup*

\begin{proof}
As in the proof of Theorem~\ref{thm:necklace-noweight-nobonus}, given
a joint strategy $s$, we can view the external incoming edges into
$\mathcal{C}_{j}$ as bonuses to the corresponding nodes. The only
difference in this case is that the value of the bonus instead of
being 1, is the weight of the corresponding edge. Let $\beta^s_j$
denote this bonus function. The cycle $\mathcal{C}_{j}$ along with
$\beta^s_j$ defines a coordination game on a weighted simple cycle
with at most two nodes having non-trivial bonuses. By
Theorem~\ref{thm:cycle-2bonuses}, such a coordination game is weakly
acyclic. 

Let $s^0$ be an arbitrary joint strategy in the game whose underlying
graph is $\mathcal{N}$. We construct a finite improvement path
inductively as follows. Initially, the improvement path consists of
the joint strategy $s^0$. Suppose we have constructed an improvement
path $\rho'$ such that $\lasts(\rho')=s'$. Choose the least $j \in
\{1,\ldots,m\}$ such that there is a node in $\mathcal{C}_j$ which is
not playing its best response in $s'$. Apply
Theorem~\ref{thm:cycle-2bonuses} to the game induced by
$\mathcal{C}_j$ and $\beta_j^{s'}$ to extend the improvement
path. Since $w_1^j > w_2^j$ for all $j$, we can show that the partial
improvement path $\rho$ that is constructed in this manner satisfies
the following invariant:

\begin{itemize}
\item[(I)]Let $\lasts(\rho)=s$ and let $j$ be the largest index $j \in
  \{1, \ldots m-1\}$ such that for all $k \leq j$, $i \in \{1^k,\ldots
  n^k\}$, $s_{i}$ is a best response to $s_{-i}$. If $s_{1^k} \neq
  s_{n^k}$ then $s_{n^k} \not\in C(1^k)$.
\end{itemize}

The invariant asserts that if in the strategy $s$, the choice of the
nodes $1^k$ and its unique predecessor $n^k$ in $\mathcal{C}_k$ are
not the same then the colour chosen by $n^k$ is not in the available
colours for $1^k$.

To see how the above process works, suppose there is a node in
$\mathcal{C}_1$ which is not playing its best response in $s$. Start
with the coordination game induced by $\mathcal{C}_1$ and
$\beta_1^s$. By Theorem~\ref{thm:cycle-2bonuses}, there is a finite
improvement path that terminates in a joint strategy $s^1$ such that
for all $i \in \{1^1,\ldots n^1\}$, $s^1_{i}$ is a best response to
$s^1_{-i}$. Since $w_1^1 > w_2^1$ the invariant (I) holds for the node
$1^1$.

Now suppose we have constructed a improvement path $\rho$ where
$\lasts(\rho)=s$ and let $j$ be the largest index $j \in \{1, \ldots
m-1\}$ such that for all $k \leq j$, $i \in \{1^k,\ldots n^k\}$,
$s_{i}$ is a best response to $s_{-i}$ and the invariant (I)
holds.  Consider the coordination game induced by the
$\mathcal{C}_{j+1}$ and $\beta_{j+1}^{s^j}$. By
Theorem~\ref{thm:cycle-2bonuses}, there is a finite improvement path
that terminates in a joint strategy $s^{j+1}$ such that for all $i \in
\{1^{j+1},\ldots n^{j+1}\}$, $s^{j+1}_{i}$ is a best response to
$s^{j+1}_{-i}$. Now since the two cycles $\mathcal{C}_{j}$ and
$\mathcal{C}_{j+1}$ share a node, i.e.\ $1^j=k^{j+1}$. It is possible
that in the joint strategy $s^{j+1}$, the node $2^j$ is not playing
its best response any longer. To avoid multiple subscripts, let us
denote the node $2^j$ by $2$, $1^j$ by $1$, $n^j$ by $n$ and
$(k-1)^{j+1}$ by $k-1$. So we have that $s^{j+1}_{2}$ is not a best
response to $s^{j+1}_{-2}$. Note that by assumption the node $2^j$ was
playing its best response in the joint strategy $s^j$. And the only
node in $\mathcal{C}_j$ that could possibly change its strategy in
$s^{j+1}$ is $1^j$. If $1^j$ changes its strategy then this means that
$s^{j+1}_1 \neq s^{j+1}_n$. By invariant (I), this means $s^{j+1}_n
\not\in C(1)$ and so $s^{j+1}_1=s^{j+1}_{k-1}$. We also have,
$s^{j+1}_{1} \neq s^{j+1}_{2}$, $s^{j+1}_{1} \in C(s^{j+1}_{2})$ and
$p_{2}(s^{j+1}_{1}, s^{j+1}_{-2}) \geq p_{2}(c, s^{j+1}_{-2})$ for all
$c \in C(2^j)$. Let the node $2^j$ switch to the colour
$s^{j+1}_{1}$. We then update the nodes in the cyclic order in
$\mathcal{C}_j$ successively if they are not playing their best
response. It can be verified that for every node which is not playing
its best response, the colour $s^{j+1}_{1}$ is a best
response. Therefore the only colour which is propagated is
$s^{j+1}_{1}$. So this sequence of updates terminate in a joint
strategy in which all the nodes in $\mathcal{C}_j$ and
$\mathcal{C}_{j+1}$ is playing their best response.

In this resulting joint strategy it could be that the node $2^{j-1}$
is not playing the best response (since the node $1^{j-1}=k^j$
switched). Again by the same reasoning, and by invariant (I), we can
argue that in this case we can update the strategies of the players
such that only the colour $s^{j+1}_{1}$ is propagated. Continuing in
this manner we arrive at a joint strategy in which all nodes on cycles
$\mathcal{C}_1, \ldots, \mathcal{C}_{j+1}$ are playing their best
response. Since $w_1^{j+1} > w_2^{j+1}$ the invariant (I) continues to
hold. In case the weights on the incoming edges are not distinct, then
depending on the initial joint strategy $s^0$ it is possible that a
new colour is introduced when we propagate down the chain
$\mathcal{C}_j,\ldots,\mathcal{C}_1$. In this case we can identify
{\it break points} and use a similar technique as done in the proof of
Theorem~\ref{thm:necklace-noweight-nobonus} to identify a progress
measure.
\end{proof}

\ochainweight*
\begin{proof}
Let $\mathcal{N}$ be the \ochain{} consisting of the sequence of
weighted cycles $\mathcal{C}_1, \ldots, \mathcal{C}_m$.%
to combine
A {\it block} $B_j$ is a sequence of simple cycles in
$\mathcal{N}$. We can represent $\mathcal{N}$ as a sequence of blocks
which we define inductively as follows: The block $B_1$ consists of
the sequence of cycles $\mathcal{C}_1,\ldots,\mathcal{C}_{l}$ such
that for all $k \in \{1,\ldots, l-1\}$, $w_1^k > w_2^k$ or for all $k
\in \{1,\ldots, l\}$, $w_1^k < w_2^k$.

Suppose we have inductively constructed the block $B_j$ and let
$\mathcal{C}_p$ be the last cycle in $B_j$. Then $B_{j+1}$ consists of
the sequence of cycles $\mathcal{C}_{p+1}, \ldots,
\mathcal{C}_{q}$ such that one of the following conditions hold, 
\begin{itemize}
\item for all $k \in \{p+1,\ldots, q-1\}$, $w_1^k > w_2^k$ and if $q
  \neq m$ then $w_1^q < w_2^q$,
\item for all $k \in \{p+1,\ldots, q-1\}$, $w_1^k < w_2^k$ and if $q
  \neq m$ then $w_1^q > w_2^q$,
\end{itemize}

Thus the \ochain{} consisting of the sequence of weighted cycles can
now be represented as a sequence of blocks $B_1, \ldots, B_l$.  The
pair of blocks $B_i$ and $B_{i+1}$ share a node in common. Let $s^0$
be an arbitrary joint strategy in the game whose underlying graph is
$\mathcal{N}$. We construct a finite improvement path inductively as
follows. Initially, the improvement path consists of the joint
strategy $s^0$. Suppose we have constructed an improvement path
$\rho'$ such that $\lasts(\rho')=s'$. Choose the least $j \in
\{1,\ldots,l\}$ such that there is a node in the block $B_j$ which is
not playing its best response in $s'$. Let $B_j$ consists of the
cycles $\mathcal{C}_p,\ldots,\mathcal{C}_{q}$. If for all $k \in \{p,
\ldots, q-1\}$, $w_1^k > w_2^k$ then apply
Lemma~\ref{lm:weighted-chain1} to the sequence
$\mathcal{C}_p,\ldots,\mathcal{C}_{q}$ with the possibility of bonus
to a node in $\mathcal{C}_p$ and $\mathcal{C}_{q}$ to extend the
improvement path. If for all $k \in \{p, \ldots, q-1\}$, $w_1^k <
w_2^k$ then apply Lemma~\ref{lm:weighted-chain2} to extend the
improvement path.

The proof that this procedure constructs a finite improvement path is
similar to the proof of
Theorem~\ref{thm:necklace-noweight-nobonus}. Suppose we have
constructed a partial improvement path $\rho$ where $\lasts(\rho)=s$
and $j$ is the largest index such that all nodes in blocks $B_1,
\ldots, B_j$ are playing the best response in $s$. Consider the block
$B_{j+1}$, by applying either Lemma~\ref{lm:weighted-chain1} or
Lemma~\ref{lm:weighted-chain2} (depending on the case), we can extend
the improvement path to $\rho^1$ such that in
$s^1=\lasts(\rho^1)$. Let us assume that the block $B_{j+1}$ consists
of the cycles $\mathcal{C}_{p+1},\ldots,\mathcal{C}_{q}$. Then $B_{j}$
and $B_{j+1}$ share a common node, $1^p$. If the strategy of the node
$1^p$ in $s$ and $s^1$ is the same, then in $s^1$ we have strictly
increased the number of blocks playing the best response. Suppose
$s^1_{1^p} \neq s_{1^p}$, then there are two cases to analyse. Suppose
in the block $B_j$, for all cycles $\mathcal{C}_k$, $w_1^k >
w_2^k$. Then we can argue that the only colour which is propagated is
$s^1_{1^p}$ and therefore, after applying
Lemma~\ref{lm:weighted-chain1}, the number of blocks playing the best
response increases.  Suppose in the block $B_j$, for all cycles
$\mathcal{C}_k$, $w_1^k < w_2^k$. By the procedure explained in
Lemma~\ref{lm:weighted-chain2} we can reach a joint strategy $s^2$
such that all nodes in $B_j$ is playing their best response. Now if
$s^2_{1^p}=s^1_{1^p}$ then all the nodes in $B_{j+1}$ is also playing
their best response and therefore, the number of blocks playing the
best response increases. Suppose $s^2_{1^p} \neq s^1_{1^p}$ then it
has to be the case that $s^2_{1^p} = s^2_{n^p}$. Note that by
definition of blocks, $w_1^p > w_2^p$. Like in the proof of
Theorem~\ref{thm:necklace-noweight-nobonus} we can define $1^p$ to be
a break-point in $s^2$ since $w_1^p > w_2^p$, $s^2_{1^p} = s^2_{n^p}$
and all nodes in $B_j$ are playing their best response. Similar to the
proof of Theorem~\ref{thm:necklace-noweight-nobonus} we can argue that
after each such phase, either the number of block playing the best
response strictly increases or the value of the maximal break point
strictly increases. 

The improvement path constructed by applying
Lemma~\ref{lm:weighted-chain1} and Lemma~\ref{lm:weighted-chain2} can
be of length $\mathcal{O}(nm^2)$. While composing this path we might
have to propagate colours down the chain. However, we can argue that
we always make progress by at least one block. Thus in worst case, the
length of the improvement path can be $\mathcal{O}(nm^3)$.
\end{proof}

\cchainnoweight*

\begin{proof}
Let $\{\mathcal{C}_j \mid j \in \{1,2, \ldots, m\}\}$ be the set of
simple cycles which constitute the graph $\mathcal{C}$. In each
$\mathcal{C}_j$, there is exactly two nodes with indegree two. By the
definition of $\mathcal{C}$, the simple cycles $\mathcal{C}_1$ and
$\mathcal{C}_m$ share one node $1^m=k^1$ for some $k \in \{1, \ldots
n\}$. Let $s^0$ be an arbitrary joint strategy in the game whose
underlying graph is $\mathcal{C}$. The idea of the proof is the view
the sequence of cycles $\mathcal{C}_1, \ldots, \mathcal{C}_{m-1}$ as
an \ochain{}. By Theorem~\ref{thm:necklace-noweight-nobonus}, there is
a finite improvement path starting at $s^0$ and terminating in $s^1$
such that for all nodes in cycles $\mathcal{C}_1, \ldots,
\mathcal{C}_{m-1}$ are playing their best response in $s^1$. Now we
update the strategies of nodes in $\mathcal{C}_{m}$ in the cyclic
order, let the resulting joint strategy be $s^2$ if the nodes $1^m$
and $1^{m-1}$ choose the same strategy in both $s^1$ and $s^2$ then we
have constructed the finite improvement path.

Suppose $s^2_{1^{m-1}} \neq s^1_{1^{m-1}}$ and $s^2_{1^{m-1}} =
s^2_{(k-1)^{m}}$ (where the nodes $1^{m-1}$ and $k^m$ are the
same). In $s^2$ the node $2^{m-1}$ may no longer be playing the best
response. We proceed in the reverse order and update the nodes in the
cycles $\mathcal{C}_{m-1}, \mathcal{C}_{m-2}, \ldots, \mathcal{C}_{1},
\mathcal{C}_{m}$. If no new colour is introduced and the only colour
which is propagated is $s^2_{1^{m-1}}$ then the improvement path
terminates after updating nodes in $\mathcal{C}_{m}$. If a new colour
$c' \neq s^2_{(k-1)^{m}}$ is introduced then let $1^q$ be the first
time this happens while updating players in the order of cycles
$\mathcal{C}_{m-1}, \mathcal{C}_{m-2}, \ldots, \mathcal{C}_{1}$ (note
that a new colour can be introduced only by a node with indegree
2). Let $s^3$ be the resulting joint strategy, then due to the order
of scheduling nodes, it follows that $s^3_{(k-1)^{q+1}} \not\in
C(1^q)$ and $s^3_{1^q}= s^3_{n^q}$. Each time a new colour is
introduced, for the node involved, the above condition is
satisfies. In other words, the node forms a {\it break point} for that
particular joint strategy as defined in the proof of
Theorem~\ref{thm:necklace-noweight-nobonus}. The important observation
is that, since the new colour of node $1^q$ is supported by the node
$n^q$, the payoff for $1^q$ is at least 1 and therefore while we
update nodes in the reverse order of cycles, if no more new colours
are introduced, then the only colour which is propagated further down
the chain is $s^3_{1^q}$ and then the path terminates at
$(k-1)^{(q-1)}$. Other new colours could be introduced in this
propagation. However, the node which introduces the new colour is then
a break point. Let $1^r$ be the last node where a new colour is
introduced and the resulting joint strategy be $s^4$. This implies
that $s^4_{(k-1)^{r+1}} \not\in C(1^r)$ and $s^3_{1^r}=
s^3_{n^r}$. Now we schedule the cycles $\mathcal{C}_r,
\mathcal{C}_{r-1}, \ldots \mathcal{C}_{q+1}$. The only colour which is
propagated is the colour of $1^r$ and the node $1^q$ does not update
its strategy since the colour chosen by $1^q$ and $n^q$ is the
same. So after this, all nodes in the cycles $\mathcal{C}_r,
\mathcal{C}_{r-1}, \ldots \mathcal{C}_{q+1}$ are on their best
response. Let the resulting joint strategy be $s^5$. It could still be
that $2^r$ is not on its best response (since $1^r$ updated the colour
to a new colour). We now update the nodes in the order of cycles
$\mathcal{C}_{r}, \mathcal{C}_{r+1}, \ldots, \mathcal{C}_{q}$, this
propagates the colour $s^4_{1^r}(=s^5_{1^r})$. If there is a node
$1^l$ such that the colour chosen by $1^l$ is same as that of $n^l$
and $s^4_{1^r} \not\in C(1^l)$ then the propagation stops. If not,
then the same colour $s^4_{1^r}$ is propagated and the improvement
path terminates at the cycle $\mathcal{C}_{r+1}$.

If $s^2_{1^{m-1}} = s^1_{1^{m-1}}$ it could still be that $s^2_{1^{1}}
\neq s^1_{1^1}$. In this case we update the players in the increasing
order of cycles $\mathcal{C}_1, \ldots \mathcal{C}_n$ and using a
similar argument as above, we can show that a finite improvement path
can be constructed.
\end{proof}

\secchain*

\begin{proof}
Let $s$ be a \NE that this game reaches via
an improvement path of length $\mathcal{O}(nm^2)$
as constructed in Theorem \ref{thm:necklace-cycle-noweights}. 
If $s$ is a strong equilibrium then we are done.
Otherwise there exists a coalition $K$ with 
a profitable deviation, $s'$, from $s$.
Due to Lemma \ref{lem:unicolor-cycle}, the coalition $K$ has 
to include at least one simple cycle, $\mathcal{C}$, 
switching to the same colour in $s'$.
This can be one of the cycles $\mathcal{C}_i$ or
one of the two cycles going around the whole game graph 
containing the set of nodes $A = \{1^j | 1 \leq j \leq m\}$.

Note that $s'$ may not be a \NE but 
because the game is weakly acyclic there is a finite improvement path
which leads to a \NE $s''$ from $s'$.
We now show that $s''|_{\mathcal{C}} = s'|_{\mathcal{C}}$.
Let $s^*$ be a strategy profile along the path from $s'$ to $s''$
when for the first time a node, $i$, from $\mathcal{C}$ switches its colour.
We have $p_i(s^*) \leq 1$ %
because
$i$ has at most two incoming edges and one of them is from a node in $\mathcal{C}$.
At the same time, $p_i(s^*) \geq p_i(s') + 1 \geq p_i(s) + 2 \geq 2$, because
the deviation of $i$ to $s^*_i$ is assumed to be profitable and
so is the deviation, as part of coalition $K$, to $s'$; a contradiction.

Now again, either $s''$ is a strong equilibrium, and we are done, or
there exists a new coalition $K'$ with a profitable deviation, $s'''$, from $s''$.
We claim that either $\mathcal{C} = A$ or no node from $\mathcal{C}$ can be part of $K'$.
Any node $i \in \mathcal{C}\setminus A$ has only one incoming edge and so
node $i$ cannot be part of $K'$ and improve any further from $p_i(s'') = 1$. 
Moreover, any successor of $i$ in $\mathcal{C}$ cannot be part of $K'$ either,
because it would need to switch to a different colour than $i$ and 
so cannot improve his payoff of 1 in $s''$.
It follows that either $\mathcal{C} \cap K' = \emptyset$
or $\mathcal{C}\setminus A$ is empty, which implies $\mathcal{C} = A$.
In the latter case, every simple cycle, which has to be part of coalition $K'$, 
has a nonempty intersection with $A$.
Such a node, $i$, would need to improve its payoff to $2$, because $p_i(s'') \geq 1$,
so both of its predecessors have to belong to $K'$. In particular, 
its predecessor in $A$. It follows that $A \subseteq K'$. Furthermore, all predecessors
of nodes in $A$ should belong to $K'$, but this includes all nodes of the game.
Therefore, all nodes in the game have to switch to the same colour which would form a strong equilibrium.
It follows that there can be at most one profitable coalition deviation after coalition $\mathcal{C} = A$ deviates.
So we can safely ignore this special case in the analysis below
and assume that always $\mathcal{C} \cap K' = \emptyset$.

Finally, we construct a finite c-improvement path $\rho = s^{0,0}$,
$s^{1,0}$, $s^{1,1}$, $\ldots, s^{1,k_1}$, $s^{2,0}$, $s^{2,1}$,
$\ldots, s^{2,k_1}$, $s^{3,0}$, $\ldots$ as follows.  It starts with
$s^{0,0} = s$ and we stipulate $k_0 = 0$.  For any $j \geq 1$,
strategy profile $s^{j,0}$ is a result of a profitable deviation by
any coalition from $s^{j-1,k_{j-1}}$.  If there is no such deviation
the path is finished and $s^{j-1,k_{j-1}}$ is a strong equilibrium.
Otherwise, although $s^{j,0}$ may not be a \NE, the game is weakly
acyclic and thanks to Theorem~\ref{thm:necklace-cycle-noweights} there
exist an improvement path $s^{j,1}, s^{j,2}, \ldots, s^{j,k_j}$ of
length $\mathcal{O}(nm^2)$ which reaches a \NE $s^{j,k_j}$.  We know
that in each $s^{j,0}$ at least one simple cycle changes colour and
none of its nodes change colour afterwards.  This shows that the
number of non-unilateral coalition deviation is at most equal to the
number of different simple cycles in the game graph, which is equal to
$m$.  Therefore, $\rho$ is a c-improvement path of length
$\mathcal{O}(nm^3)$.
\end{proof}

\section*{Appendix B --  Partition cycle}

\pcyclenoweight*
\begin{proof}
Consider the initial joint strategy $s^0$. We construct a finite
improvement path starting in $s^0$ by proceeding in the cyclic order
and updating players' strategies. The argument that this results in a
finite improvement path, is very similar to the proof of
Theorem~\ref{thm:partcycle-noweight-nobonus}. The main idea is to
enforce the players to update their strategy based on a specific
priority over colours induced by the bonuses. Let us define
$\mathit{MB}(i)=\{c \in C(i) \mid \text{ for all } c' \in C(i),
\beta(i,c) \geq \beta(i,c')\}$ and $\mathit{Max}(i,s)=\{c \in C(i)
\mid \text{ for all } c' \in C(i), \beta(i,c) + \mathcal{S}(i,c,s)
\geq \beta(i,c') + \mathcal{S}(i,c',s)\}$.

Given a partial improvement path $\rho$ with $\lasts(\rho)=s$, if
$s_i$ is not a best response to $s_{-i}$ and $i \in V_T$, then we
update the strategy of $i$ such that it satisfies the following
property (P2):

\begin{itemize}
\item[(P2)] if $s_{i \ominus 1} \in \mathit{BR}(i, s_{-i})$ and there
  exists a $c \in \mathit{MB}(i)$ such that $p_i(c,s_{-i})=p_i(s_{i
    \ominus 1}, s_{-i})$ then player $i$ switches to $c$ (clearly, in
  this case $c \in \mathit{BR}(i, s_{-i})$ as well).
\end{itemize}

\noindent For $i \in V_B$, we update the strategy of $i$ such that it
satisfies the property (P3):
\begin{itemize}
\item[(P3)] If $s_{i \ominus 1} \in \mathit{BR}(i, s_{-i})$ and there
  exists a $c \in \mathit{Max}(i,s)$ such that $p_i(c,s_{-i})=p_i(s_{i
    \ominus 1}, s_{-i})$ then player $i$ switches to $c$ (clearly, in
  this case $c \in \mathit{BR}(i, s_{-i})$ as well).
\end{itemize}

For all $i \in V_T$, if in an improvement path, player $i$ updates its
strategy then by (P2), it switches to a colour in
$\mathit{MB}(i)$. Due to (P2) and the fact that $i$ has a unique
incoming edge, we can verify that in any subsequent joint strategy
$s^1$, if $i$ updates its strategy to a colour $c'$ then it has to be
that $s^1_{i \ominus 1}=c'$ and $c' \in \mathit{MB}(i)$. Thus we can
assume that after some finite prefix of the improvement path
constructed by updating players in the cyclic ordering, for all nodes
$i \in V_T$, $i$ is choosing a strategy in $\mathit{MB}(i)$.

Let $s^2$ be the resulting joint strategy. Consider the set $X$
constructed at this stage as defined in the proof of
Theorem~\ref{thm:partcycle-noweight-nobonus}. Let $X=\{l_1,\ldots
n\}$. By the construction of $X$, we have that for all $j,k \in X$,
$s^2_j=s^2_k$. By definition of the set $X$, all the nodes $j \in X$
have updated its strategy, and therefore, they conform to property
(P3). For a node $j \in X$, let $s^2_j=c^1$. From (P3), it follows
that for all $j \in \{l_1+1,\ldots n\}$, for all $c^2 \in C(j),
\beta(j,c^1) + \mathcal{S}(j,c^1,s^2)+1 > \beta(j,c^2) +
\mathcal{S}(i,c^2,s^2)$. For node $l_1$, if $s^2_{l_1-1} \neq
s^2_{l_1}$ then $s^2_{l_1} \in \mathit{Max}(l_1,s^2)$ otherwise,
$s^2_{l_1}$ satisfies the same property as above. This implies that in
the next cyclic round of updates, for each node $j\in X$, either $j$
updates to the same strategy as its unique predecessor on the cycle or
$j$ is already on a best response strategy which implies that the
resulting joint strategy is a Nash equilibrium. 
Thus following the argument given in the proof of
Theorem~\ref{thm:partcycle-noweight-nobonus} a finite improvement path
can be constructed.  
\end{proof}

\begin{example}
\label{ex:no-way-to-se}
Consider the coordination game graph in Figure \ref{fig:no-way-to-se}. 
This game graph is strongly connected and in fact there are only three edges missing
to turn it into an undirected graph.
Also, although the game graph is weighted, the weighted edges
can easily be replaced by unweighted ones
just by adding auxiliary nodes without affecting the strong connectedness of the graph.
At the same time, the behaviour of the game on this new unweighted graph 
will essentially be the same as on the original one.
Note that coordination games on undirected unweighted graphs are
known to have FIP \cite{ARSS14}. 
If we do not require strong connectedness of the game graph, 
this example can be slightly simplified by
removing nodes A, B, C and turning bidirectional edges from nodes 4--9 into outgoing edges.

First, let us notice that nodes A, B, and C in this game would never like to switch their colour; 
all of them already have the maximum possible payoff of 5.
This implies that nodes 5 -- 9 will never change their colour either,
e.g. node~5 has at least payoff of 2 for picking $b$
and no matter the colour node~1 chooses,
node~5 will never be better off switching to a different colour.
Therefore, the only nodes that can ever switch colours are nodes 1--3.

Now, let us analyse the initial colouring in Figure \ref{fig:no-way-to-se}.
The payoff of node~2 is 4 and his maximum possible payoff is 5.
However, he can only get payoff 5 if he switches to $c$ and
node~1 switches to $c$. 
The latter is not possible because 
node~1 gets payoff of at most 2 for picking $c$ while picking $a$ gives him
at least 3.
In conclusion, node~2 cannot be part of a deviating coalition in this colouring.
Node~3 will not change his colour either because he gets payoff 3 while the other
colours give him payoff 2.
Therefore, the only node which can switch in any coalition is node~1 and
his only profitable deviation is switching to colour $b$.

Once this switch happens, he gets payoff~4 in the new colouring,
while his maximum payoff is 5.
It can be argued as before that node~1 cannot be part of a deviating coalition
in this new colouring.
However, there are two possible deviating coalitions:
either node 2 unilaterally switches to colour $c$,
or nodes 2 and 3 switch to colour $c$ together.
In the former, the game will be in essentially the same situation 
as with the initial colouring in Figure \ref{fig:no-way-to-se};
one just need to rotate the colours, numbers and the game to the ``left'',
i.e. colour $b$ is $a$ and node $2$ is node $1$ etc.
While in the latter, the games will be in the situation 
essentially the same as in the colouring
encountered after the first switch from the initial colouring.

It is easy to see now that eventually this game arrives at the initial colouring
and the whole process will repeat forever.
Therefore, this game is not weakly acyclic nor c-weakly acyclic.
On the other hand, it has three trivial strong equilibria in which 
all players pick the same colour.
\qed
\end{example}

\end{document}